\numberwithin{equation}{section}
\theoremstyle{plain}
\newtheorem*{thm*}{Theorem}
\theoremstyle{plain}
\newtheorem{thm}{Theorem}[section]
\newtheorem{lem}[thm]{Lemma}
\newtheorem{prop}[thm]{Proposition}
\theoremstyle{definition}
\newtheorem{defn}[thm]{Definition}
\newtheorem*{defn*}{Definition}
\newtheorem{rem}[thm]{Remark}
\tikzset{
  big arrow/.style={
    decoration={markings,mark=at position 1 with {\arrow[scale=1.5,#1]{>}}},
    postaction={decorate},
    shorten >=0.4pt},
  big arrow/.default=black}
\numberwithin{equation}{section}
\begin{document}
\thispagestyle{empty}
\begin{titlepage}
\begin{center}
\vspace{4cm}
{\Huge\bfseries  
Mordell-Weil Torsion, Anomalies, and  Phase Transitions. \\
  }
\vspace{2cm}
{%
\LARGE  Mboyo Esole$^{\heartsuit}$, Monica Jinwoo Kang$^\clubsuit$, Shing-Tung Yau$^{\spadesuit,\clubsuit}$ \\}
\vspace{1cm}

{\large $^{\heartsuit}$ Department of Mathematics, Northeastern University, Boston, MA, 02115, USA}\par
{\large $^\clubsuit$ Department of Physics, Harvard University, Cambridge, MA 02138, USA}\par
{\large ${}^{\spadesuit}$Department of Mathematics, Harvard University, Cambridge, MA 02138, USA}\par
 \scalebox{.95}{\tt  j.esole@northeastern.edu,   jkang@fas.harvard.edu, yau@math.harvard.edu }\par
\vspace{3cm}
{ \bf{Abstract:}}\\
\end{center}

We explore how introducing a non-trivial Mordell-Weil group changes the structure of the Coulomb phases 
of a five-dimensional gauge theory from an M-theory compactified on  an elliptically fibered Calabi-Yau threefold with 
an I$_2$+I$_4$ collision of singularities. The resulting gauge theory has a  semi-simple Lie algebra  $\mathfrak{su}(2)\oplus \mathfrak{sp}(4)$ or $\mathfrak{su}(2)\oplus \mathfrak{su}(4)$. 
We compute topological invariants relevant for the physics, such as the Euler characteristic, Hodge numbers, and triple 
intersection numbers. 
We determine the matter representation  geometrically by computing weights via intersection of curves and fibral divisors. 
We fix the number of charged hypermultiplets transforming in each representation by comparing the triple intersection numbers and the one-loop prepotential. 
This condition is enough to fix the number of representation when the Mordell-Weil group is $\mathbb{Z}_2$ but not when it is trivial. 
The vanishing of the fourth power of the curvature forms in the anomaly polynomial is enough to fix the number of representations.
We discuss anomaly cancellations of the six-dimensional uplift. In particular, the gravitational anomaly is also considered as the Hodge numbers 
are computed explicitly without counting the degrees of freedom of the Weierstrass equation. 

\vfill 
{Keywords: Elliptic fibrations, Crepant resolutions, Mordell-Weil group, Anomaly cancellations, Weierstrass models}

\end{titlepage}

\thispagestyle{empty}

\tableofcontents
\thispagestyle{empty}
\clearpage

\setcounter{page}{1}

\section{Introduction}\label{sec:intro}

The study of elliptic fibrations started with Kodaira's seminal papers on minimal elliptic surfaces \cite{Kodaira}. 
 Kodaira classified the possible geometric singular fibers and the class of monodromies around them  for minimal elliptic surfaces. He further derived a formula for the Euler characteristics of minimal elliptic surfaces. 
 N\'eron also classified singular fibers of elliptic surfaces defined by  Weierstrass models, though from an arithmetic point of view \cite{Neron}. 
 Following a theorem of Deligne \cite{Formulaire},  an elliptic fibration with a rational section has a birational Weierstrass model defined over the same base. 
 Tate developed an algorithm to determine the type of singular fibers of a Weierstrass model by manipulating  its coefficients \cite{Tate}. 
   There are new phenomena for higher dimensional elliptic fibrations with direct applications  to string geometry and supergravity theories. 
  First, components of the discriminant locus can intersect each other. This is what Miranda calls ``collisions of singularities'' in his study of  regularizations of elliptic threefolds defined by Weierstrass models \cite{Miranda.smooth}. 
 Secondly, if we start with Weierstrass models, crepant resolutions (when they exist) are not unique: 
 different crepant resolutions of the same Weierstrass models are connected by a network of flops \cite{Matsuki.Weyl,Witten,IMS}, 
  see \cite{EY,ESY1,ESY2,G2,F4} for explicit constructions.  
 These two phenomena are closely related to each other. The collision of singularities are responsible for attaching a representation $\mathbf{R}$ of a Lie algebra $\mathfrak{g}$ to an elliptic fibration \cite{Bershadsky:1996nu}. 
One can then determine a hyperplane arrangement I($\mathfrak{g},\mathbf{R}$)   whose chamber structures have an incidence graph isomorphic to the network of flops between different crepant resolutions of the underlying Weierstrass model.  
 Each crepant resolution corresponds to a different chamber of the extended relative Mori cone of the elliptic fibration.\footnote{
   The use of the hyperplane arrangement I($\mathfrak{g},\mathbf{R}$) as a combinatorial invariant of an elliptic fibration \cite{ESY1,ESY2,Grimm:2011fx,Hayashi:2014kca}  is directly inspired by the study of Coulomb phases of five dimensional gauge theories \cite{IMS} and illustrates how ideas from string 
geometry improves our understanding of the topology and birational geometry of elliptic fibrations. 
These hyperplanes arrangements are also interesting independently of their connections to elliptic fibrations and gauge theories, as they have beautiful combinatorial properties that can be captured by generating functions \cite{EJJN1,EJJN2}.
}

Elliptic fibrations are commonly used in physics to geometrically engineer gauge theories via compactifications of M-theory and F-theory. 
In fact, certain theories are known only through their construction by elliptic fibrations. 
Although many aspects are well understood, the algorithm relating the geometry and physics is still a work in progress.
 The Lie algebra $\mathfrak{g}$ is uniquely determined by the dual graphs of the fibers over the generic points of irreducible components of the reduced discriminant locus. 
The representation $\mathbf{R}$ is captured at the intersections of components of the discriminant locus or in singularities.
The weights of the representations are obtained by intersection numbers of rational curves forming the singular fibers with fibral divisors produced by rational curves  moving over irreducible components of the discriminant locus
 The low energy theories derived by compactification of M-theory or F-theory on Calabi-Yau threefolds are respectively five-dimensional and six-dimensional supergravity theories with eight supercharges \cite{Cadavid:1995bk,Ferrara:1996wv}. 
 We will refer to them as five-dimensional ($5d$) $\mathcal{N}=1$ \cite{IMS} and  six-dimensional ($6d$) $\mathcal{N}=(1,0)$ supergravity theories \cite{Romans:1986er,Sadov:1996zm,Grimm:2015zea}. 

In compactification of M-theory \cite{Cremmer:1978km} on an elliptically fibered threefolds to five-dimensional gauge supergravity theories, the different crepant resolutions of the underlying Weierstrass model are understood as  different Coulomb phases of the same gauge theory, and 
 flops become phase transitions between different Coulomb phases \cite{IMS}. Triple intersection numbers are understood as Chern-Simons levels of the gauge theory and determine the couplings of vector multiplets and the graviphoton of the five-dimensional 
 supergravity theory.    In a five-dimensional $\mathcal{N}=1$ supergravity theory, the Chern-Simons levels and the kinetic terms of the vector multiplets and the graviphoton  are controlled by a cubic prepotential, which admits a   one-loop quantum correction but  is protected by supersymmetry from additional corrections \cite{IMS}. The one-loop quantum correction depends on the number of multiplets $n_{\mathbf{R}_i}$ charged under the irreducible representation $\mathbf{R}_i$ \cite{IMS}. 
In an M-theory compactified on a Calabi-Yau threefold, the full prepotential (including the quantum correction) is given geometrically by the  triple intersection numbers of the divisors \cite{Cadavid:1995bk,Ferrara:1996wv}.

 An important aspect of the dictionary  between elliptic fibrations and gauge theories  is that the elliptic fibration also captures global aspects of the gauge theory: the fundamental group $\pi_1(G)$  of the gauge group is isomorphic to the Mordell-Weil group of the elliptic fibration \cite{Aspinwall:1998xj,Mayrhofer:2014opa,Morrison:2012ei},  see also  \cite[and refs. within]{Baume:2017hxm}. 
One natural question is how the Mordell-Weil group of the elliptic fibration affects these supergravity theories. 
These questions have their mathematical counterparts that are also interesting for their own sake. 
 For example, what is the effect on the Coulomb branch of a five-dimensional gauge theory when a semi-simple group is quotiented by a subgroup of its center? 
This physics question translates in mathematics to
the following: what happens to the extended Mori cone of an elliptically-fibered Calabi-Yau threefold when the Mordell-Weil group is purely torsion? 
Moreover, would this five-dimensional theory with such a Mordell-Weil group still have a six-dimensional uplift with cancellation of anomalies?

  In this paper, we explore  non-trivial models of semi-simple Lie algebra with Mordell-Weil group $\mathbb{Z}_2:=\mathbb{Z}/2\mathbb{Z}$. Specifically, we study the geometry and physics of elliptic fibrations corresponding to the following collisions\footnote{
   Given two Kodaira types $T_1$ and $T_2$, a model of  type $T_1+T_2$ is an elliptic fibration such that the discriminant locus contains two intersecting divisors  $\Delta_1$ and $\Delta_2$, where the generic fiber of  
 $\Delta_i$ is $T_i$ and the generic fiber of any other component of the discriminant locus is an irreducible fiber (such as Kodaira type I$_1$ or II). 
  } 
\begin{equation}\text{I}_2^{\text{\text{ns}}}+\text{I}_4^{\text{\text{ns}}}\quad \text{and}\quad 
\text{I}_2^{\text{\text{ns}}}+\text{I}_4^{\text{\text{s}}},
\end{equation}
 with  a Mordell-Weil group that is either trivial or $\mathbb{Z}_2$. 
 The corresponding Lie algebras are 
 \begin{equation}
 \text{A}_1\oplus\text{C}_2, \quad  \text{A}_1\oplus\text{A}_3.  
 \end{equation}
Such collisions correspond to semi-simple gauge groups  (see Table \ref{Table.Models}):
$$
(\text{SU($2$)}\times \text{Sp($4$)})/\mathbb{Z}_2, \quad \text{SU($2$)}\times \text{Sp($4$)}, \quad  \text{SU($2$)}\times \text{SU($4$)})/\mathbb{Z}_2, \quad \text{SU($2$)}\times \text{SU($4$)}.
$$
These models  are uniquely defined for the  following reasons. 
Since a $\mathbb{Z}_2$ Mordell-Weil group always induces at least an I$_2$ fiber, the collision I$_2+$I$_4$ is the  simplest case  with Mordell-Weil group $\mathbb{Z}_2$ and a semi-simple Lie algebra that is not $\text{A}_1\oplus\text{A}_1$. 
We consider the two possible generic fibers of Kodaira type I$_4$, namely I$_4^{\text{s}}$ and I$_4^{\text{ns}}$. 
 We use I$_2^{\text{ns}}$ because it is the generic reducible fiber induced by the $\mathbb{Z}_2$ Mordell-Weil group (see section \ref{Sec:MW}). 
For comparison, we also study the same collisions with  a trivial Mordell--Weil group. 
These considerations completely fix our models once we use  Weierstrass models \cite{Esole.Elliptic} and assume minimal valuations for all Weierstrass coefficients with respect to the divisors supporting the simple components of the gauge group. 
The SO($4$)-model is studied along the same lines in \cite{SO4}. The SO($3$), SO($5$), and SO($6$) are studied in \cite{MP}. 

The $(\text{SU($2$)}\times \text{SU($4$)})/\mathbb{Z}_2$-model has been studied in \cite{Mayrhofer:2014opa} using toric ambient spaces for the fiber, the $\text{SU($2$)}$ and the $\text{SU($4$)}$ models are individually studied in \cite{ESY1,ES}. 
A specialization of the  $\text{SU($2$)}\times \text{SU($4$)}$-model is studied in \cite{Anderson:2017rpr} in relation to T-branes and also in \cite{Anderson:2015cqy}.
For each of these models, the representation $\mathbf{R}$ is uniquely determined by the geometry of the corresponding elliptic fibrations and are listed on Table \ref{Table.Models}.
For the collisions we consider in this paper, the Mordell-Weil group $\mathbb{Z}_2$ is an obstruction for the presence of fundamental representations. 
This change of matter content has also consequences for the cancellations of anomalies in the  six-dimensional theory. 

 Given a complex Lie algebra $\mathfrak{g}$,  there is a unique simply connected compact Lie group $\widetilde{G}=\exp(\mathfrak{g})$. We denote the center of $\widetilde{G}$ by $Z(\widetilde{G})$. 
All Lie groups sharing the same Lie algebra have the same universal cover $\widetilde{G}=\exp(\mathfrak{g})$ and are quotient of $\widetilde{G}$ by a subgroup  $H$ of its center  $Z(\widetilde{G})$. 
The center of the group $\widetilde{G}/H$ is then $Z(\widetilde{G})/H$  and depends not only on the isomorphic class of $H$ but also on the embedding of $H$ in $Z(\widetilde{G})$. It follows that gauge groups with the same Lie algebra can have different centers and first homotopy groups.   In a gauge theory, the center and the first homotopy group of the group play a crucial role in the description of  non-local operators such as Wilson lines and `tHooft operators. 
Since not all representations of the Lie algebra $\mathfrak{g}$ are coming from a representation of the Lie group,  a non-trivial Mordell-Weil group places restrictions on the representation $\mathbf{R}$. 
The choice of the correct group depends on the Mordell-Weil and is constrained by $\mathbf{R}$.

A representation $\mathbf{R}$ is sometimes enough to completely identity the group $G$ once we know its Lie algebra $\mathfrak{g}$ and its fundamental group. 
 We illustrate this point with two examples that will be the focus of this paper. In both cases, the Lie algebra is derived from an elliptic fibration with collisions of the type I$_2$+I$_4$ and a Mordell-Weil group  $\mathbb{Z}_2$. 
 In the case of the Lie algebra $\frak{g}=\text{A}_1\oplus \text{A}_3$, which corresponds to the simply connected compact group $\text{SU($2$)}\times \text{SU($4$)}$, the center is  $\mathbb{Z}_2\times \mathbb{Z}_4$. Since there is a unique $\mathbb{Z}_2$ subgroup in $\mathbb{Z}_4$, there are three possibilities for embedding  $\mathbb{Z}_2$ in the center $\mathbb{Z}_2\times \mathbb{Z}_4$, namely
\begin{equation}\label{eq:embZ2}
(\mathbb{Z}_2, 1),\quad (1,\ \mathbb{Z}_2), \quad \text{diagonal} \  \mathbb{Z}_2 .
\end{equation}
Hence, the possible quotient groups are 
\begin{equation}
\text{SO($3$)}\times \text{SU($4$)}, \quad \text{SU($2$)}\times \text{SO($5$)}, \quad (\text{SU($2$)}\times \text{SU($4$)})/\mathbb{Z}_2.
\end{equation}
These three groups have the same Lie algebra $\text{A}_1\oplus \text{A}_3$, the same universal cover SU($2$)$\times$ SU($4$), the same first homotopy group, but different centers 
\begin{equation}
\mathbb{Z}_4,  \quad \mathbb{Z}_2\times \mathbb{Z}_2, \quad   \mathbb{Z}_2.
\end{equation} 
The bifundamental representation $(\bf{2},\bf{4})$ of $\text{A}_1\oplus \text{A}_3$ is only compatible with the group $(\text{SU($2$)}\times \text{SU($4$)})/\mathbb{Z}_2$.

For the case of the Lie algebra $\frak{g}=\text{A}_1\oplus \text{C}_2$, which corresponds to the simply connected compact group $\text{SU($2$)}\times \text{Sp($4$)}$, the  center is $\mathbb{Z}_2\times \mathbb{Z}_2$. Equation \eqref{eq:embZ2} gives  the three possible ways to embed a $\mathbb{Z}_2$ in  $\mathbb{Z}_2\times \mathbb{Z}_2$. In this case, the possible quotient groups are 
\begin{equation}
\text{SO($3$)}\times \text{Sp($4$)}, \quad \text{SU($2$)}\times \text{SO($6$)}, \quad (\text{SU($2$)}\times \text{Sp($4$)})/\mathbb{Z}_2 .
\end{equation}
These three groups have the same universal cover $\text{SU($2$)}\times \text{Sp($4$)}$, the same fundamental group $\mathbb{Z}_2$, and the same center  $\mathbb{Z}_2$. 
But the center $\mathbb{Z}_2$ is given by a different embedding in $\mathbb{Z}_2\times \mathbb{Z}_2$. 
Once again, the bifundamental $(\bf{2},\bf{4})$ representation of $\text{A}_1\oplus \text{C}_2$ is only compatible with the last one.

While our  geometric computations are done relative to a base of arbitrary dimension and without assuming the Calabi-Yau condition, to discuss anomaly cancellations, we specifically require the elliptic fibration to be a Calabi-Yau threefold. 
In particular, this requires the base of the fibration $B$ to be a rational surface. We denote its canonical class by $K_B$ or just $K$ when the context is clear. 
Our approach  is rooted in geometry and can be summarized as follows \cite{EY, ESY1,ESY2,ES,F4,G2,Anderson:2017zfm}. 

\begin{enumerate}
\item We work with Weierstrass models. We first  determine a crepant resolution  (see Table \ref{Table.Blowups}). 
\item We  study the fibral structure of the resolved elliptic fibration (see Figure \ref{FS_Nonsplit_Z2_CY}, Figure \ref{FS_Nonsplit_NoZ2_CY}, Figure \ref{FS_Split_Z2_CY}, and Figure \ref{FS_Split_NoZ2_CY}). In particular, we  identify the singular fibers appearing at collisions of singularities. They are composed of rational curves  carrying weights of the representation $\mathbf{R}$ attached to the elliptic fibration.  
\item Using intersection theory, we find the weights of the vertical curves at collisions of singularities  and derive the representation $\bf{R}$. 
Each weight is minus the intersection number of the vertical curve with the fibral divisor not touching the section of the elliptic fibration. 
 One interesting property of our approach is that for complex representations such as the $(\mathbf{2},\mathbf{4})$, or the ($\bf{1},\bf{4}$) of $A_1\oplus A_3$, the weights of the dual representation appear naturally as weights of some vertical curves. Hence,  quaternionic representations are not forced by hand (to respect the  CPT invariance of an underlying physical theory) but are imposed by the geometry itself. 
 
\item We determine the network of flops of each model by studying the hyperplane arrangement I$(\mathfrak{g}, \mathbf{R})$ (See Figure \ref{I2nsI4nsCham}, Figure \ref{ChambersNoZ2}, and Figure \ref{Chambers}). 
 Crepant resolutions and flops are related to the structure of the extended Mori cone.  
The hyperplane arrangement I$(\mathfrak{g}, \mathbf{R})$ is a combinatorial invariant that controls the behavior of the extended Mori cone. 
The chamber structure of   I$(\mathfrak{g}, \mathbf{R})$ corresponds to the $5d$ and $6d$ Coulomb branches.  
The $6d$ Coulomb branch is related to the $5d$ Coulomb branch after a circle compactification and dualizing tensor multiplets into vectors. 
\item The crepant resolution allows us to compute the Euler characteristic of each models  (see Table \ref{tb:EulerGen}, Table \ref{tb:CY3Euler}, Table \ref{tb:CY4Euler}, Table \ref{tb:Elliptic3Euler}, and  Table  \ref{tb:Elliptic4Euler}) and the triple intersection polynomial of the fibral divisors (see section \ref{sec:Triple}). 
 We determine the Euler characteristic by computing the pushforward of the total Chern class defined in homology \cite{Euler}. 
Using pushforward theorems for blowups and projective bundles, we express all topological invariants in terms of the topology of the base.  
In the case of a Calabi-Yau threefold, we compute the Hodge numbers (see Table \ref{hodge}) using the Euler characteristic and the Shioda-Tate-Wazir theorem  \cite[Corollary 4.1]{Wazir}.
 Even though the fiber structure of   the models with a trivial Mordell--Weil group is much more complicated than the one with a $\mathbb{Z}_2$ Mordell--Weil group, 
the Euler characteristic of the former specializes to that of the latter (after imposing $S=-4K-2T$), since both are defined by the same sequence of blowups \cite{Euler}.
\item By comparing the triple intersection numbers with the one-loop quantum correction, we determine constraints on the number of hyperpmultiplets charged under irreducible components of $\mathbf{R}$. In the case of I$_2+$I$_4$-models, the fundamental representations are absent when the  Mordell-Weil group is $\mathbb{Z}_2$.
The possible gauge group and the corresponding representation are presented in Table \ref{Table.Models}. 
\item We check explicitly that the constraints obtained by identifying the triple intersection numbers with the one-loop prepotential are compatible with an uplift to an anomaly-free six-dimensional $\mathcal{N}=(1,0)$ supergravity theory. 
In the theories with a trivial Mordell-Weil group, the triple intersection numbers did not completely fix the number of hypermultiplets transforming in a given irreducible representation. 
But all numbers are fixed by considering the constraints from the vanishing of the coefficient of $\tr R^4$ and $\tr F^4_a$ where $R$ is the Riemann curvature two-form and $F_a$ is the field strength  of the vector field 
$A_a$. 
\end{enumerate}

A representation  $\mathbf{R}$ can be  complex, pseudo-real (quaternionic) or real.\footnote{
When a hypermultiplet is charged under an irreducible complex representation $\mathbf{R}$, its CPT dual is charged under the complex conjugate representation $\overline{R}$. 
 }
A hypermultiplet transforming in a reducible quaternionic representation $\mathbf{R}\oplus\overline{\mathbf{R}}$, such that  $\mathbf{R}$ an irreducible complex representation of the gauge group, is called a {\em full hypermultiplet}. 
The completion of the complex representation $\mathbf{R}$ to the reducible quaternionic representation  $\mathbf{R}\oplus\overline{\mathbf{R}}$  is required by  the CPT theorem.
A  hypermultiplet transforming in an irreducible quaternionic representation is  called a {\em half-hypermultiplet}. 
On a half-hypermultiplet, the pseudo-real representation of the gauge group requires the existence of an anti-linear map that squares to minus the identity. In six (resp. five) dimensional Lorentzian spacetime, this anti-linear map allows the introduction of a  symplectic Majorana Weyl  (resp. symplectic Majorana) condition that reduces by half the number of degrees of freedom. 
In our convention, we avoid the double counting that consists of writing a reducible representation $\mathbf{R}\oplus\overline{\mathbf{R}}$ even when $\mathbf{R}$ is an irreducible pseudo-real representation. A full hypermultiplet has multiplicity $n_{\mathbf{R}}+n_{\overline{\mathbf{R}}}=2 n_{\mathbf{R}}$ while a half-hypermultiplet has multiplicity $n_{\mathbf{R}}$. 
Since we do not double count half-hypermultiplets, we do not introduce factors of 1/2 in the anomaly polynomial when $\mathbf{R}$ is pseudo-real. 
Still, a hypermultiplet in a pseudo-real representation contributes to the anomaly polynomial half as much as a full hypermultiplet.

 Comparing the triple intersection numbers with the one-loop prepotential gives constraints linear in  $n_{\mathbf{R}_i}$ that are sometimes enough to completely determine all the numbers $n_{\mathbf{R}_i}$. 
This is for example the case for many models with a simple group \cite{ES,F4,G2} and also here for the models with a $\mathbb{Z}_2$ Mordell-Weil group, namely the $(\text{SU($2$)}\times \text{Sp($4$)})/\mathbb{Z}_2$-model and the 
$(\text{SU($2$)}\times \text{SU($4$)})/\mathbb{Z}_2$-model. The $\text{SU($2$)}\times \text{Sp($4$)}$ and $\text{SU($2$)}\times \text{SU($4$)}$-models have in addition hypermultiplets transforming in  fundamental representations and equating the one-loop prepotential and the triple intersection numbers does not provide enough constraints  to fix all $n_{\mathbf{R}_i}$.   We can  fix all  $n_{\mathbf{R}_i}$ in different ways. 
\begin{itemize} 
\item Firstly, by using Witten's formula that asserts that the number of hypermultiplets transforming in the adjoint representation of an irreducible component $G_a$ of the gauge group is the (arithmetic) genus of the curve $S_a$ supporting that group. 
Witten's formula uses M2-branes to  study the quantization of the curve $S_a$ seen as a moduli space. Each (anti)-holomorphic 1-form on $S_a$ is responsible for a hypermultiplet. 
\item Secondly, by using  techniques of intersecting branes to directly count $n_{\mathbf{R}_i}$ as intersection numbers between components of the discriminant locus.  
This method assumes that the components intersect relatively well. For certain representations (such as the traceless second antisymmetric  representation of Sp($2n$), this method has to be completed with a generalization of Witten's formula. 
\item Finally, we can use the existence of an anomaly-free six-dimensional supergravity theory. Matching the triple intersection numbers with the one-loop prepotential and asking for the   vanishing of the coefficients of $\tr R^4$ and $\tr F^4_a$ in the anomaly polynomial of the $6d$ $\mathcal{N}=1$ supergravity theory are enough to fix all the $n_{\mathbf{R}_i}$. Here $R$ is the Riemann curvature two-form and $F_a$ are the  field-strength of the vector fields. 
The vanishing of these quartic terms is a necessary condition to cancel anomalies by the Green-Schwarz mechanism. 
\end{itemize}
We find that the last method is the most satisfying and general as it only relies on basic aspects of supergravity theories in five and six dimensions: exactness of the one-loop quantum correction and the cancellation of anomalies of the six-dimensional theory.  
We checked that all three methods give the same result.  
 In a six-dimensional uplifted theory, the number of hypermultiplets are the same, but we have to adjust the number of vector and  tensor multiplets. 
Tensor multiplets and spinors require cancellations of anomalies.  Following Sadov, there is a geometric formulation of the Green-Schwarz mechanism for elliptically fibered Calabi-Yau threefolds \cite{Sadov:1996zm}. Hence, the number of multiplets, which are computed from the triple intersection numbers, can be used to check if the Green-Schwarz mechanism cancels the anomalies in the $6d$ theory uplift.
This method was introduced in \cite{Bonetti:2011mw} and implemented explicitly in \cite{ES} for SU($n$)-models, in \cite{F4} for F$_4$-models, and \cite{G2} for G$_2$, Spin($7$), and Spin($8$)-models.

Our convention for counting the multiplicity of representations is inspired by the geometry and ease the  comparison with the intersecting brane picture. 
In the case of  the SU($2$)$\times$ SU($4$)-model, the matter representation contains the reducible quarternionic representation $(\mathbf{1},\mathbf{4})\oplus(\mathbf{1},\bar{\mathbf{4}})$ with the multiplicity $n_{\bf{1},\bf{4}}=n_{\bf{1},\bf{\bar{4}}}=-T(4K+S+2T)$. 
The total multiplicity  corresponds to the number of intersection points between the divisor $T$ supporting SU($4$) and the irreducible component  $\Delta'$ of the discriminant not supporting any gauge group. At collision the fiber I$_4^s$ enhances to an I$_5^s$, producing the weight $\boxed{0,1,-1}$ of the fundamental representation $\mathbf{4}$ of $\mathfrak{su}(4)$ and the weight  $\boxed{-1,1,0}$ of the anti-fundamental representation $\mathbf{4}$ of $\mathfrak{su}(4)$. See Tables 
 \ref{Table.Matter}, \ref{Rep.noZ2model}, and \ref{I2sI4sNoZ2}.
 The representation $(\bf{2},\bf{1})$ is pseudo-real and $n_{\bf{2},\bf{1}}=-2S(4K+S+2T)$, which is geometrically the number of intersection $S\cdot V(\tilde{b}_8)$. 
\begin{table}[htb]
 \begin{center}
 \def\arraystretch{1.1}
\begin{tabular}{|c|c|c|}
\hline 
F-theory on $Y$ & M-theory on $Y$ &  F-theory on $Y\times S^1$  \\
$\downarrow$ & $\downarrow$& $\downarrow$\\
$6d$ $\mathcal{N}=(1,0)$ sugra  &  $5d$ $\mathcal{N}=1$ sugra &  $5d$ $\mathcal{N}=1$ sugra \\
 \hline 
$n_V^{(6)}=h^{1,1}(Y)-h^{1,1}(B)-1$ & \multicolumn{2}{c|}{$n_V^{(5)}=n_V^{(6)}+n_T+1=h^{1,1}(Y)-1$ }\\
 $n_H^0=h^{2,1}(Y)+1$& \multicolumn{2}{c|}{$n_H^0=h^{2,1}(Y)+1$}\\
 $n_T=h^{1,1}(B)-1$ &  \multicolumn{2}{c|}{}   \\
 \hline 
 \end{tabular}
 \end{center}
 \caption{\label{Table:FM} Compactification of F-theory and M-theory  on a Calabi-Yau threefold $Y$.  We assume that all tensor multiplets  in the five dimensional theory are dualized to vector multiplets. The number of neutral hypermultiplets are the same in five and six dimensions, but  $n_V^{(5)}=n_V^{(6)}+n_T+1$.  }
 \end{table}

\clearpage
\begin{table}[hbt]
\begin{center}
\begin{tabular}{| c |  l |  c|}
\hline
Models & \hspace{4cm}{Algebraic data}  &\#   Flops\\
\hline
\hline
& $F=y^2z-(x^3+a_2x^2z+st^2xz^2)$  & \multirow{3}{*}{3}\\
I$_2^{\text{\text{ns}}}$+I$_4^{\text{\text{ns}}}$&  $\Delta=s^2 t^4 (a_2^2-4 s t^2)$ &\\
\cline{2-2}
$\text{MW}=\mathbb{Z}_2$&
 $G=(\text{SU($2$)}\times \text{Sp($4$)})/\mathbb{Z}_2$ &\\
& $\mathbf{R}=(\bf{3},\bf{1})\oplus (\bf{1},\bf{10})\oplus (\bf{2},\bf{4})\oplus (\bf{1},{5})$ &  \\
& $\chi=-4 (9 K^2+8 K\cdot T+3 T^2)$& \\
\hline
\hline
 & $F=y^2z-(x^3+a_2x^2z+\widetilde{a}_4st^2xz^2+\widetilde{a}_6s^2t^4z^3)$ &  \multirow{3}{*}{3} \\
I$_2^{\text{\text{ns}}}$+I$_4^{\text{\text{ns}}}$&  $\Delta= s^2 t^4 (4 a_2^3\widetilde{a}_6-a_2^2\widetilde{a}_4^2-18 a_2 \widetilde{a}_4 \widetilde{a}_6 s t^2+4 a_4^3 s t^2+27 \widetilde{a}_6^2 s^2 t^4)$ & \\
\cline{2-2}
$\text{MW}=\{1\}$& $G=\text{SU($2$)}\times \text{Sp($4$)}$ & \\
& $\mathbf{R}=(\bf{3},\bf{1})\oplus (\bf{1},\bf{10})\oplus (\bf{2},\bf{4})\oplus (\bf{1},{5})\oplus (\bf{2},\bf{1})\oplus (\bf{1},{4})$ &    \\
& $\chi=-2 (30 K^2+15 K\cdot S+30 K\cdot T+3 S^2+8 S\cdot T+10 T^2)$& \\
\hline
\hline
  & $F=y^2z+a_1xyz-(x^3+\widetilde{a}_2tx^2z+st^2xz^2)$ &  \multirow{3}{*}{12}\\
I$_2^{\text{\text{ns}}}$+I$_4^{\text{\text{s}}}$ & $\Delta=s^2 t^4 \left(a_1^4+8 a_1^2 \widetilde{a}_2 t+16 \widetilde{a}_2^2 t^2-64 s t^2\right)$ & \\
\cline{2-2}
$\text{MW}=\mathbb{Z}_2$& $G=(\text{SU($2$)}\times \text{SU($4$)})/\mathbb{Z}_2$ & \\
& $\mathbf{R}=(\bf{3},\bf{1})\oplus(\bf{1},\bf{15})\oplus (\bf{2},\bf{4})\oplus (\bf{2},\bf{\bar{4}})\oplus(\bf{1},\bf{6})$& \\
&$\chi=-12 \left(3 K^2+3 K\cdot T+T^2\right)$ & \\
\hline 
\hline
 & $F=y^2z+a_1xyz-(x^3+\widetilde{a}_2tx^2z+\widetilde{a}_4st^2xz^2+\widetilde{a}_6s^2t^4 z^3)$&  \multirow{3}{*}{20} \\
I$_2^{\text{\text{ns}}}+$I$_4^{\text{\text{s}}}$& $\Delta=s^2 t^4 \left(a_1^4+8 a_1^2 \widetilde{a}_2 t+16 \widetilde{a}_2^2 t^2-64 s t^2\right)$& \\
\cline{2-2}
$\text{MW}=\{1\}$ & $G=\text{SU($2$)}\times \text{SU($4$)}$ & \\
& $\mathbf{R}=(\bf{3},\bf{1})\oplus(\bf{1},\bf{15})\oplus(\bf{2},\bf{4})\oplus(\bf{2},\bf{\bar{4}})\oplus(\bf{1},\bf{6})\oplus  (\bf{2},\bf{1})\oplus
(\bf{1},\bf{4})\oplus(\bf{1},\bf{\bar{4}})$ &  \\
& $\chi=-2 \left(30 K^2+15 K\cdot S+32 K\cdot T+3 S^2+8 S\cdot T+10 T^2\right)$ & \\
\hline 
\end{tabular}
\caption{Weierstrass models, discriminant loci,  gauge groups and representations. 
$F$ is the defining equation of the Weierstrass model, $\Delta$ is its discriminant, $G$ is the gauge group,    $Z(G)$  is the center of $G$  (isomorphic to the Mordell-Weil group MW of the elliptic fibration),  $\mathbf{R}$ is the matter representation, and $\chi$ is the Euler characteristic of a crepant resolution of a Calabi-Yau that is elliptically fibered with a  defining equation  $F=0$. 
The column ``\# Flops'' gives the number of distinct crepant resolutions , or equivalently, the number of chambers in the hyperplane arrangement I$(\mathfrak{g},\mathbf{R}$). The number of flops also  corresponds to the number of Coulomb phases of a 
five-dimensional  supergravity theory  with eight supercharges obtained by a compactification of M-theory on this elliptic fibration. 
 The Euler characteristic of the models with trivial  Mordell--Weil groups specializes to those of the models with  $\mathbb{Z}_2$ Mordell--Weil groups after imposing the relation $S=-4K-2T$.}
\label{Table.Models}
\end{center}
\end{table}

\begin{table}[H]
\begin{center}
\scalebox{.86}{
\begin{tabular}{|l| l l  |  l   |   l|}
\hline 
\hspace{1.5cm} $G$ &   \multicolumn{2}{c|}{Adjoint} & Bifundamental  & (Traceless) Antisymmetric, Fundamental  \\
\hline
$(\text{SU($2$)}\times \text{Sp($4$)})/\mathbb{Z}_2$ &
$n_{\bf{3},\bf{1}}=g_S$  &$ n_{\bf{1},\bf{10}}=g_T$ & $n_{\bf{2},\bf{4}}=S\cdot T$& 
$n_{\bf{1},\bf{5}}=g_T-1+\frac{1}{2}T\cdot V(a_2)$  \\
\hline
&$n_{\bf{3},\bf{1}}=g_S$ &$ n_{\bf{1},\bf{10}}=g_T$  &$n_{\bf{2},\bf{4}}=S\cdot T$ &$ 
n_{\bf{1},\bf{5}}=g_T-1+\frac{1}{2}T\cdot V(a_2)$  \\
$\text{SU($2$)}\times \text{Sp($4$)}$& && & $ n_{\bf{2},\bf{1}}=S\cdot V(\tilde{b}_8),\ n_{\bf{1},\bf{4}}= T \cdot V(\tilde{b}_8)$ \\
\hline
$(\text{SU($2$)}\times \text{SU($4$)})/\mathbb{Z}_2$ & 
$n_{\bf{3},\bf{1}}=g_S$ &$ n_{\bf{1},\bf{15}}=g_T$& { $n_{\bf{2},\bf{4}}+n_{\bf{2},\bf{\bar{4}}}=S\cdot  T$}&$ n_{\bf{1},\bf{6}}=T\cdot V(a_1)$\\
\hline
&  $n_{\bf{3},\bf{1}}=g_S$ &$ n_{\bf{1},\bf{15}}=g_T$   &{$n_{\bf{2},\bf{4}}+n_{\bf{2},\bf{\bar{4}}}=S\cdot  T$}&$ n_{\bf{1},\bf{6}}=T\cdot V(a_1)$\\
$\text{SU($2$)}\times \text{SU($4$)}$& &  & & $n_{\bf{2},\bf{1}}= S\cdot V(\tilde{b}_{8}),\ { n_{\bf{1},\bf{4}}+n_{\bf{1},\bf{\bar{4}}}= T\cdot V({\widetilde{b}_8})
}$\\
\hline
\end{tabular}}
\caption{  Geometrical interpretation of the number of representations of the matter content. 
In the last column,  $n_{\bf{1},\bf{5}}$ is the number of traceless antisymmetric matter in $\text{Sp($4$)}$, $n_{\bf{1},\bf{6}}$ is the number of antisymmetric matter in $\text{SU($4$)}$, $n_{\bf{1},\bf{4}}$ is the fundamental representation of 
$\text{SU($4$)}$, and $n_{\bf{2},\bf{1}}$ is the number of fundamental representation in $\text{SU($2$)}$.  
For SU($4$) theories, we have $[V(\tilde{b}_8)]=2(-4 T-S-2T)$.
 In presence of a $\mathbb{Z}_2$, with our choice of Weierstrass models, we have $S=-4K-2T$. 
\label{Table.Matter}
 }
\end{center}
\end{table}
\clearpage

\section{Preliminaries}
In this section, we review some basic notions used throughout the paper. 
We also introduce our notation and conventions. 

\subsection{Weierstrass models and Deligne's formulaire }
\label{sec:Wmodel}

We follow the notation of \cite{Formulaire}. 
Let  $\mathscr{L}$ be a line bundle over  a normal quasi-projective variety  $B$.  We define the projective bundle (of lines)
\begin{equation}
\pi: X_0=\mathbb{P}_B[\mathscr{O}_B\oplus \mathscr{L}^{\otimes 2}\oplus \mathscr{L}^{\otimes 3}]\longrightarrow B.
\end{equation} 
The relative projective coordinates of $X_0$ over $B$ are denoted $[z:x:y]$,  where $z$, $x$, and $y$ are defined  by the natural injection of 
 $\mathscr{O}_B$,   $\mathscr{L}^{\otimes 2}$, and $\mathscr{L}^{\otimes 3}$ into $\mathscr{O}_B\oplus \mathscr{L}^{\otimes 2}\oplus \mathscr{L}^{\otimes 3}$, respectively. Hence, 
  $z$ is a section of $\mathscr{O}_{X_0}(1)$, $x$ is a section of $\mathscr{O}_{X_0}(1)\otimes \pi^\ast \mathscr{L}^{\otimes 2}$, and
$y$ is a section of  $\mathscr{O}_{X_0}(1)\otimes \pi^\ast \mathscr{L}^{\otimes 3}$.

\begin{defn}
 A  Weierstrass model is an elliptic fibration $\varphi: Y\to B$  cut out by the zero locus of  a section of the  
line bundle $\mathscr{O}(3)\otimes \pi^\ast \mathscr{L}^{\otimes 6}$ in $X_0$. 
\end{defn}
The most general Weierstrass equation is written in the notation of Tate as
\begin{equation}
y^2z+ a_1 xy z + a_3  yz^2 -(x^3+ a_2 x^2 z + a_4 x z^2 + a_6 z^3) =0,
\end{equation} 
where $a_i$ is a section of $\pi^\ast \mathscr{L}^{\otimes i}$. 
The line bundle $\mathscr{L}$ is called the {\em fundamental line bundle} of the Weierstrass model $\varphi:Y\to B$. It can be defined directly from $Y$ as 
$\mathscr{L}=R^1 \varphi_\ast Y$. 
Following Tate and Deligne, we introduce the following quantities 
\begin{align}
\begin{cases}
b_2 &= a_1^2 + 4 a_2\\
b_4 &= a_1 a_3 + 2 a_4\\
b_6 &= a_3^2 + 4 a_6\\
b_8 &= a_1^2 a_6 - a_1 a_3 a_4 + 4 a_2 a_6 + a_2 a_3^2 - a_4^2\\
c_4 &= b_2^2 - 24 b_4\\
c_6 &= -b_2^3 + 36 b_2 b_4 - 216 b_6\\
\Delta &= -b_2^2 b_8 - 8 b_4^3 - 27 b_6^2 + 9 b_2 b_4 b_6\\
j& = {c_4^3}/{\Delta}
\end{cases}
\end{align}
The  $b_i$ ($i=2,3,4,6)$ and $c_i$   ($i=4,6$) are  sections of $\pi^\ast \mathscr{L}^{\otimes i}$. 
The discriminant $\Delta$ is a section of $\pi^\ast \mathscr{L}^{\otimes 12}$. 
They satisfy the two relations
\begin{align}
1728 \Delta=c_4^3-c_6^2, \quad 4b_8 = b_2 b_6 - b_4^2.
\end{align}
Completing the square in $y$ gives 
\begin{equation}
zy^2 =x^3 +\tfrac{1}{4}b_2 x^2 + \tfrac{1}{2} b_4 x + \tfrac{1}{4} b_6.
\end{equation}
Completing the cube in $x$ gives the short form of the Weierstrass equation
\begin{equation}
zy^2 =x^3 -\tfrac{1}{48} c_4 x z^2 -\tfrac{1}{864} c_6 z^3.
\end{equation}

\subsection{Elliptic fibrations with Mordell-Weil group $\mathbb{Z}_2$}\label{Sec:MW}
A generic  Weierstrass model with a Mordell-Weil torsion subgroup $\mathbb{Z}_2$ is given by the following theorem, which is a direct consequence of a classic result in the study of elliptic curves in number theory \cite[\S5 of Chap 4]{Husemoller}, and was first discussed in a string theoretic setting by Aspinwall and Morrison \cite{Aspinwall:1998xj}. 
\begin{thm}\label{Thm:WZ2}
An elliptic fibration over a smooth base $B$ and with  Mordell-Weil group $\mathbb{Z}_2$ is birational to the  following (singular) Weierstrass model:
\begin{equation} 
zy^2= x(x^2 + a_2 x z+ a_4z^2).\label{WZ2}
\end{equation}
\end{thm}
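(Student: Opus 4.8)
The plan is to reduce the statement to a classical fact about $2$-torsion points on elliptic curves over a field, applied to the generic fibre, and then to spread the resulting normal form back out over $B$. Write $K=k(B)$ for the function field of the base and let $E/K$ be the generic fibre of $\varphi\colon Y\to B$. Since the fibration comes equipped with a zero section (which is what makes the Mordell--Weil group meaningful), $E$ is an elliptic curve over $K$, and the Mordell--Weil group of the fibration is canonically the group $E(K)$ of $K$-rational points (Shioda--Tate). The hypothesis is therefore $E(K)\cong\mathbb{Z}/2\mathbb{Z}$; in particular $E$ carries a nontrivial $K$-rational point $P$ of order $2$.

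First I would put $E$ into Weierstrass form. By Deligne's formulaire \cite{Formulaire} (equivalently, working over $K$ and completing the square in $y$ and the cube in $x$ as in Section~\ref{sec:Wmodel}, which is legitimate since we are ultimately over $\mathbb{C}$), $Y$ is birational over $B$ to a Weierstrass model and $E$ is $K$-isomorphic to a plane cubic $y^2=f(x)$ with $f\in K[x]$ monic of degree $3$. Under this normalization the nontrivial $2$-torsion points of $E(\overline{K})$ are exactly the points $(\xi,0)$ with $f(\xi)=0$, so the existence of the $K$-rational $2$-torsion point $P$ forces $f$ to have a root $\xi\in K$. This is precisely the elementary input from the theory of elliptic curves \cite[\S5 of Chap.~4]{Husemoller} invoked in the statement.

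Next I would translate that root to the origin: the $K$-linear substitution $x\mapsto x+\xi$ (which, cleared of denominators and homogenized with $z$, is a birational transformation of the Weierstrass model over $B$) turns $f(x)$ into $x\,(x^2+a_2x+a_4)$ for suitable $a_2,a_4\in K$ — the constant term vanishes because $\xi$ is a root of $f$, and the quadratic factor is monic because $f$ was. Homogenizing gives exactly equation \eqref{WZ2}. Finally, since $a_2,a_4$ are rational functions on $B$, a further birational rescaling of $x$ and $y$ by powers of a common denominator arranges $a_2\in H^0(B,\mathscr{L}^{\otimes 2})$ and $a_4\in H^0(B,\mathscr{L}^{\otimes 4})$, so that \eqref{WZ2} is an honest Weierstrass model in the sense of Section~\ref{sec:Wmodel}; birational equivalence of total spaces follows because every map used was birational over $B$. (The opposite direction, that \eqref{WZ2} always has $(x,y,z)=(0,0,1)$ as a $2$-torsion section, is immediate and explains why this is the ``generic'' shape.)

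The step that needs care rather than computation is the passage between the generic-fibre picture over $K$ and the global picture over $B$: one must know that a $K$-isomorphism of generic fibres spreads out to a birational map of the fibrations (which is Deligne's theorem together with the extension of birational maps from the generic point), and that the coefficients can be taken to be genuine global sections of powers of $\mathscr{L}$ rather than mere rational functions — this is where normality (smoothness) of $B$ is used. I would also flag the implicit characteristic hypothesis: in characteristic $2$ the $2$-torsion locus of a Weierstrass cubic is cut out by $a_1x+a_3=0$ rather than $y=0$, so both the argument and the normal form \eqref{WZ2} presume $\mathrm{char}\,k\neq 2$, which holds in the setting of this paper.
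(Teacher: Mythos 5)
Your proof is correct, and it is essentially the argument the paper is implicitly relying on: the paper offers no proof of Theorem~\ref{Thm:WZ2} beyond citing the classical fact from \cite{Husemoller} and its string-theoretic appearance in \cite{Aspinwall:1998xj}, and your generic-fibre argument (a $K$-rational $2$-torsion point of $y^2=f(x)$ forces a $K$-rational root of the monic cubic $f$, which one translates to the origin and then spreads out over $B$) is exactly the content of that citation. Your closing remarks correctly identify the only points needing care --- spreading the $K$-isomorphism out to a birational map over $B$, clearing denominators so that $a_2,a_4$ become honest sections of powers of a line bundle, and the $\mathrm{char}\neq 2$ hypothesis --- none of which is an obstacle in the paper's setting over $\mathbb{C}$. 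Note also that your argument only uses that the Mordell--Weil group \emph{contains} an element of order two, so it proves slightly more than the stated theorem, which is harmless.
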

The section $x=y=0$ is the generator of the $\mathbb{Z}_2$ Mordell-Weil group  and $x=z=0$ is the neutral element of the  Mordell-Weil group. The discriminant of this Weierstrass model is
\begin{equation}
\Delta=16 a_4^2(a_2^2 -4 a_4).
\end{equation}

\subsection{$G$-models and representations}

 The locus of  points in the base that lie below singular fibers of a non-trivial elliptic fibration is a Cartier divisor $\Delta$ called the discriminant locus of the elliptic fibration. 
We denote the irreducible components of the reduced discriminant by $\Delta_i$. 
If the elliptic fibration is minimal, the type of the  fiber over the generic point of  $\Delta_i$ of $\Delta$ has a dual graph that is  an affine Dynkin diagram $\widetilde{\mathfrak{g}}_i^t$.  If the generic fiber over $\Delta_i$ is irreducible, $\mathfrak{g}_i$ is the trivial Lie algebra since $\widetilde{\mathfrak{g}}_i^t=\widetilde{A}_0$.
The Lie algebra   $\mathfrak{g}$ associated with the elliptic fibration $\varphi:Y\to B$ is then  the direct sum 
$\mathfrak{g}=\bigoplus_i \mathfrak{g}_i, $
where the Lie algebra $\mathfrak{g}_i$ is such that the affine Dynkin diagram  $\widetilde{\mathfrak{g}}^t_i$ is the dual graph of the fiber over the generic point of $\Delta_i$.

When an elliptic $\varphi: Y \to B$ has trivial Mordell-Weil group, the compact Lie group $G$ associated with the elliptic fibration  $\varphi$ is semi-simple,  simply connected,  and is given by the formula $
G:= \exp (\bigoplus_i \mathfrak{g}_i)$, 
$\exp(\mathfrak{g})$ is the unique compact simply connected Lie group whose Lie algebra is $\mathfrak{g}$, 
 the index $i$ runs over all the irreducible components of the reduced discriminant locus. The  Lie algebra $\mathfrak{g}_i$ is such that the affine Dynkin diagram  $\widetilde{\mathfrak{g}}^t_i$ is the dual graph of the fiber over the generic point of the irreducible component $\Delta_i$ of the reduced discriminant 
of the elliptic fibration. 

If the elliptic fibration has a Mordell-Weil group $T\times \mathbb{Z}^r$ with torsion subgroup $T$ and rank $r$, then the group $G$ is  the quotient  $\widetilde{G}/T \times U(1)^r$ where $\widetilde{G}=\exp (\bigoplus_i \mathfrak{g}_i)$. 
Defining the quotient $\widetilde{G}/T$  properly requires a choice of  embedding of $T$ in the center $Z(\widetilde{G})$ of the simply connected group $\widetilde{G}$. 
The center of $G$ is then $Z(\widetilde{G})/T$. 
The representation $\mathbf{R}$ attached to the elliptic fibration constrains the possibilities and is sometimes enough to completely determine the embedding of $T$ in $Z(\widetilde{G})$. 

\begin{defn}[$G$-model]\label{Defn:Gmodel}
Let $G$ be a compact, simply connected Lie group.
An elliptic fibration  $\varphi:Y\to B$ with an associated Lie group $G$  is called a \emph{$G$-model}.
\end{defn}

 \begin{defn}[$\mathcal{K}_1+\cdots+\mathcal{K}_n$-model]
 Let $\mathcal{ K}_1, \mathcal{K}_2, \cdots, \mathcal{K}_n$ be Kodaira types and  $S_1, \cdots, S_n$ be a smooth divisor of a projective variety $B$. 
 An elliptic fibration  $\varphi:Y\longrightarrow B$ over $B$  is said to be  a  \emph{$\mathcal{K}_1+\cdots+\mathcal{K}_n$-model} if
the reduced discriminant locus $\Delta(\varphi)$ contains components $S_i$ as an irreducible component a divisor  $S\subset B$ such that the generic fiber over $S_i$ is of type $\mathcal{ K}_i$ and any other  generic fiber  of a component of the discriminant locus different from the   $S_i$ is irreducible.
 \end{defn}
 
 The generic fibers degenerate into  fibers of different types  over points of codimension two in the base, usually intersections of irreducible components $\Delta_i$ of the reduced discriminant or codimension-one singularities of the reduced discriminant.

\begin{defn}[Weight vector of a vertical curve]\label{Def:WeightVerticalCurve}
Let $C$ be a vertical curve, i.e.  a curve contained in a fiber of the elliptic fibration.
Let $S$ be an irreducible component of the reduced discriminant of the elliptic fibration $\varphi: Y\to B$. 
The pullback of  $\varphi^* S$ has irreducible components $D_0, D_1, \ldots, D_n$, where $D_0$ is the component touching the section of the elliptic fibration.   
The {\em weight vector} of $C$ over $S$ is  by definition the vector ${\varpi}_S(C)=(-D_1\cdot C, \ldots, -D_n\cdot C)$ of intersection numbers $D_i\cdot C$ for $i=1,\ldots, n$. 
\end{defn}

The irreducible curves of the degenerations over codimension-two loci only give a subset of weights of a representation $\mathbf{R}$.  Hence, we need an algorithm that retrieves the full representation $\mathbf{R}$ given only a few of its weights.  
This problem can be addressed systematically using the notion of a saturated set of weights introduced by  Bourbaki  \cite[Chap.VIII.\S 7. Sect. 2]{Bourbaki.GLA79}.

\begin{defn}[Saturated set of weights]
A set $\Pi$ of integral weights is {\em saturated} if for any weight $\varpi\in\Pi$ and any simple root $\alpha$,  the weight $\varpi-i\alpha$ is also in $\Pi$ for any $i$ such that $0\leq i\leq \langle \varpi,\alpha\rangle$. 
A saturated set has {\em highest weight} $\lambda$ if $\lambda\in\Lambda^+$ and $\mu\prec\lambda$ for any $\mu\in \Pi$. 
\end{defn}
\begin{defn}[Saturation of a subset]
Any subsets $\Pi$ of weights is contained in a unique smallest saturated subset. We call it the saturation of $\Pi$. 
\end{defn}
\begin{prop}[{\cite[Chap. III \S 13.4]{Humphreys}}] \quad
\begin{enumerate}[label=(\alph*)]
\item A saturated set of weights is invariant under the action of the Weyl group.  
\item The saturation of a set of weights $\Pi$ is finite if and only if  the set $\Pi$ is finite.
\item A saturated set with highest weight $\lambda$  consists of all dominant weights lower than or equal to $\lambda$ and their conjugates under the Weyl group.  
\end{enumerate}
\end{prop}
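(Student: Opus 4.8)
\medskip
\noindent\textbf{Proof proposal.} All three assertions are purely combinatorial facts about the root system underlying $\mathfrak{g}$, and the plan is to establish them in the order (a), (b), (c), using throughout the saturation condition in the form that the move $\varpi\mapsto\varpi-i\alpha$ is available for every root $\alpha$, and writing $\langle\varpi,\alpha\rangle=2(\varpi,\alpha)/(\alpha,\alpha)$ as in the text. For (a), since the Weyl group $W$ is generated by the simple reflections, it is enough to check that a saturated set $\Pi$ is stable under each $s_\alpha$. Given $\varpi\in\Pi$, put $n=\langle\varpi,\alpha\rangle$, so that $s_\alpha(\varpi)=\varpi-n\alpha$; if $n\geq 0$ the choice $i=n$ in the saturation condition puts $s_\alpha(\varpi)$ in $\Pi$, while if $n<0$ one instead applies the condition to the root $-\alpha$, for which $\langle\varpi,-\alpha\rangle=-n>0$, and $i=-n$ again yields $\varpi-n\alpha=s_\alpha(\varpi)\in\Pi$. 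Hence $\Pi$ is $W$-stable.

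For (b), the forward implication is trivial since $\Pi$ is contained in its saturation. For the converse, assume $\Pi$ finite and let $C$ be the convex hull of the finite set $W\cdot\Pi$ inside the ambient Euclidean space; then $C$ is bounded and $W$-stable. The key observation is that the set $\widehat\Pi$ of all integral weights lying in $C$ is itself saturated: for $\varpi\in\widehat\Pi$, a root $\alpha$, and $0\leq i\leq n:=\langle\varpi,\alpha\rangle$, the weight $\varpi-i\alpha$ lies on the segment joining $\varpi$ to $s_\alpha(\varpi)=\varpi-n\alpha$, both of whose endpoints lie in $C$ (the second because $C$ is $W$-stable), so $\varpi-i\alpha\in C$, and being an integral weight it lies in $\widehat\Pi$. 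Since $\Pi\subseteq W\cdot\Pi\subseteq C$ and $\Pi$ consists of integral weights, $\Pi\subseteq\widehat\Pi$, so the smallest saturated set containing $\Pi$ is contained in $\widehat\Pi$; and $\widehat\Pi$ is finite because a bounded region contains only finitely many integral weights. Hence the saturation of $\Pi$ is finite.

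For (c), let $\Pi$ be saturated with highest weight $\lambda$ -- so $\lambda\in\Lambda^+\cap\Pi$ and $\mu\preceq\lambda$ for all $\mu\in\Pi$ -- and write $\Pi'$ for the union of the $W$-orbits of the dominant weights $\mu\preceq\lambda$. The inclusion $\Pi\subseteq\Pi'$ is immediate: given $\mu\in\Pi$, the unique dominant weight $\mu^+$ in the orbit $W\mu$ also lies in $\Pi$ by (a), hence $\mu^+\preceq\lambda$, so $\mu\in W\mu^+\subseteq\Pi'$. For the reverse inclusion, by (a) it suffices to show every dominant $\mu$ with $\mu\preceq\lambda$ lies in $\Pi$. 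Here the set $S=\{\nu\in\Pi:\mu\preceq\nu\}$ is nonempty ($\lambda\in S$) and, since every $\nu\in\Pi$ satisfies $\nu\preceq\lambda$, is contained in the finite set $\{\nu:\mu\preceq\nu\preceq\lambda\}$; so $S$ is finite, and I would pick $\nu\in S$ minimizing the height $\mathrm{ht}(\nu-\mu)=\sum_k d_k$, where $\nu-\mu=\sum_k d_k\alpha_k$ with nonnegative integers $d_k$. Suppose $\nu\neq\mu$. For each simple $\alpha_k$ with $d_k\geq 1$ one must have $\langle\nu,\alpha_k\rangle\leq 0$, for otherwise $\nu-\alpha_k\in\Pi$ by saturation and $\nu-\alpha_k\in S$ has strictly smaller height, contradicting minimality. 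On the other hand $0<(\nu-\mu,\nu-\mu)=\sum_{k:\,d_k\geq 1}d_k(\nu-\mu,\alpha_k)$, so $(\nu-\mu,\alpha_k)>0$ for some $k$ with $d_k\geq 1$, and then $\langle\nu,\alpha_k\rangle\geq\langle\mu,\alpha_k\rangle+1\geq 1$ because $\mu$ is dominant -- a contradiction. Hence $\nu=\mu\in\Pi$, and $\Pi=\Pi'$.

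I expect the reverse inclusion in (c) to be the only real obstacle: the saturation of $\{\lambda\}$ is in general strictly smaller than the set of all weights $\preceq\lambda$, so one cannot simply slide down from $\lambda$ to $\mu$ along an arbitrary descending path, and the argument must be arranged so that dominance of $\mu$ is invoked precisely where the one-dimensional saturation moves would otherwise fail, with the finiteness of $S$ supplying the starting point for the minimality argument. (Alternatively, (a)--(c) may simply be quoted from \cite[Ch.~III, \S13]{Humphreys}.)
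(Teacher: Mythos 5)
Your proof is correct. Note first that the paper itself gives no argument here: the proposition is quoted verbatim from Humphreys, so there is no ``paper's proof'' to match against, and your parts (a) and (c) reproduce essentially the standard arguments of \cite[Ch.~III, \S 13.4]{Humphreys} (for (c), your height-minimizing $\nu\in S$ is the same descent Humphreys runs by decrementing a coefficient of $\nu-\mu$, with dominance of $\mu$ entering at exactly the same point). Your part (b) is worth a remark: Humphreys only proves finiteness for a saturated set \emph{with a highest weight}, whereas statement (b) concerns the saturation of an arbitrary finite $\Pi$, which need not have one; your convex-hull argument (the integral weights in the $W$-stable bounded convex hull of $W\cdot\Pi$ form a saturated set) is the right generalization and is complete. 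One caveat you implicitly, and correctly, navigate: the paper's Definition of ``saturated'' quantifies only over \emph{simple} roots with $0\leq i\leq\langle\varpi,\alpha\rangle$, which as literally written is too weak for (a) (e.g.\ in $\mathfrak{sl}_2$ the singleton consisting of a strictly antidominant weight satisfies it vacuously but is not $W$-stable); your standing assumption that the move $\varpi\mapsto\varpi-i\alpha$ is available for every root $\alpha$ (equivalently, for $i$ of either sign between $0$ and $\langle\varpi,\alpha\rangle$) is Humphreys' actual definition and is what all three parts require.
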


\begin{thm}[{Bourbaki, \cite[Chap.VIII.\S 7. Sect. 2, Corollary to Prop. 5]{Bourbaki.GLA79}}]\label{Thm:R-Saturation} 
Let $\Pi$ be a finite saturated  set of weights. Then there exists a finite dimensional  $\mathfrak{g}$-module  whose set of weights is $\Pi$. 
\end{thm}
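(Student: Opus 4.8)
The statement to prove is Bourbaki's theorem that a finite saturated set of weights $\Pi$ is exactly the weight set of some finite-dimensional $\mathfrak{g}$-module. The plan is to reduce to the case where $\Pi$ is the saturation of a single dominant weight, and then to build the module from irreducible pieces using the already-stated structural facts about saturated sets (Proposition in Chap. III §13.4 of Humphreys).

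First I would invoke part (a) of the cited Proposition: $\Pi$ is Weyl-invariant. Hence $\Pi$ has a nonempty set of dominant members, and since $\Pi$ is finite, this set has maximal elements in the dominance order $\prec$. Let $\lambda_1,\ldots,\lambda_k$ be the maximal dominant weights of $\Pi$. For each $j$, let $V_j = V(\lambda_j)$ be the irreducible finite-dimensional $\mathfrak{g}$-module with highest weight $\lambda_j$ (this exists by the standard highest-weight theory — theorem of the highest weight — which may be assumed). Its weight set $\Pi(V_j)$ is, by part (c) of the Proposition, precisely the saturation of $\{\lambda_j\}$, i.e.\ all dominant weights $\preceq \lambda_j$ together with their Weyl conjugates. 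The key containment is $\Pi(V_j)\subseteq \Pi$: indeed every dominant $\mu\preceq\lambda_j$ lies in $\Pi$ because $\Pi$ is saturated and contains $\lambda_j$ (saturation is exactly closure under $\varpi\mapsto\varpi - i\alpha$, which by the standard chain argument forces all dominant weights below a given one to be present), and then Weyl-invariance of $\Pi$ gives the conjugates as well.

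Next I would form $V=\bigoplus_{j=1}^k V_j$ and compute its weight set: $\Pi(V)=\bigcup_j \Pi(V_j)$. By the previous paragraph this is contained in $\Pi$. For the reverse inclusion, take any $\varpi\in\Pi$; by Weyl-invariance we may assume $\varpi$ is dominant, and then $\varpi\preceq\lambda_j$ for some $j$ (any dominant weight of a finite poset of dominant weights sits below one of the maximal ones — here one needs that $\varpi$ and some $\lambda_j$ are comparable, which follows because the $\lambda_j$ were chosen maximal among *all* dominant members of $\Pi$, so $\varpi$ is $\preceq$ some $\lambda_j$ by maximality applied to the finite set). Hence $\varpi\in\Pi(V_j)\subseteq\Pi(V)$, and $\Pi(V)=\Pi$ as desired.

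The main obstacle is the comparability issue just flagged: from "maximal dominant element" one must genuinely conclude "every dominant element is $\preceq$ some maximal one," which is immediate in a finite poset but deserves an explicit word, and one must be careful that the $\lambda_j$ are maximal in the poset of *dominant* weights of $\Pi$ (not merely maximal among all weights). A secondary point requiring care is the claim that saturation of $\{\lambda\}$ contains every dominant $\mu\preceq\lambda$: this is the nontrivial direction and is exactly what part (c) of the Proposition packages, so citing it is legitimate, but if one wanted to be self-contained one would run the $\mathfrak{sl}_2$-string argument along simple roots to descend from $\lambda$ to $\mu$. Everything else — existence of $V(\lambda_j)$ and the description of its weights — is standard highest-weight theory and is effectively supplied by the cited results, so the proof is short once these two bookkeeping points are nailed down.
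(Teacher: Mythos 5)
Your argument is correct. The paper does not prove this statement at all --- it is imported verbatim from Bourbaki --- and your proof is essentially the standard (Bourbaki/Humphreys) one: decompose $\Pi$ via its maximal dominant weights $\lambda_1,\dots,\lambda_k$, use that the weight set of $V(\lambda_j)$ is the saturation of $\{\lambda_j\}$, check both inclusions for $\bigoplus_j V(\lambda_j)$, and note that the descent argument showing every dominant $\mu\preceq\lambda_j$ lies in $\Pi$ only uses saturation of $\Pi$ together with $\lambda_j\in\Pi$. The two bookkeeping points you flag (finite posets have enough maximal elements; the $\mathfrak{sl}_2$-string descent) are exactly the right ones and are handled correctly.
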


\begin{defn}[Representation of  a $G$-model]\label{Def.Rep}
To a $G$-model, we associate a representation $\mathbf{R}$ of the Lie algebra $\mathfrak{g}$ as follows. 
 The weight vectors of the irreducible vertical   rational curves of the fibers over codimension-two points form  a set $\Pi$  whose  saturation defines uniquely a representation  $\mathbf{R}$ by  Theorem \ref{Thm:R-Saturation}. 
  We call this representation  $\mathbf{R}$  the representation of the $G$-model. 
 \end{defn}
Definition \ref{Def.Rep}  is a formalization of the method of Aspinwall and Gross \cite[\S 4]{Aspinwall:1996nk}.
Note that we always get the adjoint representation as a summand of $\mathbf{R}$. See also \cite{Marsano,Morrison:2011mb}.

\subsection{Intersection theory}
All our intersection theory computations come down to  the following three theorems. 
The first one is a theorem of Aluffi which gives the Chern class after a blowup along a local complete intersection. 
The second theorem is a pushforward theorem that provides a user-friendly method to compute invariant of the blowup space in terms of the original space. 
The last theorem is a direct consequence of functorial properties of the Segre class and gives a simple method to pushforward analytic expressions in the Chow ring of a projective bundle to  the Chow ring of its base.

\begin{thm}[Aluffi, {
{\cite[Lemma 1.3]{Aluffi_CBU}}}]
\label{Thm:AluffiCBU}
Let $Z\subset X$ be the  complete intersection  of $d$ nonsingular hypersurfaces $Z_1$, \ldots, $Z_d$ meeting transversally in $X$.  Let  $f: \widetilde{X}\longrightarrow X$ be the blowup of $X$ centered at $Z$. We denote the exceptional divisor of $f$  by $E$. The total Chern class of $\widetilde{X}$ is then:
\begin{equation}
c( T{\widetilde{X}})=(1+E) \left(\prod_{i=1}^d  \frac{1+f^* Z_i-E}{1+ f^* Z_i}\right)  f^* c(TX).
\end{equation}
\end{thm}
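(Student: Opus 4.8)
The plan is to realize the blow-up $f\colon\widetilde X\to X$ explicitly as the zero locus of a regular section of a vector bundle on a projective bundle over $X$, and then read off $c(T\widetilde X)$ from the Whitney formula. First I would package the hypotheses as a bundle: set $\mathcal{E}:=\bigoplus_{i=1}^{d}\mathscr{O}_X(Z_i)$, a rank $d$ bundle with $c(\mathcal{E})=\prod_{i=1}^{d}(1+Z_i)$, and let $s=(s_1,\dots,s_d)\in H^0(X,\mathcal{E})$ where $s_i$ cuts out $Z_i$. Transversality of the $Z_i$ says precisely that $s$ is a regular section with $V(s)=Z$ and $N_{Z/X}\cong\mathcal{E}|_Z$. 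On the bundle of lines $\pi\colon P:=\mathbb{P}_X(\mathcal{E})\to X$ one has the tautological sequence $0\to\mathscr{O}_P(-1)\to\pi^*\mathcal{E}\to\mathcal{Q}\to 0$ with $\mathcal{Q}$ of rank $d-1$, and composing $\pi^*s$ with $\pi^*\mathcal{E}\twoheadrightarrow\mathcal{Q}$ produces a section $\bar s\in H^0(P,\mathcal{Q})$.

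The crucial geometric input, and the step I expect to be the main obstacle, is the identification $\widetilde X\cong Z(\bar s)$ over $X$, under which the exceptional divisor $E$ satisfies $\mathscr{O}_P(1)|_{\widetilde X}\cong\mathscr{O}_{\widetilde X}(-E)$, i.e.\ $h|_{\widetilde X}=-E$ with $h:=c_1(\mathscr{O}_P(1))$. The idea is that over $X\setminus Z$ the nonvanishing section $s$ determines the line $\langle s(x)\rangle\subset\mathcal{E}_x$, giving a section $X\setminus Z\to P$ whose graph closure is $Z(\bar s)$; I would then verify in local coordinates (with $Z_i=\{x_i=0\}$ and $Z=\{x_1=\cdots=x_d=0\}$) that this closure is exactly $\mathrm{Bl}_Z X$, that the Jacobian of $\bar s$ has maximal rank $d-1$ along $Z(\bar s)$ so that $\bar s$ is a regular section, and that $f^*s$, viewed as a section of the line bundle $\mathscr{O}_P(-1)|_{\widetilde X}$, vanishes to order exactly one on $E$ — which forces $\mathscr{O}_P(-1)|_{\widetilde X}\cong\mathscr{O}_{\widetilde X}(E)$. (Equivalently one can argue via the Rees algebra: the Koszul surjection $\mathcal{E}^\vee\twoheadrightarrow\mathcal{I}_Z$ attached to $s$, surjective because $s$ is regular, identifies $\mathrm{Proj}$ of the Rees algebra of $\mathcal{I}_Z$ with $Z(\bar s)$.) It is worth stressing why this reformulation is needed: inside $P$ the blow-up is cut out by the $2\times 2$ minors $\xi_i s_j-\xi_j s_i$, which for $d\ge 3$ define a determinantal, non-complete-intersection subvariety, so naive adjunction in $P$ is unavailable; exhibiting $\widetilde X$ as the zero scheme of the rank $d-1$ bundle $\mathcal{Q}$ is exactly what restores a Whitney-type computation. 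For $d=1$ the statement is trivial ($\widetilde X=X$), and for $d=2$, $\widetilde X$ is an honest hypersurface in $P$ and the argument below is ordinary adjunction.

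Granting this identification, the rest is a routine Chern-class computation. From the relative Euler sequence $0\to\mathscr{O}_P\to\pi^*\mathcal{E}\otimes\mathscr{O}_P(1)\to T_{P/X}\to 0$ one gets $c(T_{P/X})=\prod_{i=1}^{d}(1+h+\pi^*Z_i)$, hence $c(TP)=\pi^*c(TX)\prod_{i=1}^{d}(1+h+\pi^*Z_i)$, and from the tautological sequence $c(\mathcal{Q})=\pi^*c(\mathcal{E})/c(\mathscr{O}_P(-1))=\prod_{i=1}^{d}(1+\pi^*Z_i)/(1-h)$. Since $\widetilde X=Z(\bar s)$ is the zero locus of a regular section of $\mathcal{Q}$, the Whitney formula gives $c(T\widetilde X)=\big(c(TP)/c(\mathcal{Q})\big)|_{\widetilde X}$; substituting $h|_{\widetilde X}=-E$ and $\pi^*Z_i|_{\widetilde X}=f^*Z_i$ yields
\begin{equation*}
c(T\widetilde X)=\frac{c(TP)|_{\widetilde X}}{c(\mathcal{Q})|_{\widetilde X}}=(1+E)\,\frac{\prod_{i=1}^{d}(1+f^*Z_i-E)}{\prod_{i=1}^{d}(1+f^*Z_i)}\,f^*c(TX)=(1+E)\left(\prod_{i=1}^{d}\frac{1+f^*Z_i-E}{1+f^*Z_i}\right)f^*c(TX),
\end{equation*}
which is the claimed formula. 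A more pedestrian alternative, avoiding the explicit model, would be to invoke the general Porteous–Fulton formula for $c(T\,\mathrm{Bl}_Z X)$ in terms of $c(N_{Z/X})$, $c(TZ)$ and the exceptional divisor $\mathbb{P}(N_{Z/X})$, and then specialize using $N_{Z/X}\cong\bigoplus_i\mathscr{O}_Z(Z_i)$ and push forward from $E=\mathbb{P}(N_{Z/X})$ to $\widetilde X$; the bookkeeping there — tracking the relation $\prod_i(\zeta+\ell_i)=0$ on $E$, with $\ell_i=Z_i|_Z$, and using $f^*Z_i=\widehat{Z_i}+E$ — is heavier but produces the same product.
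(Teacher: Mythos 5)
The paper does not prove this statement: it is imported verbatim as \cite[Lemma 1.3]{Aluffi_CBU}, so there is no in-house argument to compare yours against. Your proof is correct and complete in outline, and it is the standard way to establish this formula in the regular-embedding case. The two points you flag as needing verification are indeed the only delicate ones, and both go through: (i) the identification $\widetilde X\cong Z(\bar s)$ with $\mathscr{O}_P(-1)|_{\widetilde X}\cong\mathscr{O}_{\widetilde X}(E)$ is checked in the chart $\xi_1\neq 0$, where the lift of $\pi^*s$ to $\mathscr{O}_P(-1)$ is $x_1\cdot(1,u_2,\dots,u_d)$ and visibly vanishes to order exactly one on $E=\{x_1=0\}$; and (ii) the regularity of $\bar s$ follows from a dimension count you could make explicit --- every component of the zero scheme of a section of the rank-$(d-1)$ bundle $\mathcal{Q}$ has dimension at least $\dim X$, while $\pi^{-1}(Z)$ has dimension $\dim X-1$, so $Z(\bar s)$ is exactly the closure of the graph of $x\mapsto\langle s(x)\rangle$ and carries no extra components over $Z$. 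One hypothesis is implicit in both the statement and your argument and should be said aloud: $X$ must be nonsingular (otherwise $c(TX)$ and the Whitney formula on $P$ make no sense); in the paper's application $X$ is always a smooth ambient projective bundle, so this is harmless. Your closing remark is also apt: Aluffi's own derivation is closer to your ``pedestrian alternative'' (specializing a general blow-up Chern class formula to $N_{Z/X}\cong\bigoplus_i\mathscr{O}_Z(Z_i)$ and pushing forward from $E=\mathbb{P}(N_{Z/X})$), whereas your main route via the zero locus of a section of the universal quotient bundle is more self-contained and avoids that bookkeeping.
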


\begin{thm}[Esole--Jefferson--Kang,  see  {\cite{Euler}}] \label{Thm:Push}
    Let the nonsingular variety $Z\subset X$ be a complete intersection of $d$ nonsingular hypersurfaces $Z_1$, \ldots, $Z_d$ meeting transversally in $X$. Let $E$ be the class of the exceptional divisor of the blowup $f:\widetilde{X}\longrightarrow X$ centered 
at $Z$.
 Let $\widetilde{Q}(t)=\sum_a f^* Q_a t^a$ be a formal power series with $Q_a\in A_*(X)$.
 We define the associated formal power series  ${Q}(t)=\sum_a Q_a t^a$, whose coefficients pullback to the coefficients of $\widetilde{Q}(t)$. 
 Then the pushforward $f_*\widetilde{Q}(E)$ is
 $$
  f_*  \widetilde{Q}(E) =  \sum_{\ell=1}^d {Q}(Z_\ell) M_\ell, \quad \text{where} \quad  M_\ell=\prod_{\substack{m=1\\
 m\neq \ell}}^d  \frac{Z_m}{ Z_m-Z_\ell }.
 $$ 
\end{thm}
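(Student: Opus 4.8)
The plan is to exploit the fact that the blowup $f:\widetilde X\to X$ along the complete intersection $Z=Z_1\cap\cdots\cap Z_d$ factors through a projective bundle, and then reduce everything to a Segre-class computation on that bundle. First I would recall the standard model of the blowup: since $Z_1,\dots,Z_d$ meet transversally, the normal bundle $N_{Z/X}$ is $\bigoplus_{i=1}^d \mathcal O_Z(Z_i)$, and the exceptional divisor $E\subset\widetilde X$ is the projectivization $\mathbb P(N_{Z/X})$, with $\mathcal O_E(-E)=\mathcal O_{\mathbb P(N)}(1)$ (or $(-1)$ depending on convention). More usefully, $\widetilde X$ sits inside the $\mathbb P^{d-1}$-bundle $\mathbb P\big(\bigoplus_{i=1}^d\mathcal O_X(Z_i)\big)\to X$ as the proper transform locus, but the cleanest route is purely formal: one only needs to know how $f_*$ acts on powers of $E$. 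So the first genuine step is to establish the key lemma
\[
f_* (E^k) = -\,\big[ s(N_{Z/X})\big]_{k-d}\cdot [Z]\quad\text{for }k\ge 1,\qquad f_*(E^0)=0 \text{ in the relevant degree},
\]
i.e. $f_*(E^k)$ is expressed through the Segre class of the normal bundle of $Z$; this is essentially \cite[Prop. 4.4 or Cor. 4.2.2]{Fulton} (the blowup formula), specialized to our situation.

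The second step is to compute $s(N_{Z/X})$ explicitly. Because $N_{Z/X}=\bigoplus_{i=1}^d \mathcal O_Z(Z_i)$ is a direct sum of line bundles with first Chern classes $\zeta_i := Z_i|_Z$, its total Segre class is the multiplicative expression $s(N_{Z/X}) = \prod_{i=1}^d \frac{1}{1+\zeta_i}$. Then $f_*(\widetilde Q(E)) = \sum_{k\ge 1} Q_k\, f_*(E^k)$ (using $f_* f^* = \mathrm{id}$ on $A_*(X)$ and the projection formula to pull the $f^*Q_a$ out), and combining with Step 1 this becomes a sum over the coefficients of $Q(t)\cdot\big(\text{generating function of }f_*E^k\big)$ evaluated against $\prod \frac{1}{1+\zeta_i}\cdot[Z]$. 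The remaining task is a residue/partial-fractions identity: one shows that the bookkeeping collapses to
\[
f_*\widetilde Q(E) \;=\; \sum_{\ell=1}^d Q(\zeta_\ell)\prod_{\substack{m=1\\ m\ne \ell}}^d \frac{\zeta_m}{\zeta_m-\zeta_\ell}\,\cdot\,[Z]\quad\text{in }A_*(X),
\]
after identifying $\zeta_\ell$ with the restriction of $Z_\ell$ and using that $[Z]=Z_1\cdots Z_d$ lets each $\prod_{m\ne\ell}\zeta_m\cdot[Z]$ be rewritten on $X$; this is precisely the partial-fraction expansion of $\prod_m \frac{1}{\zeta_m - t}$ that produces the Lagrange-interpolation-type weights $M_\ell$. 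One convenient way to organize this is to write $\frac{1}{\prod_m(1+\zeta_m)}$ with a formal variable and recognize the classical identity $\sum_\ell \frac{x^{d-1}}{\prod_{m\ne\ell}(\zeta_\ell-\zeta_m)}\cdot(\text{stuff}) $, i.e. Lagrange interpolation at the nodes $\zeta_1,\dots,\zeta_d$.

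The main obstacle I anticipate is purely formal rather than geometric: carefully matching conventions (sign of $E$, whether $s$ or $1/c$, the degree shift by $d=\mathrm{codim}\,Z$) so that the generating-function manipulation in Step 3 really does telescope into the neat symmetric-function formula with $M_\ell=\prod_{m\ne\ell}\frac{Z_m}{Z_m-Z_\ell}$, and making sense of the ostensibly rational expressions $\frac{1}{Z_m-Z_\ell}$ inside the Chow ring — which is legitimate because, after multiplying by $[Z]=Z_1\cdots Z_d$ and expanding, only polynomial (finite-degree) terms survive, so all denominators are formal and cancel. A clean way to sidestep the convergence worry is to prove the identity first for $\widetilde Q(t)=f^*Q\cdot t^k$ a single monomial (where it is just the blowup formula plus the Segre computation of Step 2), and then extend by linearity to arbitrary formal power series, noting that in any fixed codimension only finitely many terms contribute. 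I expect no serious difficulty beyond this careful bookkeeping, since the transversality hypothesis gives us the split normal bundle for free and \cite{Fulton}'s blowup formula does the geometric heavy lifting.
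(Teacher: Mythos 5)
The paper does not actually prove Theorem \ref{Thm:Push}: it is imported verbatim from the reference \cite{Euler}, and the proof there runs exactly along the lines you propose --- reduce by linearity and the projection formula to computing $f_*(E^k)$, use the transversality hypothesis to split the normal bundle as $N_{Z/X}=\bigoplus_{i=1}^d\mathscr{O}_Z(Z_i)$ so that $E=\mathbb{P}(N_{Z/X})$ and $s(N_{Z/X})=\prod_i(1+Z_i)^{-1}$, express $f_*(E^k)$ through this Segre class, and collapse the resulting sum by the Lagrange-interpolation/partial-fraction identity that produces the weights $M_\ell$. So your route is the intended one and it does close up. Two sign slips in your displayed key lemma are worth fixing, since you yourself flag the bookkeeping as the main risk: the correct statement is $f_*(E^k)=(-1)^{k-1}s_{k-d}(N_{Z/X})\cap[Z]$ for $k\ge 1$, which after pushing to $X$ reads $(-1)^{d-1}h_{k-d}(Z_1,\dots,Z_d)\,Z_1\cdots Z_d$ with $h_j$ the complete homogeneous symmetric polynomial (note the sign depends only on $d$, not on $k$), rather than $-s_{k-d}$; and $f_*(E^0)=f_*[\widetilde X]=[X]$, not $0$, so the constant term $Q_0$ survives via $f_*f^*=\mathrm{id}$ and is matched on the right-hand side by the identity $\sum_\ell M_\ell=1$. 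With those signs in place the final step is the classical identity $\sum_\ell Z_\ell^{\,p}\big/\prod_{m\ne\ell}(Z_\ell-Z_m)=h_{p-d+1}(Z_1,\dots,Z_d)$ (zero for $0\le p<d-1$), and the argument is complete.
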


\begin{thm}[{See  \cite{Euler} and  \cite{AE1,AE2,Fullwood:SVW,Esole:2014dea}}]\label{Thm:PushH}
Let $\mathscr{L}$ be a line bundle over a variety $B$ and $\pi: X_0=\mathbb{P}[\mathscr{O}_B\oplus\mathscr{L}^{\otimes 2} \oplus \mathscr{L}^{\otimes 3}]\longrightarrow B$ a projective bundle over $B$. 
 Let $\widetilde{Q}(t)=\sum_a \pi^* Q_a t^a$ be a formal power series in  $t$ such that $Q_a\in A_*(B)$. Define the auxiliary power series $Q(t)=\sum_a Q_a t^a$. 
Then 
$$
\pi_* \widetilde{Q}(H)=-2\left. \frac{{Q}(H)}{H^2}\right|_{H=-2L}+3\left. \frac{{Q}(H)}{H^2}\right|_{H=-3L}  +\frac{Q(0)}{6 L^2},
$$
 where  $L=c_1(\mathscr{L})$ and $H=c_1(\mathscr{O}_{X_0}(1))$ is the first Chern class of the dual of the tautological line bundle of  $ \pi:X_0=\mathbb{P}(\mathscr{O}_B \oplus\mathscr{L}^{\otimes 2} \oplus\mathscr{L}^{\otimes 3})\rightarrow B$.
\end{thm}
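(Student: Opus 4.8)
The plan is to deduce Theorem~\ref{Thm:PushH} from the defining property of Segre classes for a projective bundle together with one elementary partial-fraction identity. First I would record that, since $X_0=\mathbb{P}(\mathscr{E})$ with $\mathscr{E}=\mathscr{O}_B\oplus\mathscr{L}^{\otimes 2}\oplus\mathscr{L}^{\otimes 3}$ a rank-$3$ bundle, $A_*(X_0)$ is a free $A_*(B)$-module on $1,H,H^2$ and the projective-bundle pushforward reads $\pi_*(H^{k})=s_{k-2}(\mathscr{E})$, where $s(\mathscr{E})=c(\mathscr{E})^{-1}$ is the total Segre class with the conventions $s_0(\mathscr{E})=1$ and $s_j(\mathscr{E})=0$ for $j<0$; in particular $\pi_*(1)=\pi_*(H)=0$. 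Because $\mathscr{E}$ is already a direct sum of line bundles we have $c(\mathscr{E})=(1+2L)(1+3L)$, hence
\[
 s(\mathscr{E})=\frac{1}{(1+2L)(1+3L)}=\frac{-2}{1+2L}+\frac{3}{1+3L},
\]
so that $s_j(\mathscr{E})=-2(-2L)^{j}+3(-3L)^{j}$ for every $j\ge 0$.

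Next I would apply the projection formula to $\widetilde{Q}(H)=\sum_a \pi^*Q_a\,H^a$, obtaining
\[
 \pi_*\widetilde{Q}(H)=\sum_{a\ge 0}Q_a\,\pi_*(H^a)=\sum_{a\ge 2}Q_a\,s_{a-2}(\mathscr{E})
 =\sum_{a\ge 2}Q_a\bigl(-2(-2L)^{a-2}+3(-3L)^{a-2}\bigr).
\]
(All sums are finite: every positive-degree class in $A_*(B)$ is nilpotent, so only finitely many $Q_a$ contribute in any fixed degree, which is why $Q(t)$ may be taken to be a formal power series.) It then remains to match this with the claimed right-hand side. Expanding $\left.\tfrac{Q(H)}{H^2}\right|_{H=-cL}=\sum_{a\ge 0}Q_a(-cL)^{a-2}$ as a Laurent expression in $L$ with coefficients in $A_*(B)$, the terms $a=0$ and $a=1$ contribute $\tfrac{Q_0}{c^2L^2}$ and $-\tfrac{Q_1}{cL}$ respectively, while the terms $a\ge 2$ reproduce $\sum_{a\ge 2}Q_a(-cL)^{a-2}$. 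Forming the combination $-2\,(\cdot)|_{c=2}+3\,(\cdot)|_{c=3}$, the $1/L$ terms cancel ($-2\cdot(-\tfrac12)+3\cdot(-\tfrac13)=0$), the $1/L^2$ terms give $-2\cdot\tfrac{Q_0}{4L^2}+3\cdot\tfrac{Q_0}{9L^2}=-\tfrac{Q_0}{6L^2}$, and the $a\ge 2$ terms reproduce $\pi_*\widetilde{Q}(H)$ exactly; adding the remaining summand $\tfrac{Q(0)}{6L^2}=\tfrac{Q_0}{6L^2}$ then cancels the leftover $-\tfrac{Q_0}{6L^2}$ and yields the asserted identity.

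The only genuine subtlety — and the step I would be most careful to phrase precisely — is the appearance of strictly negative powers of the nilpotent class $L$ in the intermediate Laurent expressions $\left.\tfrac{Q(H)}{H^2}\right|_{H=-cL}$. The content of the theorem is exactly that these occur only through $Q_0$ and $Q_1$ and cancel in the prescribed combination, so that the right-hand side is an honest element of $A_*(B)$ equal to $\pi_*\widetilde{Q}(H)$; I would state this reading explicitly before the computation. Equivalently, one can avoid Laurent expressions altogether by verifying the split-bundle pushforward formula $\pi_*g(H)=\sum_{i=0}^{2}\frac{g(-e_i)}{\prod_{j\ne i}(e_j-e_i)}$ directly on the basis $1,H,H^2$ (here $(e_0,e_1,e_2)=(0,2L,3L)$ are the Chern roots of $\mathscr{E}$, and the relation $\prod_i(H+e_i)=0$ together with $\pi_*(H^2)=1$ pins everything down), and then specializing $g=Q$: the three summands are then literally $\tfrac{Q(0)}{6L^2}$, $-2\left.\tfrac{Q(H)}{H^2}\right|_{H=-2L}$, and $3\left.\tfrac{Q(H)}{H^2}\right|_{H=-3L}$. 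Either route makes the proof essentially mechanical once the bundle $\mathscr{E}$ and the normalization $H=c_1(\mathscr{O}_{X_0}(1))$ are fixed.
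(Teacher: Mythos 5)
Your proof is correct, and it follows exactly the route the paper intends: the theorem is stated here without proof (cited from \cite{Euler}), but the surrounding text describes it as ``a direct consequence of functorial properties of the Segre class,'' which is precisely your argument via $\pi_*(H^k)=s_{k-2}(\mathscr{E})$, $s(\mathscr{E})=\frac{1}{(1+2L)(1+3L)}$, and partial fractions. Your bookkeeping of the $a=0,1$ Laurent terms and their cancellation against the $\frac{Q(0)}{6L^2}$ summand is accurate, so nothing further is needed.
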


\subsection{Anomaly Cancellation} \label{sec:anomaly}

The matter content of the six-dimensional ${\cal N}=(1,0)$ supergravity theory are given by \cite{Green:1984bx}
\begin{center}
\begin{tabular}{r c l}
supergravity multiplets: & \hspace{2cm}& $(g_{\mu\nu}, B^-_{\mu\nu}, \psi_\mu{}^{A-})$\\
tensor multiplets: & & $(B_{\mu\nu}^+,\chi^{A+},\sigma)$\\
vector multiplets: & & $(A_\mu, \lambda^{A-})$\\
hypermultiplets: & & $(4\phi, \zeta^+)$ 
\end{tabular}\\
\end{center}
where $\mu,\nu=0,\ldots, 5$ label spacetime indices, $A=1,2$ labels the fundamental representation 
of the $R$-symmetry SU($2$), and $\pm$ denotes the chirality of Weyl spinors or the self-duality ($+$)  or anti-self-duality ($-$) of the field strength of antisymmetric two-forms. 
The gravitini $\psi_\mu{}^{A-}$, the tensorini $\chi^{A+}$, and gaugini $ \lambda^{A-}$ are symplectic Majorana Weyl spinors. 
The hyperino $\zeta^+$ is a Weyl spinor invariant under the $R$-symmetry group SU($2$)$_R$. The scalar manifold of the tensor multiplets is the symmetric space $\text{SO}(1, n_T)/\text{SO}(n_T)$ where $n_T$ is the number of tensor multiplets. 
The scalar manifold of the hypermultiplet is a quaternionic-K\"ahler manifold of quaternionic dimension $n_H$, where $n_H$ is the number of hypermultiplets.

Consider a gauged six-dimensional $\mathcal{N}=(1,0)$ supergravity theory  with a semi-simple gauge group $G=\prod_a G_a$, $n_V^{(6)}$ vector multiplets, $n_T$ tensor multiplets, and $n_H$ hypermultiplets consisting of $n_H^0$ neutral hypermultiplets and $n_H^{ch}$ charged under a representation $\bigoplus_i\mathbf{R}_i$ of the gauge group with $\mathbf{R}_i=\bigotimes_{a} \mathbf{R}_{i,a}$, where $\mathbf{R}_{i,a}$ is an irreducible representation of the simple component  $G_a$ of the semi-simple group $G$. The vector multiplets belongs to the adjoint of the gauge group (hence $n_V=\dim G$). 
As discussed in the introduction, CPT invariance requires the representation to be quaternionic.

By denoting the zero weights of a representation $\bf{R}_i$ as $\bf{R}^{(0)}_i$, the charged dimension of the hypermultiplets in representation $\bf{R}_i$ is given by $\dim{\bf{R}_i}-\dim{\bf{R}_{i}^{(0)}}$, as the hypermultiplets of zero weights are considered neutral. For a representation $\bf{R}_i$, $n_{\bf{R}_i}$ denotes the multiplicity of the representation $\bf{R}_i$. Then the number of charged hypermultiplets is simply
\begin{equation}
n_H^{ch}=\sum_{i} n_{\bf{R}_i} \left( \dim{\bf{R}_i}-\dim{\bf{R_{i}^{(0)}}} \right).
\end{equation}
The total number of hypermultiplets is the sum of the neutral hypermultiplets and the charged hypermultiplets. 
For a compactification on a Calabi-Yau threefold $Y$, the  number of  neutral hypermultiplets is  $h^{2,1}(Y)+1$ \cite{Cadavid:1995bk}. The number of each multiplet is
\begin{align}
n_V^{(6)}&=\dim{G}, \quad n_T=h^{1,1}(B)-1=9-K^2 , \\
n_H&=n_H^0+n_H^{ch}=h^{2,1}(Y)+1+\sum_{i} n_{\bf{R}_i} \left( \dim{\bf{R}_i}-\dim{\bf{R_{i}^{(0)}}} \right),
\end{align}
where the (elliptically fibered) base $B$ is a rational surface. From the Hodge number $h^{2,1}(Y)$ of the Calabi-Yau threefolds in Table \ref{hodge} of section \ref{eulerchar}, the number of hypernultiplets are computed for each model. 

The anomaly polynomial I$_8$ has a pure gravitational contribution of the form  $\tr R^4$ where $R$ is the Riemann tensor thought of as a $6\times 6$ real matrix of two-form values. 
To apply the Green-Schwarz mechanism, its coefficient is required to vanish, i.e.
\begin{equation}
n_H-n_V^{(6)}+29n_T-273=0.
\end{equation}
In order to define the remainder terms of the anomaly polynomial I$_8$, 
we define 
\begin{equation}
X^{(n)}_{a}=\tr_{\bf{adj}}F^n_a -\sum_{i}n_{\bf{R}_{i,a}}\tr_{\bf{R}_{i,a}}F^n_a, \quad 
Y_{ab}=\sum_{i,j} n_{\bf{R}_{i,a}, \bf{R}_{j,b}} \tr_{\bf{R}_{i,a}}F^2_a \tr_{\bf{R}_{j,b}}F^2_b,
\end{equation}
where $n_{\bf{R}_{i,a}, \bf{R}_{j,b}}$ is the number of hypermultiplets transforming in the representation $(\mathbf{R}_{i,a},\mathbf{R}_{j,b})$ of $G_a\times G_b$.
The trace identities for a representation $\mathbf{R}_{i,a}$ of a simple group $G_a$ are
\begin{equation}
\tr_{\bf{R}_{i,a}} F^2_a=A_{\bf{R}_{i,a}} \tr_{\bf{F}_a} F^2_a , \quad \tr_{\bf{R}_{i,a}} F^4_a=B_{\bf{R}_{i,a}} \tr_{\bf{F}_a} F^4_a+C_{\bf{R}_{i,a}} (\tr_{\bf{F}_a} F^2_a )^2
\end{equation}
with respect to a  reference representation $\bf{F}_a$ for each simple component $G_a$. To define the anomaly polynomial I$_8$, we introduce the following expressions: 
\begin{align}
X^{(2)}_a&=\left(A_{a,\bf{adj}}-\sum_{i}n_{\bf{R}_{i,a}}A_{\bf{R}_{i,a}}\right)\tr_{\bf{F}_a}F^2_a, \\
X^{(4)}_a&=\left(B_{a,\bf{adj}}-\sum_{i}n_{\bf{R}_{i,a}}B_{\bf{R}_{i,a}}\right)\tr_{\bf{F}_a}F^4_a
+\left(C_{a,\bf{adj}}-\sum_{i}n_{\bf{R}_{i,a}}C_{\bf{R}_{i,a}}\right)(\tr_{\bf{F}_a}F^2_a)^2 , \\
Y_{ab}&=\sum_{i,j} n_{\bf{R}_{i,a},\bf{R}_{j,b}} A_{{R}_{i,a}} A_{\bf{R}_{j,b}} \tr_{\bf{F}_a}F^2_a \tr_{\bf{F}_b}F^2_b.
\end{align}
For each simple component $G_a$, the anomaly polynomial  I$_8$ has a pure gauge contribution proportional to the quartic term $\tr F^4_a$ that is required to vanish in order to factorize I$_8$: 
$$
B_{a,\bf{adj}}-\sum_{i}n_{\bf{R}_{i,a}}B_{\bf{R}_{i,a}}=0.
$$
When the coefficients of all quartic terms  ($\tr R^4$ and $\tr F^4_a$) vanish,  the remaining part of the anomaly polynomial I$_8$ is
\begin{equation}
I_8=\frac{K^2}{8} (\tr R^2)^2 +\frac{1}{6}\sum_{a} X^{(2)}_{a} \tr R^2-\frac{2}{3}\sum_{a} X^{(4)}_{a}+4\sum_{a<b}Y_{ab}.
\end{equation}
The anomalies are canceled by the Green-Schwarz mechanism when I$_8$ factorizes \cite{Green:1984bx,Sagnotti:1992qw,Schwarz:1995zw}.

There is a subtlety on the representations that are charged on more than a simple component of the group, as it affects not only  $Y_{ab}$ but also $X^{(2)}_a$ and $X^{(4)}_a$. Consider a representation $(\bf{R_1},\bf{R_2})$ for of a semisimple group with two simple components $G=G_1\times G_2$, where $\bf{R_a}$ is a representation of $G_a$. Then this representation contributes to $n_{\bf{R_1}}$ $\dim{\bf{R_2}}$ times, and contributes to $n_{\bf{R_2}}$ $\dim{\bf{R_1}}$ times:
\begin{equation}
n_{\bf{R_1}}=\dim{\bf{R_2}} \ n_{\bf{R_1},\bf{R_2}}, \quad n_{\bf{R_2}}=\dim{\bf{R_1}} \ n_{\bf{R_1},\bf{R_2}}.
\end{equation}

If the coefficients of $\tr R^4$ and $\tr F_a^4$ vanishes and $G=G_1\times G_2$,  the remainder of the anomaly polynomial is given by
\begin{equation}
I_8 =\frac{K^2}{8} (\tr R^2)^2 +\frac{1}{6} (X^{(2)}_{1} +X^{(2)}_{2}) \tr R^2-\frac{2}{3} (X^{(4)}_{1}+X^{(4)}_{2})+4Y_{12} .
\end{equation}
 If I$_8$ factors as $\frac{1}{2}\Omega_{ij} X^{(4)}_i X^{(4)}_j$, then the anomaly is cancelled by adding the counter term 
$\Omega_{ij} B_i \wedge   X_j^{(4)}$ to the Lagrangian. 
The modification of the field strength $H^{(i)}$ of  the antisymmetric tensor $B^{(i)}$ are $H^{(i)}=dB^{(i)} +\omega^{(i)}$, where $\omega^{(i)}$ is a proper combination of Yang-Mills and gravitational Chern-Simons terms.

For $\text{SU($2$)}$ with the adjoint representation $\bf{3}$ and the fundamental representation $\bf{F}=\bf{2}$ as the reference representation,
\begin{align}
\begin{split}
\mathrm{tr}_{\bf{3}}\  F^2_1=4 \mathrm{tr}_{\bf{2}}\  F^2_1, \quad 
\mathrm{tr}_{\bf{3}}\  F^4_1=8 (\mathrm{tr}_{\bf{2}}\  F^2_1)^2 , \quad
\mathrm{tr}_{\bf{2}}\  F^4_1 =\frac{1}{2} (\mathrm{tr}_{\bf{2}}\  F^2_1)^2.
\label{eq:SU2trace}
\end{split}
\end{align}
The trace identities for $\text{Sp($4$)}$ is given by 
\begin{align}
\begin{split}
\mathrm{tr}_{\bf{10}}\  F^2_2=6 \mathrm{tr}_{\bf{4}}\  F^2_2, \quad 
\mathrm{tr}_{\bf{10}}\  F^4_2 =12 \mathrm{tr}_{\bf{4}}\  F^4_2+3 (\mathrm{tr}_{\bf{4}}\  F^2_2)^2 ,\\
\mathrm{tr}_{\bf{5}}\  F^2_2=2 \mathrm{tr}_{\bf{4}}\  F^2_2, \quad 
\mathrm{tr}_{\bf{5}}\  F^4_2 =-4 \mathrm{tr}_{\bf{4}}\  F^4_2+3 (\mathrm{tr}_{\bf{4}}\  F^2_2)^2 .
\label{eq:Sp4trace}
\end{split}
\end{align}
For $\text{SU($4$)}$, the trace identities are given by
\begin{align}
\begin{split}
\mathrm{tr}_{\bf{15}}\  F^2_2=8 \mathrm{tr}_{\bf{4}}\  F^2_2, \quad 
\mathrm{tr}_{\bf{15}}\  F^4_2 =8 \mathrm{tr}_{\bf{4}}\  F^4_2+6 (\mathrm{tr}_{\bf{4}}\  F^2_2)^2 ,\\
\mathrm{tr}_{\bf{6}}\  F^2_2=2 \mathrm{tr}_{\bf{4}}\  F^2_2, \quad 
\mathrm{tr}_{\bf{6}}\  F^4_2 =-4 \mathrm{tr}_{\bf{4}}\  F^4_2+3 (\mathrm{tr}_{\bf{4}}\  F^2_2)^2 .
\label{eq:SU4trace}
\end{split}
\end{align}
\section{Summary of results}

In this section, we summarize the results of this paper. 

\subsection{Crepant resolutions}

The models we consider in this paper are given by the following Weierstrass models. 

\begin{align}
(\text{SU($2$)}\times \text{Sp($4$)})/\mathbb{Z}_2: \qquad &y^2 z-(x^3+a_2x^2z+st^2 xz^2)=0\\
\text{SU($2$)}\times \text{Sp($4$)}:\qquad&y^2z-(x^3+a_2x^2z+\widetilde{a}_4st^2xz^2+\widetilde{a}_6s^2t^4z^3)=0\\
(\text{SU($2$)}\times \text{SU($4$)})/\mathbb{Z}_2: \qquad &y^2z+a_1xyz-(x^3+\widetilde{a}_2tx^2z+st^2xz^2)=0\\
\text{SU($2$)}\times \text{SU($4$)}: \qquad &y^2z+a_1xyz-(x^3+\widetilde{a}_2tx^2z+\widetilde{a}_4st^2xz^2+\widetilde{a}_6s^2t^4 z^3)=0
\end{align}

Given a complete intersection $Z$ of hypersurfaces  $Z_i=V(z_i)$ in a variety $X$, we denote the blowup of $\widetilde{X}\to X$ along $Z$ with exceptional divisor $E=V(e)$ as 
$$
\begin{tikzpicture}
	\node(X0) at (0,0){$\widetilde{X}$};
	\node(X1) at (3,0){$X.$};
	\draw[big arrow] (X1) -- node[above,midway]{$(z_1,\ldots,z_n|e)$} (X0);	
 \end{tikzpicture}
$$

We use the following sequence of blowups to determine a crepant resolution of each models. 

\begin{table}[htb]
\begin{center}
\begin{tabular}{|c|l|}
\hline
$\begin{matrix}
    \\
(\text{SU($2$)}\times \text{Sp($4$)})/\mathbb{Z}_2\\ 
\text{SU($2$)}\times \text{Sp($4$)}
\end{matrix}
$ & 
{$\begin{matrix}
\\
{\begin{tikzcd}[column sep=huge]  X_0  \arrow[leftarrow]{r} {(x,y,s|e_1)} & \arrow[leftarrow]{r} {(x,y,t|w_1)}  X_1 &X_2  \arrow[leftarrow]{r}{(x,y,w_1|w_2)} &X_3. \end{tikzcd}}
\end{matrix}$}
\\
 &\\
\hline
$\begin{matrix}
    \\
(\text{SU($2$)}\times \text{SU($4$)})/\mathbb{Z}_2\\ 
\text{SU($2$)}\times \text{SU($4$)}
\end{matrix}
$
& 
$\begin{matrix}
\\
\begin{tikzcd}[column sep=huge]  X_0  \arrow[leftarrow]{r} {(x,y,s|e_1)} & \arrow[leftarrow]{r} {(x,y,t|w_1)}  X_1 &X_2  \arrow[leftarrow]{r}{(y,w_1|w_2)} &X_3 \arrow[leftarrow]{r}{(x,w_2|w_3)} & X_4 \end{tikzcd}
\end{matrix}
$
\\
&  \\
\hline
\end{tabular}
\end{center}
\caption{Sequence of blowups for crepant resolutions used in the paper.\label{Table.Blowups} }
\end{table}

The class of the fibral divisors are given as follows. 

\begin{table}[htb]
\begin{center}
\begin{tabular}{|c||c|c||c|c|c|c|}
\hline
&  $D_0^{\text{s}}$ &$D_1^{\text{s}}$ & $D_0^t$ & $D_1^t$ & $D_2^t$& $D_3^t$   \\
\hline
\hline
$(\text{SU($2$)}\times \text{Sp($4$)})/\mathbb{Z}_2$ &\multirow{2}{*}{$S-E_1$} & \multirow{2}{*}{$E_1$} &\multirow{2}{*}{$W_0-W_1$} &\multirow{2}{*}{$W_1-W_2$} & \multirow{2}{*}{$W_2$} &  \\
$\text{SU($2$)}\times \text{Sp($4$)}$ & & & & & & \\ 
\hline
$(\text{SU($2$)}\times \text{SU($4$)})/\mathbb{Z}_2$ & \multirow{2}{*}{$S-E_1$}& \multirow{2}{*}{$E_1$} & \multirow{2}{*}{$W_0-W_1$}& \multirow{2}{*}{$W_1-W_2$}& \multirow{2}{*}{$W_2-W_3$}& \multirow{2}{*}{$W_3$} \\
$\text{SU($2$)}\times \text{SU($4$)}$ & & & & & & \\
\hline
\end{tabular}
\end{center}
\caption{Class of the fibral divisors}
\label{tb:FibralDiv}
\end{table}

\begin{figure}
\centering{\includegraphics[scale=.85]{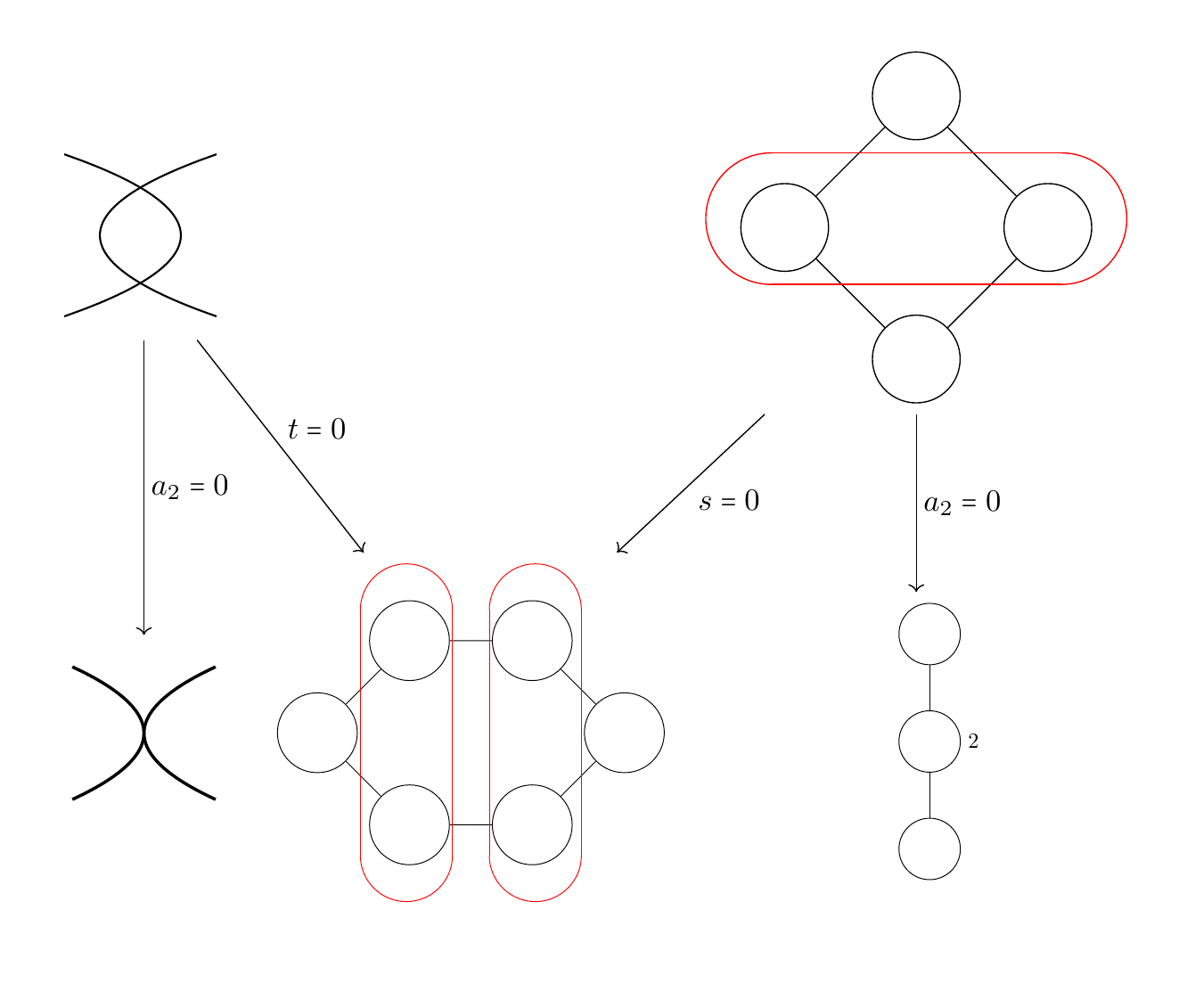}}
\caption{This is the fiber structure of $G=(\text{SU($2$)}\times \text{Sp($4$)})/\mathbb{Z}_2$ uptil codimension two for the Calabi-Yau threefolds. \label{FS_Nonsplit_Z2_CY}}		
\vspace{1cm}				
\centering{\includegraphics[scale=1]{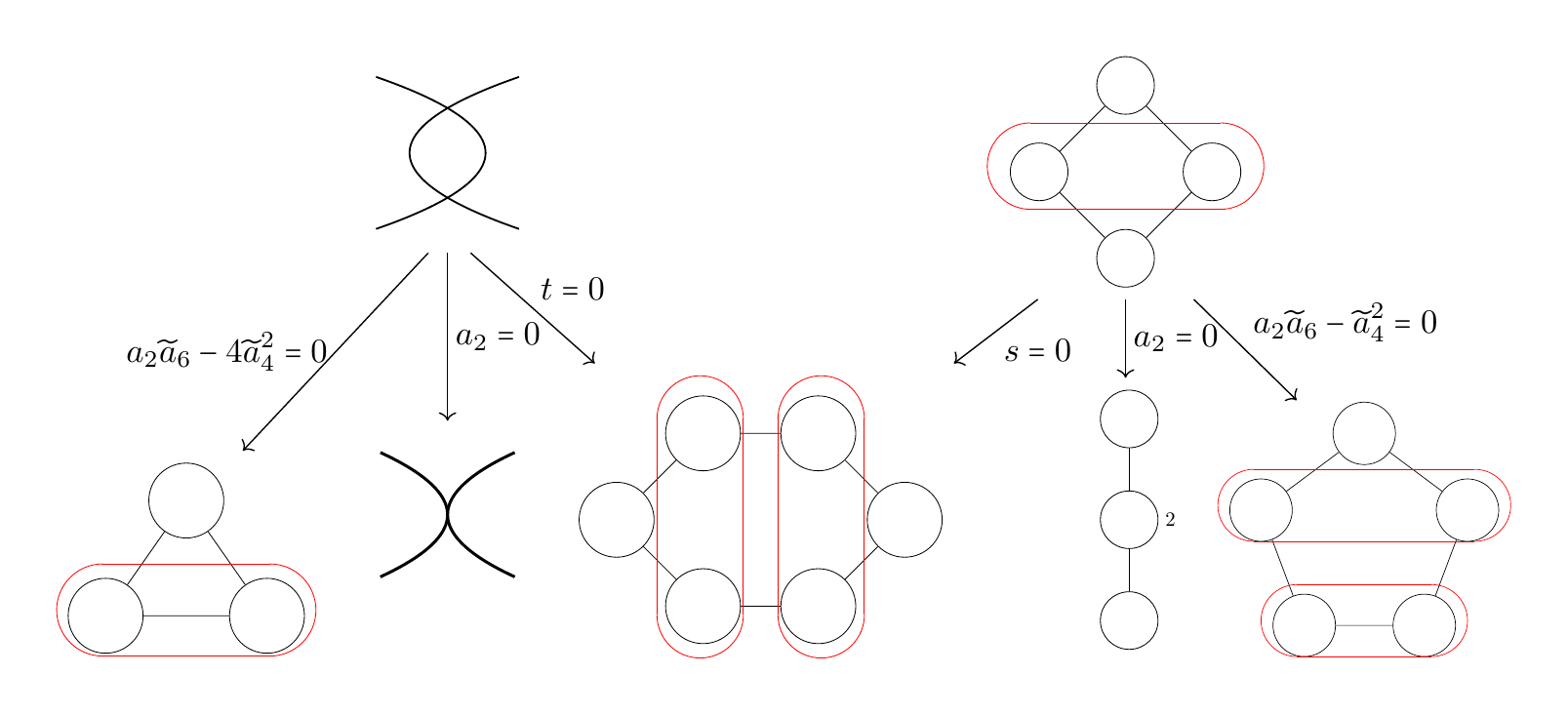}}				
\caption{This is the fiber structure of $G=\text{SU($2$)}\times \text{Sp($4$)}$  uptil codimension two for the Calabi-Yau threefolds. \label{FS_Nonsplit_NoZ2_CY} }
\end{figure}

\begin{figure}
\centering{\includegraphics[scale=.85]{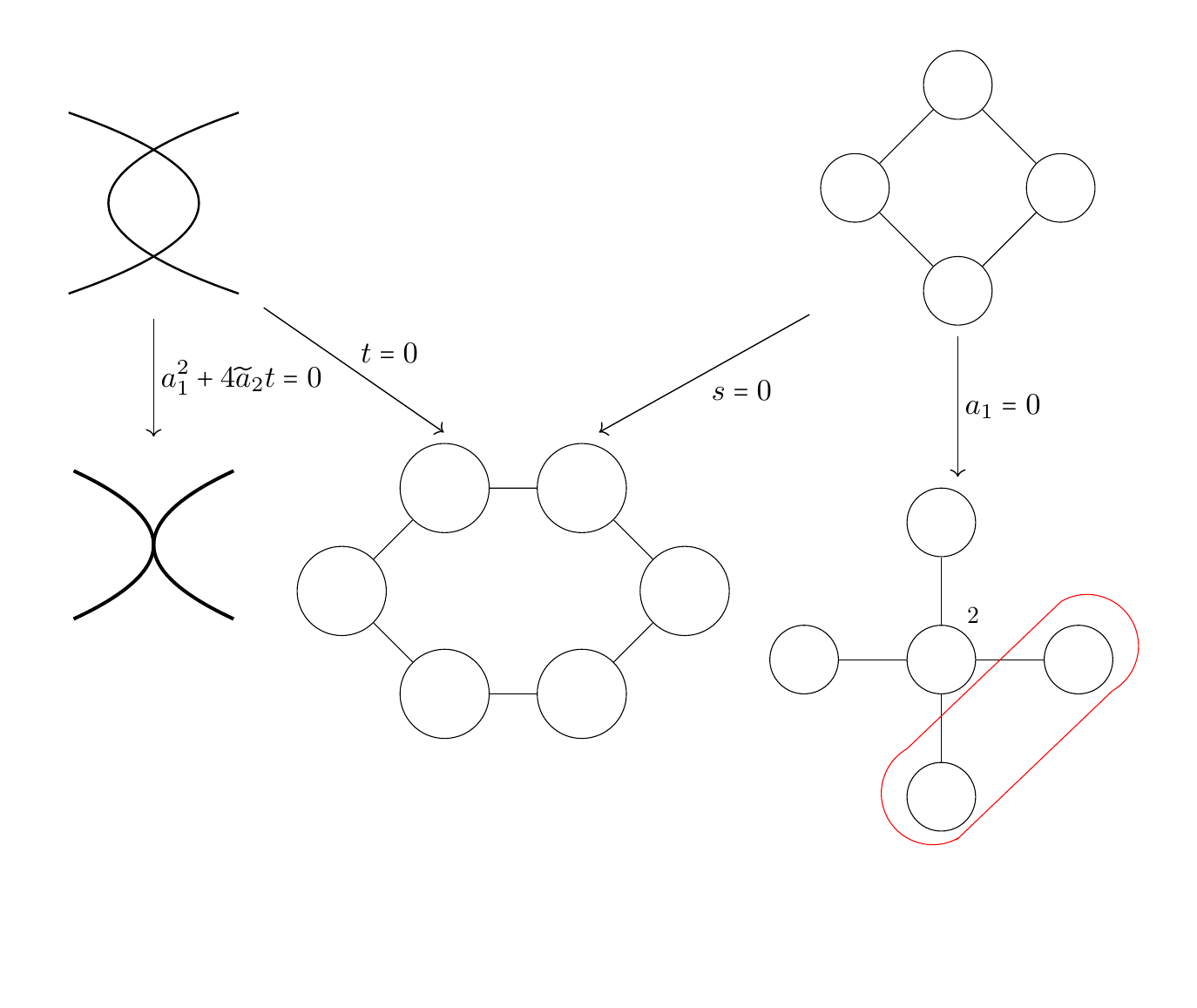}}
\caption{This is the fiber structure of $G=(\text{SU($2$)}\times \text{SU($4$)})/\mathbb{Z}_2$  uptil codimension two for the Calabi-Yau threefolds.   \label{FS_Split_Z2_CY}}
\vspace{.5cm}
\centering{\includegraphics[scale=1]{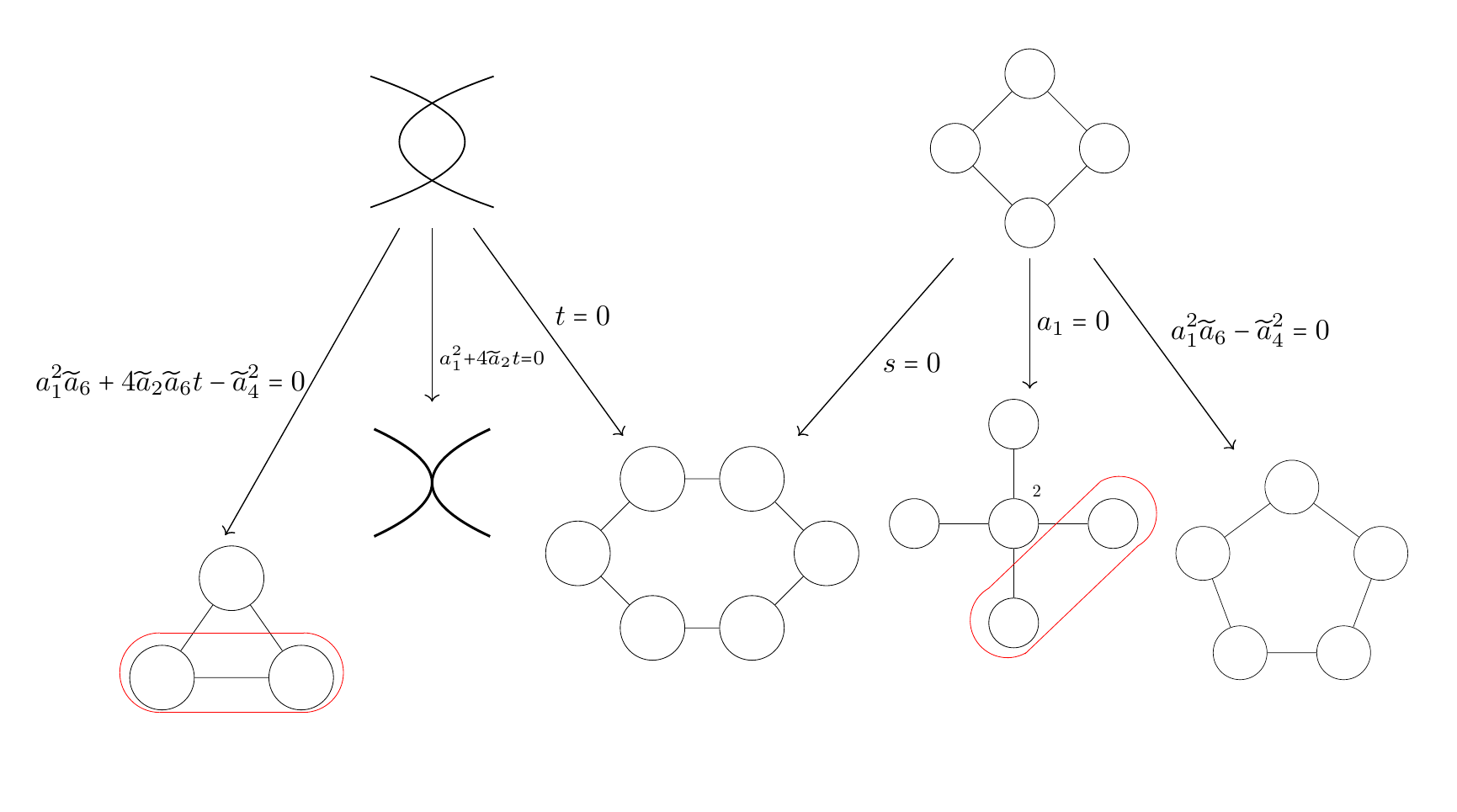}}
\caption{This is the fiber structure of $G=\text{SU($2$)}\times \text{SU($4$)}$  uptil codimension two for the Calabi-Yau threefolds. \label{FS_Split_NoZ2_CY}}
\end{figure}

\subsection{Euler characteristics} \label{eulerchar}

Using $p$-adic integration and the Weil conjecture, Batyrev proved the following theorem.

\begin{thm}[Batyrev, \cite{Batyrev.Betti}]
\label{thm:Batyrev}
Let $X$ and $Y$ be irreducible birational smooth $n$-dimensional projective algebraic varieties 
over $\mathbb{C}$. Assume that there exists a birational rational map $\varphi: X   - \rightarrow Y$ that does not 
change the canonical class. Then $X$ and $Y$ have the same Betti numbers. 
\end{thm}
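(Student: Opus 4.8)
The plan is to run the $p$-adic integration argument: reduce the whole situation modulo a prime, prove that the two reductions have the same number of points over every finite field, and then use the Weil conjectures together with standard comparison theorems to recover the complex Betti numbers.

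First I would get rid of the birational map $\varphi$ in favour of a common resolution. By Hironaka there is a smooth projective variety $W$ with birational morphisms $f\colon W\to X$ and $g\colon W\to Y$, compatible with $\varphi$, whose exceptional loci are simple normal crossings divisors. Writing $K_W=f^\ast K_X+\sum_i a_i E_i=g^\ast K_Y+\sum_j b_j F_j$ with $\sum_i a_i E_i$, $\sum_j b_j F_j$ the relative canonical (discrepancy) divisors, the hypothesis that $\varphi$ does not change the canonical class translates exactly into $f^\ast K_X=g^\ast K_Y$ in $\operatorname{Pic}(W)\otimes\mathbb{Q}$, which forces $\sum_i a_i E_i=\sum_j b_j F_j$ as divisors on $W$. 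I would then spread $X$, $Y$, $W$, the morphisms $f,g$, and the normal crossings divisors out over a finitely generated $\mathbb{Z}$-subalgebra $R\subset\mathbb{C}$, obtaining smooth proper $R$-schemes $\mathcal{X},\mathcal{Y},\mathcal{W}$; for all but finitely many closed points of $\operatorname{Spec}R$ the reduction over the residue field $\mathbb{F}_q$ inherits all these properties, including the equality of discrepancy divisors. Fix such a point and let $\mathcal{O}$ be the corresponding complete local ring, a complete discrete valuation ring with residue field $\mathbb{F}_q$ and $p$-adic fraction field.

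The heart of the proof is $p$-adic integration. For a smooth proper $\mathcal{O}$-scheme $\mathcal{V}$ of relative dimension $n$ there is a canonical measure on $\mathcal{V}(\mathcal{O})=\mathcal{V}(\operatorname{Frac}\mathcal{O})$ of total mass $\#\mathcal{V}(\mathbb{F}_q)/q^{n}$, because reduction $\mathcal{V}(\mathcal{O})\twoheadrightarrow\mathcal{V}(\mathbb{F}_q)$ has fibres of measure $q^{-n}$. Applying the $p$-adic change-of-variables formula to $f\colon\mathcal{W}\to\mathcal{X}$, whose Jacobian ideal is locally the ideal of the relative canonical divisor $\sum_i a_i\mathcal{E}_i$, and integrating the constant function $1$ yields
\[
\frac{\#\mathcal{X}(\mathbb{F}_q)}{q^{n}}
=\int_{\mathcal{W}(\mathcal{O})} q^{-\sum_i a_i\,\operatorname{ord}_{\mathcal{E}_i}(w)}\,d\mu_{\mathcal{W}}(w),
\]
and similarly for $g$ with $\sum_j b_j\mathcal{F}_j$ in the exponent. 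Since $\sum_i a_i\mathcal{E}_i=\sum_j b_j\mathcal{F}_j$, the two integrals coincide, so $\#\mathcal{X}(\mathbb{F}_q)=\#\mathcal{Y}(\mathbb{F}_q)$; repeating over the unramified extensions of $\mathcal{O}$ gives $\#\mathcal{X}(\mathbb{F}_{q^{r}})=\#\mathcal{Y}(\mathbb{F}_{q^{r}})$ for every $r\ge1$.

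Finally I would translate point counts into cohomology. By the Grothendieck--Lefschetz trace formula, $\#\mathcal{X}(\mathbb{F}_{q^{r}})$ is the alternating sum of the traces of $\operatorname{Frob}^{r}$ on the $\ell$-adic cohomology $H^{i}(\overline{\mathcal{X}_{\mathbb{F}_q}},\mathbb{Q}_{\ell})$; knowing these for all $r$ determines the zeta function, and Deligne's purity theorem separates the factors by weight, so the $\ell$-adic Betti numbers of the two reductions agree in each degree. Smooth proper base change and the Artin comparison theorem then identify these with the topological Betti numbers of $X$ and $Y$, which therefore coincide. I expect the most delicate point to be the $p$-adic change-of-variables step — checking that the Jacobian of a resolution is cut out precisely by its relative canonical divisor, and controlling the measure-zero locus where $f$ fails to be bijective on $\mathcal{O}$-points — together with the spreading-out bookkeeping needed to guarantee that the equality of discrepancy divisors genuinely survives reduction modulo $\mathfrak{p}$ for a cofinite set of primes.
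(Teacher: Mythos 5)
Your proposal is a faithful reconstruction of Batyrev's original argument: pass to a common resolution where the hypothesis becomes equality of the relative canonical (discrepancy) divisors, spread out over a finitely generated $\mathbb{Z}$-algebra, use $p$-adic integration and the change-of-variables formula to equate point counts over all finite fields, and then invoke the Grothendieck--Lefschetz trace formula, Deligne's purity, smooth proper base change, and the Artin comparison theorem to recover equal Betti numbers. The paper does not reprove this result --- it cites \cite{Batyrev.Betti} and describes the proof exactly as ``using $p$-adic integration and the Weil conjecture'' --- so your route coincides with the one the paper relies on.
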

 Batyrev's result was strongly inspired by string dualities, in particular by the work of Dixon, Harvey, Vafa, and Witten \cite{Dixon:1986jc}. 
As a direct consequence of Batyrev's theorem, the Euler characteristic of a crepant resolution of a variety with Gorenstein canonical singularities is independent on the choice of resolution. 
We identify the Euler characteristic as the degree  of the total  (homological) Chern class of a crepant resolution $f: \widetilde{Y }\longrightarrow Y$ of a Weierstrass model $Y\longrightarrow B$:
$$
\chi(\widetilde{Y})=\int c(\widetilde{Y}).
$$
We then use the birational invariance of the degree under the pushfoward to express the Euler characteristic as a class in the Chow ring of the projective bundle $X_0$. We subsequently push this class forward to the base to obtain a rational function depending upon only the total Chern class of the base $c(B)$, the first Chern class $c_1(\mathscr L)$, and the class $S$ of the divisor in $B$:
$$
\chi(\widetilde{Y})=\int_B \pi_* f_* c(\widetilde{Y}).
$$
In view of Theorem \ref{thm:Batyrev}, this Euler characteristic is independent of the choice of a crepant resolution. 

 We compute the Euler characteristic for each model by considering a particular crepant resolution as listed in Table \ref{Table.Blowups}. For the models with  Mordell-Weil group $\mathbb{Z}_2$, the divisors $S$ and $T$ satisfy the following linear relation since $a_4= s t^2$: 
\begin{equation}
4L=S+2T.
\label{eq:ST}
\end{equation}
The generating function of the Euler characteristics are presented in Table \ref{tb:EulerGen}, which produce the Euler characteristics for elliptic threefolds and fourfolds as listed in \ref{tb:Elliptic3Euler} and \ref{tb:Elliptic4Euler}. The Calabi-Yau condition imposes $L=-K$. For each model, we present the Euler characteristic of Calabi-Yau threefolds and fourfolds respectively in Table \ref{tb:CY3Euler} and \ref{tb:CY4Euler}.

\begin{table}[H]
\begin{center}
\begin{tabular}{|c|c|c|}
\hline
Models & Generating Function \\
 \hline
$(\text{SU($2$)}\times \text{Sp($4$)})/\mathbb{Z}_2$ 
 & \Large{$\frac{4 \left(2 L^2 (5 T+3)+L (3-5 (T-1) T)-3 T^2\right)}{(2 L+1) (T+1) (4 L-2 T+1)}c[B]$} \\
\hline
$\text{SU($2$)}\times \text{Sp($4$)}$  & \Large{$\frac{2 \left(T (-6 L^2 (5 S+4)+L (2 S-3) (5 S+4)+S (7 S+8))\right)}{(2 L+1) (S+1) (T+1) (-6 L+2 S+4 T-1)} c[B]$} \\
& \Large{$+\frac{2 \left(3 (2 L+1) (S^2-L (3 S+2))+2 T^2 (2 L (5 S+4)+7 S+5)\right)}{(2 L+1) (S+1) (T+1) (-6 L+2 S+4 T-1)} c[B]$} \\
\hline
$(\text{SU($2$)}\times \text{SU($4$)})/\mathbb{Z}_2$ & \Large{$\frac{12 \left(2 L T+L-T^2\right)}{(T+1) (4 L-2 T+1)}c[B]$} \\
\hline
 & \Large{$\frac{4T^3 (7 S+5) +4T^2 \left(14 S^2-7 (7 S+5) L+6 S-5\right)}{(S+1) (T+1) (S-4 L+2 T-1) (2 S-6 L+4 T-1)} c[B]$} \\
$\text{SU($2$)}\times \text{SU($4$)}$  & \Large{$+\frac{2T \left(12 (7 S+5) L^2+(2-S (49 S+43)) L+(7 S (S+1)-8) S\right)}{(S+1) (T+1) (S-4 L+2 T-1) (2 S-6 L+4 T-1)} c[B]$} \\
 & \Large{$+\frac{6 (S-4 L-1) \left(S^2-3 S L-2 L\right)}{(S+1) (T+1) (S-4 L+2 T-1) (2 S-6 L+4 T-1)} c[B]$} \\
\hline
\end{tabular}
\caption{Generating function for Euler Characteristics}
\label{tb:EulerGen}
\end{center}
\end{table}

\vspace{-0.5cm}
\begin{table}[H]
\begin{center}
\begin{tabular}{|c|c|}
\hline
Models & Euler Characteristics \\
 \hline
$(\text{SU($2$)}\times \text{Sp($4$)})/\mathbb{Z}_2$ & $-4 (9 K^2+8 K\cdot T+3 T^2)$ \\
\hline
$\text{SU($2$)}\times \text{Sp($4$)}$  & $-2 (30 K^2+15 K\cdot S+30 K\cdot T+3 S^2+8 S\cdot T+10 T^2)$ \\
\hline
$(\text{SU($2$)}\times \text{SU($4$)})/\mathbb{Z}_2$  & $-12 \left(3 K^2+3 K\cdot T+T^2\right)$ \\
\hline
$\text{SU($2$)}\times \text{SU($4$)}$  & $-2 \left(30 K^2+15 K\cdot S+32 K\cdot T+3 S^2+8 S\cdot T+10 T^2\right)$ \\
\hline
\end{tabular}
\caption{Calabi-Yau threefolds}
\label{tb:CY3Euler}
\end{center}
\end{table}
\vspace{-0.5cm}
\begin{table}[H]
\begin{center}
\begin{tabular}{|c|c|}
\hline
Models & Euler Characteristics \\
\hline
$(\text{SU($2$)}\times \text{Sp($4$)})/\mathbb{Z}_2$  & $-12 \left(c_2 K+12 K^3+16 K^2 T+8 K T^2+T^3\right)$ \\
\hline
\multirow{ 2}{*}{$\text{SU($2$)}\times \text{Sp($4$)}$}  & $-6 \left(2 c_2 K+60 K^3+49 K^2 S+98 K^2 T+14 K S^2\right.$ \\
 & $\left.+56 K S T +56 K T^2+S^3+8 S^2 T+16 S T^2+10 T^3\right)$ \\
\hline
$(\text{SU($2$)}\times \text{SU($4$)})/\mathbb{Z}_2$  & $-12 \left(c_2 K+12 K^3+17 K^2 T+8 K T^2+T^3\right)$ \\
\hline
$\text{SU($2$)}\times \text{SU($4$)}$  & $-6 \left(2 c_2 K+60 K^3+49 K^2 S+100 K^2 T+14 K S^2 \right. $ \\
 & $\left. +56 K S T+56 K T^2+S^3+8 S^2 T+16 S T^2+10 T^3\right) $ \\
\hline
\end{tabular}
\caption{Calabi-Yau fourfolds}
\label{tb:CY4Euler}
\end{center}
\end{table}
\vspace{-0.5cm}

\begin{table}[H]
\begin{center}
\begin{tabular}{|c|c|}
\hline
Models & Euler Characteristics \\
\hline
$(\text{SU($2$)}\times \text{Sp($4$)})/\mathbb{Z}_2$  & $4 \left(3 c_1 L-12 L^2+8 L T-3 T^2\right)$ \\
\hline
$\text{SU($2$)}\times \text{Sp($4$)}$  & $2 \left(6 c_1 L-36 L^2+15 L S+30 L T-3 S^2-8 S T-10 T^2\right)$ \\
\hline
$(\text{SU($2$)}\times \text{SU($4$)})/\mathbb{Z}_2$  & $12 \left(c_1 L-4 L^2+3 L T-T^2\right)$ \\
\hline
$\text{SU($2$)}\times \text{SU($4$)}$  & $2 \left(6 c_1 L-36 L^2+15 L S+32 L T-3 S^2-8 S T-10 T^2\right)$ \\
\hline
\end{tabular}
\caption{Elliptic threefolds}
\label{tb:Elliptic3Euler}
\end{center}
\end{table}
\vspace{-0.5cm}
\begin{table}[H]
\begin{center}
\begin{tabular}{|c|c|}
\hline
Models & Euler Characteristics \\
\hline
$(\text{SU($2$)}\times \text{Sp($4$)})/\mathbb{Z}_2$  & $4 \left(-12 c_1 L^2+8 c_1 L T-3 c_1 T^2+3 c_2 L+48 L^3 \right.$\\
& $\left. -56 L^2 T+27 L T^2-3 T^3\right)$ \\
\hline
& $2 \left(-36 c_1 L^2+15 c_1 L S+30 c_1 L T-3 c_1 S^2-8 c_1 S T \right.$ \\
$\text{SU($2$)}\times \text{Sp($4$)}$ & $\left.-10 c_1 T^2+6 c_2 L+216 L^3-162 L^2 S-324 L^2 T+45 L S^2T\right.$\\
& $\left.+176 L S +178 L T^2-3 S^3-24 S^2 T-48 S T^2-30 T^3\right)$ \\
\hline
$(\text{SU($2$)}\times \text{SU($4$)})/\mathbb{Z}_2$  & $12 \left(-4 c_1 L^2+3 c_1 L T-c_1 T^2+c_2 L+16 L^3 \right.$\\
& $\left.-20 L^2 T+9 L T^2-T^3\right)$ \\
\hline
  & $2 \left(-36 c_1 L^2+15 c_1 L S+32 c_1 L T-3 c_1 S^2-8 c_1 S T\right.$ \\
$\text{SU($2$)}\times \text{SU($4$)}$ & $\left.-10 c_1 T^2+6 c_2 L+216 L^3-162 L^2 S-332 L^2 T+45 L S^2\right.$ \\
 & $\left. +176 L S T+178 L T^2-3 S^3-24 S^2 T-48 S T^2-30 T^3\right) $ \\
\hline
\end{tabular}
\caption{Elliptic fourfolds}
\label{tb:Elliptic4Euler}
\end{center}
\end{table}
\subsection{Hodge numbers for Calabi-Yau elliptic threefolds}
 Using motivitic integration, Kontsevich shows in his famous ``String Cohomology'' Lecture at Orsay that birational equivalent Calabi-Yau varieties have the same class in the completed Grothendieck ring \cite{Kontsevich.Orsay}. 
Hence, birational equivalent Calabi-Yau varieties have the same  Hodge-Deligne polynomial, Hodge numbers, and Euler characteristic. 
 In this section, we compute the Hodge numbers of crepant resolutions of Weierstrass models in the case of  Calabi-Yau threefolds.

\begin{thm}[Kontsevich, (see \cite{Kontsevich.Orsay})]
Let $X$ and $Y$ be birational equivalent Calabi-Yau varieties over the complex numbers. Then $X$ and $Y$ have the same Hodge numbers. 
\end{thm}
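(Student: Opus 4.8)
The plan is to prove the statement by Kontsevich's method of motivic integration, reducing the equality of Hodge numbers to an equality of classes in a suitable completion of the Grothendieck ring of complex varieties. Let $K_0(\mathrm{Var}_{\mathbb{C}})$ be the Grothendieck ring, generated by isomorphism classes $[V]$ of complex varieties modulo the scissor relations $[V]=[V\setminus W]+[W]$ for closed $W\subset V$, with product $[V]\cdot[V']=[V\times V']$. Writing $\mathbb{L}=[\mathbb{A}^1_{\mathbb{C}}]$, one inverts $\mathbb{L}$ and completes with respect to the filtration by dimension to obtain a ring $\widehat{\mathcal{M}}$. The strategy has two halves: first to show $[X]=[Y]$ in $\widehat{\mathcal{M}}$, and then to transport this equality through the Hodge--Deligne realization to the individual Hodge numbers.

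For the first half I would work with the arc space $J_\infty(V)$ of a smooth variety $V$, on which Kontsevich's motivic measure $\mu_V$ takes values in $\widehat{\mathcal{M}}$. The central tool is the change of variables formula: for a proper birational morphism $f:V'\to V$ of smooth varieties with relative canonical divisor $K_{V'/V}$, one has $\int_{J_\infty(V)}\mathbb{L}^{-\mathrm{ord}\,D}\,d\mu_V=\int_{J_\infty(V')}\mathbb{L}^{-\mathrm{ord}(f^*D+K_{V'/V})}\,d\mu_{V'}$, the motivic analogue of the $p$-adic Jacobian transformation. Next I would exploit the Calabi--Yau hypothesis: a birational map between smooth projective varieties with trivial (more generally nef) canonical class is automatically $K$-equivalent. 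Resolving the graph of the birational map by Hironaka gives a smooth $W$ with proper birational morphisms $p:W\to X$ and $q:W\to Y$ such that $p^*K_X=q^*K_Y$, hence $K_{W/X}=K_{W/Y}$. Applying the transformation formula to both $p$ and $q$ with $D=0$, the two resulting integrals over $J_\infty(W)$ coincide because the discrepancy divisors agree. Since for a smooth variety $V$ of dimension $d$ the full volume is $\mu_V(J_\infty(V))=[V]\,\mathbb{L}^{-d}$, and $\dim X=\dim Y$, this yields $[X]=[Y]$ in $\widehat{\mathcal{M}}$.

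For the second half I would invoke that the Hodge--Deligne ($E$-polynomial) map $E(V;u,v)=\sum_{p,q}\Bigl(\sum_k(-1)^k h^{p,q}\bigl(H^k_c(V,\mathbb{C})\bigr)\Bigr)u^p v^q$ is a ring homomorphism $E:K_0(\mathrm{Var}_{\mathbb{C}})\to\mathbb{Z}[u,v]$, which extends continuously to $\widehat{\mathcal{M}}$ after inverting $E(\mathbb{L})=uv$. Thus $[X]=[Y]$ forces $E(X;u,v)=E(Y;u,v)$. Finally, because $X$ and $Y$ are smooth and projective, each $H^k$ carries a pure Hodge structure of weight $k$, so there is no cancellation in the alternating sum and $E(X;u,v)$ recovers $\sum_{p,q}(-1)^{p+q}h^{p,q}(X)\,u^p v^q$; matching coefficients gives $h^{p,q}(X)=h^{p,q}(Y)$ for all $p,q$.

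The main obstacle is the rigorous construction of the motivic measure and the proof of the change of variables formula: one must make sense of the measurability of cylinder sets in the arc space, control the non-finite geometry through the dimension filtration and the completion $\widehat{\mathcal{M}}$, and justify the Jacobian computation that produces the weight $\mathbb{L}^{-\mathrm{ord}\,K_{V'/V}}$. Once this analytic-geometric machinery is in place, the reduction to $K$-equivalence and the passage through the $E$-polynomial are comparatively formal.
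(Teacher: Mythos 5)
Your proposal is correct and follows exactly the route the paper attributes to Kontsevich: motivic integration and the change-of-variables formula on arc spaces yield $[X]=[Y]$ in the completed Grothendieck ring (using that birational Calabi--Yau varieties are $K$-equivalent), and the Hodge--Deligne realization together with purity for smooth projective varieties then recovers the individual Hodge numbers. The paper itself states the theorem without proof, citing precisely this argument from Kontsevich's Orsay lecture, so there is nothing to add.
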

\begin{rem}
In Kontsevich's theorem, a Calabi-Yau variety is a nonsingular complete projective variety of dimension $d$ with a trivial canonical divisor. 
To compute Hodge numbers in this section, we use the following stronger definition of a Calabi-Yau variety.
\end{rem}

\begin{defn}\label{defn:CY}
A \emph{Calabi-Yau variety} is a smooth compact projective variety $Y$ of dimension $n$ with a trivial canonical class and such that  $H^i(Y,\mathscr{O}_X)=0$ for $1\leq i\leq n-1$.
\end{defn}

 We first recall some basic definitions and relevant classical theorems.  
\begin{thm}[Noether's formula]
 If $B$ is a smooth compact, connected, complex surface with canonical class $K_B$ and Euler number  $c_2$, then
$$
\chi(\mathscr{O}_B)=1-h^{0,1}(B)+h^{0,2}(B),\quad    \chi(\mathscr{O}_B)=\frac{1}{12}(K^2+c_2).
$$
\end{thm}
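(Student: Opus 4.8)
The plan is to derive both identities from the Hirzebruch–Riemann–Roch theorem applied to the structure sheaf $\mathscr{O}_B$ of the surface. First I would recall that for any coherent sheaf $\mathscr{F}$ on a smooth projective variety, the holomorphic Euler characteristic is defined by $\chi(\mathscr{F})=\sum_i (-1)^i \dim H^i(B,\mathscr{F})$. For $\mathscr{F}=\mathscr{O}_B$ on a surface this reads $\chi(\mathscr{O}_B)=h^{0,0}(B)-h^{0,1}(B)+h^{0,2}(B)$, and since $B$ is connected we have $h^{0,0}(B)=1$, which gives the first identity $\chi(\mathscr{O}_B)=1-h^{0,1}(B)+h^{0,2}(B)$. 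Here I am using the standard Dolbeault identification $H^i(B,\mathscr{O}_B)\cong H^{0,i}(B)$, so that $\dim H^i(B,\mathscr{O}_B)=h^{0,i}(B)$.

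For the second identity I would invoke Hirzebruch–Riemann–Roch: $\chi(\mathscr{O}_B)=\int_B \operatorname{td}(TB)$, where $\operatorname{td}(TB)$ is the Todd class of the tangent bundle. On a surface the Todd class truncates to $\operatorname{td}(TB)=1+\tfrac{1}{2}c_1(TB)+\tfrac{1}{12}\bigl(c_1(TB)^2+c_2(TB)\bigr)$, so only the degree-two part contributes to the integral:
\begin{equation}
\chi(\mathscr{O}_B)=\frac{1}{12}\int_B \bigl(c_1(TB)^2+c_2(TB)\bigr).
\end{equation}
Writing $c_1(TB)=-K_B$ (the anticanonical class) so that $c_1(TB)^2=K_B^2=K^2$, and identifying $\int_B c_2(TB)=c_2$ with the topological Euler number of $B$, this becomes $\chi(\mathscr{O}_B)=\tfrac{1}{12}(K^2+c_2)$, as claimed.

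The only genuinely nontrivial input is the Hirzebruch–Riemann–Roch theorem itself, which I would cite as a classical result rather than prove; everything else is bookkeeping with Dolbeault cohomology, the connectedness of $B$, and the truncation of the Todd class in complex dimension two. The main point to be careful about is the sign convention relating $c_1(TB)$ and the canonical class $K_B$, namely $K_B=-c_1(TB)=c_1(\Omega^1_B)$, so that $K_B^2$ and $c_1(TB)^2$ agree and the stated formula $\chi(\mathscr{O}_B)=\tfrac{1}{12}(K^2+c_2)$ comes out with the correct sign.
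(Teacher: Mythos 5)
Your proof is correct and is the standard derivation; the paper states Noether's formula as a classical fact without proof, so there is nothing to compare against. The only point worth flagging is that you invoke Hirzebruch--Riemann--Roch ``on a smooth projective variety,'' whereas the statement only assumes $B$ is a compact complex surface; for the general compact (possibly non-K\"ahler, non-algebraic) case one should appeal to the Atiyah--Singer index theorem for the Dolbeault complex, though in the context of this paper the base is always projective, so your hypothesis is harmless.
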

When $B$ is a smooth compact rational surface, we have a simple  expression of   $h^{1,1}(B)$ as a function of $K^2$ using the following lemma. 
\begin{lem}\label{lem:NoetherRational}
Let $B$ be a smooth compact rational surface with  canonical class $K$. Then 
\begin{equation}
h^{1,1}(B)=10-K^2.
\end{equation}
\end{lem}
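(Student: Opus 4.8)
The plan is to combine Noether's formula with the topological fact that a rational surface has vanishing irregularity and geometric genus, and with the relation between the Euler number $c_2$ and the Betti/Hodge numbers on a surface. First I would recall that for a smooth compact rational surface $B$ one has $h^{0,1}(B) = h^{1,0}(B) = 0$ and $h^{0,2}(B) = h^{2,0}(B) = 0$; indeed these are birational invariants of smooth projective surfaces, so they vanish for $\mathbb{P}^2$ (or any rational surface) as they do for $\mathbb{P}^2$ itself. Feeding this into the first equation of Noether's formula gives $\chi(\mathscr{O}_B) = 1 - 0 + 0 = 1$, and then the second equation of Noether's formula yields $\frac{1}{12}(K^2 + c_2) = 1$, i.e.
\begin{equation}
c_2 = 12 - K^2 .
\end{equation}

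Next I would use the Hodge decomposition to express $c_2$, which equals the topological Euler characteristic, in terms of the Hodge numbers of the surface:
\begin{equation}
c_2 = \sum_{p,q} (-1)^{p+q} h^{p,q}(B) = 2 - 2\bigl(h^{1,0} + h^{0,1}\bigr) + \bigl(h^{2,0} + h^{1,1} + h^{0,2}\bigr) + 2 .
\end{equation}
Here the two extra $+1$'s come from $h^{0,0} = h^{2,2} = 1$ and the factor $2$ in front of the middle-degree term uses $h^{1,1}$ together with... more carefully, $c_2 = h^{0,0} - h^{1,0} - h^{0,1} + h^{2,0} + h^{1,1} + h^{0,2} - h^{2,1} - h^{1,2} + h^{2,2}$, and with $h^{1,0}=h^{0,1}=h^{2,0}=h^{0,2}=h^{2,1}=h^{1,2}=0$ and $h^{0,0}=h^{2,2}=1$ this collapses to $c_2 = 2 + h^{1,1}(B)$. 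Combining with $c_2 = 12 - K^2$ gives $h^{1,1}(B) = 10 - K^2$, which is the claim.

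The only real subtlety — the step I would flag as the main obstacle, though it is standard — is justifying the vanishing $h^{0,1}(B) = h^{0,2}(B) = 0$. This is where rationality is essential: it is not enough that $B$ be, say, simply connected or have trivial first Betti number a priori; one invokes that $q(B) := h^{0,1}(B)$ and $p_g(B) := h^{0,2}(B)$ are birational invariants of smooth projective surfaces (Castelnuovo's criterion territory), hence computed on $\mathbb{P}^2$ where they plainly vanish. Everything else is bookkeeping with Noether's formula (already stated above) and the Hodge decomposition of $H^\ast(B,\mathbb{C})$. I would present the argument in the order: (i) vanishing of $q$ and $p_g$ for rational surfaces; (ii) $\chi(\mathscr{O}_B)=1$ and hence $c_2 = 12 - K^2$ from Noether; (iii) $c_2 = 2 + h^{1,1}(B)$ from Hodge theory; (iv) combine to get $h^{1,1}(B) = 10 - K^2$.
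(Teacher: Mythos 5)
Your proposal is correct and follows essentially the same route as the paper's proof: vanishing of $h^{0,1}$ and $h^{0,2}$ by rationality, the identity $c_2 = 2 + h^{1,1}(B)$ from the Hodge decomposition, and Noether's formula to conclude. You simply spell out the intermediate steps (in particular $\chi(\mathscr{O}_B)=1$ and $c_2 = 12 - K^2$) that the paper leaves implicit.
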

\begin{proof}
Since $B$ is a rational surface, $h^{0,1}(B)=h^{0,2}(B)=0$. Hence $c_2=2+h^{1,1}(B)$ and the  lemma follows from  Noether's formula. 
\end{proof}

We now compute  $h^{1,1}(Y)$ using the Shioda-Tate-Wazir theorem  \cite[Corollary 4.1]{Wazir}. 
\begin{thm}\label{Thm:STW2} Let $Y$ be a smooth Calabi-Yau threefold  elliptically fibered over a smooth variety $B$ with Mordell-Weil group of rank zero. Then,
\begin{equation}\nonumber
h^{1,1}(Y)=h^{1,1}(B)+f+1, \quad h^{2,1}(Y)=h^{1,1}(Y)-\frac{1}{2}\chi(Y),
\end{equation}
where $f$ is the number of geometrically irreducible fibral divisors not touching the zero section. In particular, if $Y$ is a $G$-model with $G$ being a semi-simple group, then $f$ is the rank of $G$. 
\end{thm}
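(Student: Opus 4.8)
The plan is to reduce the first equality to the Shioda--Tate--Wazir description of the Néron--Severi lattice and the second to the Hodge-theoretic shape of the Euler characteristic of a Calabi--Yau threefold, handling the ``in particular'' clause by a component-by-component analysis of Kodaira fibres. First I would record that the hypotheses on $Y$ force $h^{1,0}(Y)=h^{2,0}(Y)=0$: since $H^i(Y,\mathscr{O}_Y)=0$ for $i=1,2$ we have $h^{0,1}(Y)=h^{0,2}(Y)=0$, hence $h^{1,0}(Y)=h^{2,0}(Y)=0$ by conjugation, and the Leray spectral sequence for $\varphi$ (whose fibres are curves, so $R^2\varphi_*\mathscr{O}_Y=0$) forces $h^{1,0}(B)=h^{2,0}(B)=0$ as well. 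It follows that $H^2(-,\mathbb{C})=H^{1,1}(-)$ for both $Y$ and $B$, so by the Lefschetz $(1,1)$-theorem $h^{1,1}(Y)=\operatorname{rk}\mathrm{NS}(Y)$ and $h^{1,1}(B)=\operatorname{rk}\mathrm{NS}(B)$.

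Next I would invoke the Shioda--Tate--Wazir theorem \cite[Corollary 4.1]{Wazir}: for an elliptic fibration $\varphi\colon Y\to B$ with zero section $Z$, one has $\operatorname{rk}\mathrm{NS}(Y)=\operatorname{rk}\mathrm{NS}(B)+\operatorname{rk}\mathrm{MW}(Y)+f+1$, where $f$ is the number of (geometrically irreducible) fibral divisors not meeting $Z$; here $\varphi^{*}\mathrm{NS}(B)$ accounts for the first term, the class $[Z]$ for the $+1$, the fibral divisors disjoint from $Z$ for $f$, and the sections of positive order for the Mordell--Weil summand. Since $\operatorname{rk}\mathrm{MW}(Y)=0$ by hypothesis, this reads $\operatorname{rk}\mathrm{NS}(Y)=\operatorname{rk}\mathrm{NS}(B)+f+1$, which by the previous paragraph is exactly $h^{1,1}(Y)=h^{1,1}(B)+f+1$.

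For $h^{2,1}(Y)=h^{1,1}(Y)-\tfrac12\chi(Y)$ I would expand the Euler characteristic in Betti and hence Hodge numbers: a smooth Calabi--Yau threefold with $h^{1,0}=h^{2,0}=0$ has $b_0=b_6=1$, $b_1=b_5=0$, $b_2=b_4=h^{1,1}(Y)$ and $b_3=2h^{3,0}(Y)+2h^{2,1}(Y)=2+2h^{2,1}(Y)$, so $\chi(Y)=\sum_i(-1)^i b_i=2\bigl(h^{1,1}(Y)-h^{2,1}(Y)\bigr)$, and rearranging gives the claim.

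Finally, for the ``in particular'' clause I would analyse the fibral divisors of a $G$-model one discriminant component at a time. Over the generic point of an irreducible component $\Delta_i$ of the reduced discriminant carrying a nontrivial algebra $\mathfrak{g}_i$, the fibre is a Kodaira fibre whose dual graph is the affine Dynkin diagram $\widetilde{\mathfrak{g}}_i^{t}$, which has $\operatorname{rank}\mathfrak{g}_i+1$ nodes; exactly one node, the affine one, corresponds to the component met by $Z$, so the remaining components sweep out $\operatorname{rank}\mathfrak{g}_i$ fibral divisors disjoint from $Z$ over $\Delta_i$. In the non-split cases one uses that a crepant resolution is chosen so that these divisors are still irreducible over $\mathbb{C}$, their generic fibre being a single monodromy orbit of $\mathbb{P}^1$'s, so the count is $\operatorname{rank}\mathfrak{g}_i$ rather than the rank of the simply-laced cover; components of $\Delta$ with irreducible generic fibre contribute nothing. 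Summing, $f=\sum_i\operatorname{rank}\mathfrak{g}_i=\operatorname{rank}\mathfrak{g}=\operatorname{rank}G$. The genuinely nontrivial input is the Shioda--Tate--Wazir rank formula, which I would import from \cite{Wazir} rather than reprove; what one must verify by hand is the identification $h^{1,1}=\operatorname{rk}\mathrm{NS}$ on $Y$ and $B$ (where the vanishing $H^2(Y,\mathscr{O}_Y)=0$ enters) and the fibre bookkeeping of this last paragraph, which is also where the non-split / ``geometrically irreducible'' subtlety must be handled with care.
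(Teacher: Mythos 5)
Your proof is correct and follows essentially the same route as the paper, which states this theorem as a direct consequence of the Shioda--Tate--Wazir theorem (Wazir, Corollary 4.1) combined with the standard Hodge theory of a Calabi--Yau threefold ($h^{1,0}=h^{2,0}=0$, $h^{3,0}=1$) and gives no further argument, so your write-up merely supplies the details left implicit, including the correct handling of the non-split (monodromy) count in the ``in particular'' clause. The one step worth tightening is the Leray argument for $h^{2,0}(B)=0$: the vanishing of $R^2\varphi_*\mathscr{O}_Y$ alone does not give it --- you also need $H^0(B,R^1\varphi_*\mathscr{O}_Y)=H^0(B,\mathscr{L}^{-1})=0$, which holds because $\mathscr{L}\cong\mathscr{O}_B(-K_B)$ is a nontrivial effective class for a genuinely singular Calabi--Yau elliptic fibration (and in the paper's applications $B$ is in any case a rational surface).
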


\begin{table}[H]
\begin{center}
\begin{tabular}{|c|c|c|}
\hline
Models & $h^{1,1}(Y)$ & $h^{2,1}(Y)$ \\
\hline
$(\text{SU($2$)}\times \text{Sp($4$)})/\mathbb{Z}_2$  & $14-K^2$ & $17 K^2+16 K\cdot T+6 T^2+14$ \\
\hline
$\text{SU($2$)}\times \text{Sp($4$)}$ & $14-K^2$ & $29 K^2+15 K\cdot S+30 K\cdot T+3 S^2+8 S\cdot T+10 T^2+14$ \\
\hline
$(\text{SU($2$)}\times \text{SU($4$)})/\mathbb{Z}_2$  & $15-K^2$ & $17 K^2+18 K\cdot T+6 T^2+15$ \\
\hline
$\text{SU($2$)}\times \text{SU($4$)}$  & $15-K^2$ & $29 K^2+8 T\cdot (4 K+S)+15 K\cdot S+3 S^2+10 T^2+15$ \\
\hline
\end{tabular}
\label{hodge}
\caption{Hodge Numbers}
\end{center}
\end{table}

\subsection{Hyperplane arrangements}

Let  $\mathfrak{g}$ be a semi-simple Lie algebra and $ \mathbf{R}$ a representation of $\mathfrak{g}$.
The kernel of each  weight $\varpi$ of $\mathbf{R}$ defines a  hyperplane $\varpi^\perp$ through the origin of the Cartan sub-algebra of $\mathfrak{g}$.

\begin{defn}
The hyperplane arrangement I($\mathfrak{g},\mathbf{R}$) is defined inside the dual fundamental Weyl chamber of $\mathfrak{g}$, i.e. the  dual cone of the fundamental Weyl chamber of $\mathfrak{g}$, and its hyperplanes are the set of kernels of  the weights of $\mathbf{R}$. 
\end{defn}
For each $G$-model, we associate the hyperplane arrangement $\mathrm{I}(\mathfrak{g},  \mathbf{R})$ using the representation $\mathbf{R}$ induced by the weights of vertical rational curves produced by degenerations of the generic fiber over codimension-two points of the base. We then study the incidence structure of the hyperplane arrangement I$(\mathfrak{g}, \mathbf{R})$ \cite{EJJN1,EJJN2,G2,F4,Hayashi:2014kca}.

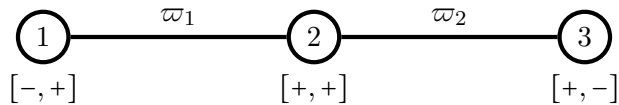
\begin{figure}[H]
\begin{center}
\begin{tikzpicture}[scale=1.2]
\node[draw,ultra thick,circle,label=below:{$[-,+]$}](1) at  (-3,0)   {$1$};
\node[draw,ultra thick,circle,label=below:{$[+,+]$}](2) at (0,0) {$2$};
\node[draw,ultra thick,circle,label=below:{$[+,-]$}](3) at (3,0) {$3$};
\draw[ultra thick] (1) --node[above] {$\varpi_1$} (2) --node[above] {$\varpi_2$} (3);
\end{tikzpicture}\\
\vspace{0.2cm}
\label{I2nsI4nsCham}
\caption{There are three chambers in I$_2^{\text{ns}}+$I$_4^{\text{ns}}$-model with a Mordell-Weil group $\mathbb{Z}_2$. Each chamber is noted as the signs of $[\varpi_1,\varpi_2]$ where $\varpi_1=\phi _1-\psi _1$ and $\varpi_2=\psi _1+\phi _1-\phi _2$. For chamber $1$, $\varpi_1<0 ,\ \varpi_2>0$; for chamber $2$, $\varpi_1>0 ,\ \varpi_2>0$; and for chamber $3$, $\varpi_1>0 ,\ \varpi_2<0$.}
\end{center}
\end{figure}

\begin{figure}[htb]
\begin{center}
\scalebox{1}{
\begin{tikzpicture}[scale=1.2]
\node[draw,ultra thick,circle,label=above:{}](1a) at  (-5.5, -6)   {$1a^-$};
\node[draw,ultra thick,circle,label=above:{}](2a) at (-5, -4) {$2a^-$};
\node[draw,ultra thick,circle,label=above:{}](3a) at (-5.5, -2) {$3a^-$};
\node[draw,ultra thick,circle,label=above:{}](4a) at (-5, 0) {$4a^-$};
\node[draw,ultra thick,circle,label=above:{}](5a) at (-5.5, 2) {$5a^-$};
\node[draw,ultra thick,circle,label=above:{}](1b) at (-2, -6) {$1b^-$};
\node[draw,ultra thick,circle,label=above:{}](2b) at (-2.5,- 4) {$2b^-$};
\node[draw,ultra thick,circle,label=above:{}](3b) at (-2, -2) {$3b^-$};
\node[draw,ultra thick,circle,label=above:{}](4b) at (-2.5, 0)  {$4b^-$};
\node[draw,ultra thick,circle,label=above:{}](5b) at (-2, 2) {$5b^-$};
\node[draw,ultra thick,circle,label=above:{}](1c) at (1, -6) {$1b^+$};
\node[draw,ultra thick,circle,label=above:{}](2c) at (1.5,- 4) {$2b^+$};
\node[draw,ultra thick,circle,label=above:{}](3c) at (1, -2) {$3b^+$};
\node[draw,ultra thick,circle,label=above:{}](4c) at (1.5,0) {$4b^+$};
\node[draw,ultra thick,circle,label=above:{}](5c) at (1, 2) {$5b^+$};
\node[draw,ultra thick,circle,label=above:{}](1d) at (4.5,- 6) {$1a^+$};
\node[draw,ultra thick,circle,label=above:{}](2d) at (4,-4) {$2a^+$}; 
\node[draw,ultra thick,circle,label=above:{}](3d) at (4.5, -2) {$3a^+$};
\node[draw,ultra thick,circle,label=above:{}](4d) at (4, 0) {$4a^+$};
\node[draw,ultra thick,circle,label=above:{}](5d) at (4.5,2) {$5a^+$};

\draw[ultra thick] (1b) --node[right] {$\varpi_7$} (2b) --node[right] {$\varpi_5$} (3b)--node[right] {$\varpi_9$}(4b)--node[right] {$\varpi_6$}(5b);
\draw[ultra thick] (1a) --node[left] {$\varpi_4$} (2a) --node[left] {$\varpi_5$} (3a)--node[left] {$\varpi_9$}(4a)--node[left] {$\varpi_6$}(5a);
\draw[ultra thick] (1c) --node[right] {$\varpi_5$} (2c) --node[right] {$\varpi_7$} (3c)--node[right] {$\varpi_6$}(4c)--node[right] {$\varpi_9$}(5c);
\draw[ultra thick] (1d) --node[right] {$\varpi_8$} (2d) --node[right] {$\varpi_7$} (3d)--node[right] {$\varpi_6$}(4d)--node[right] {$\varpi_9$}(5d);

\draw[ultra thick]  (1b) --node[above] {$\varpi_3$} (1c);
\draw[ultra thick] (2a) --node[above] {$\varpi_1$} (2b); 
\draw[ultra thick] (2c)--node[above] {$\varpi_2$}(2d);
\draw[ultra thick] (3a) --node[above] {$\varpi_1$} (3b) --node[above] {$\varpi_3$} (3c)--node[above] {$\varpi_2$}(3d);
\draw[ultra thick] (4a) -- node[above] {$\varpi_1$}(4b); \draw[ultra thick] (4c)--node[above] {$\varpi_2$}(4d);
\draw[ultra thick] (5a) --node[above] {$\varpi_1$} (5b) --node[above] {$\varpi_3$} (5c)--node[above] {$\varpi_2$}(5d);
\end{tikzpicture}}\\
\vspace{0.2cm}
\begin{align}\nonumber
\begin{split}
&\text{The weights corresponding to the 9 entries are} \ v=(\varpi_1, \varpi_2, \varpi_3, \varpi_4, \varpi_5, \varpi_6, \varpi_7, \varpi_8, \varpi_9), \\
&\text{where} \ \varpi_1=[0;-1,1,0], \varpi_2=[0;0,1,-1], \varpi_3=[0;-1,0,1], \varpi_4=[1;-1,1,0], \\
&\varpi_5=[1;0,-1,1], \varpi_6=[-1;1,0,0], \varpi_7=[-1;-1,1,0], \varpi_8=[-1;0,-1,1], \varpi_9=[-1;0,0,1] . 
\end{split}
\end{align}
\vspace{0.2cm}

\begin{tabular}{|c|c|c|c|}
\hline
 $5a^-\  (010110000)$ & $5b^-\  (110110000)$ & $5b^+\  (111110000)$ &$5a^+\  (101110000)$   \\
 $4a^-\  (010111000)$ & $4b^-\  (110111000)$& $4b^+\  (111110001)$ &$4a^+\  (101110001)$  \\
 $3a^-\  (010111001)$ & $3b^-\  (110111001)$ &$3b^+\  (111111001)$ &$3a^+\  (101111001)$  \\
 $2a^-\  (010101001)$& $2b^-\  (110101001)$& $2b^+\  (111111101)$& $2a^+\  (101111101)$  \\
 $1a^-\  (010001001)$ & $1b^-\  (110101101)$& $1b^+\  (111101101)$  &$1a^+\  (101111111)$   \\ 
 \hline
\end{tabular}
\vspace{0.3cm}
\end{center}
\caption{Chambers of the hyperplane arrangement I($A_1\oplus A_2,\mathbf{R}$) with $\mathbf{R}=(\bf{3},\bf{1})\oplus(\bf{1},\bf{15})\oplus(\mathbf{1},\mathbf{6})\oplus(\mathbf{2},\mathbf{4})\oplus(\mathbf{2},\mathbf{\bar{4}})\oplus(\mathbf{1},\mathbf{4})\oplus(\mathbf{1},\mathbf{\bar{4}})$.  Each circle corresponds to a chamber. The label on the edge connecting two chambers is the wall separating them.  
In the sign vector, an entry $s$  means a sign $(-1)^{s+1}$ for the corresponding form, that is, $s=0$ (resp. $s=1$) means that the corresponding linear form is negative (resp. positive). 
For example, the chamber $1a^-$ corresponds to $(010001001)$, which gives the sign vector $(-1,1,-1,-1,-1,1,-1,-1)$. }
\label{ChambersNoZ2}
\end{figure}
\clearpage

\begin{figure}[H]
\begin{center}
\begin{tikzpicture}[scale=1.2]
\node[draw,ultra thick,circle,label=above:{}](1a) at  (-5.5, -6)   {$1a^-$};
\node[draw,ultra thick,circle,label=above:{}](1b) at (-2, -6) {$1b^-$};
\node[draw,ultra thick,circle,label=above:{}](2b) at (-2.5,- 4) {$2ab^-$};
\node[draw,ultra thick,circle,label=above:{}](3b) at (-2, -2) {$3ab^-$};
\node[draw,ultra thick,circle,label=above:{}](4b) at (-2.5, 0)  {$4ab^-$};
\node[draw,ultra thick,circle,label=above:{}](5b) at (-2, 2) {$5ab^-$};
\node[draw,ultra thick,circle,label=above:{}](1c) at (1, -6) {$1b^+$};
\node[draw,ultra thick,circle,label=above:{}](2c) at (1.5,- 4) {$2ab^+$};
\node[draw,ultra thick,circle,label=above:{}](3c) at (1, -2) {$3ab^+$};
\node[draw,ultra thick,circle,label=above:{}](4c) at (1.5,0) {$4ab^+$};
\node[draw,ultra thick,circle,label=above:{}](5c) at (1, 2) {$5ab^+$};
\node[draw,ultra thick,circle,label=above:{}](1d) at (4.5,- 6) {$1a^+$};

\draw[ultra thick] (1b) --node[right] {$\varpi_7$} (2b) --node[right] {$\varpi_5$} (3b)--node[right] {$\varpi_9$}(4b)--node[right] {$\varpi_6$}(5b);
\draw[ultra thick] (1c) --node[right] {$\varpi_5$} (2c) --node[right] {$\varpi_7$} (3c)--node[right] {$\varpi_6$}(4c)--node[right] {$\varpi_9$}(5c);

\draw[ultra thick]  (1b) --node[above] {$\varpi_3$} (1c);
\draw[ultra thick] (1a) --node[above] {$\varpi_4$} (2b); 
\draw[ultra thick] (2c)--node[above] {$\varpi_8$} (1d);
\draw[ultra thick] (3b) --node[above] {$\varpi_3$} (3c);
\draw[ultra thick] (5b) --node[above] {$\varpi_3$} (5c);
\end{tikzpicture}\\
\vspace{0.2cm}
\begin{align}\nonumber
\begin{split}
&\text{The weights corresponding to the 7 entries are} \ v=(\varpi_3, \varpi_4, \varpi_5, \varpi_6, \varpi_7, \varpi_8, \varpi_9), \\
&\text{where} \ \varpi_3=[0;-1,0,1], \varpi_4=[1;-1,1,0], \varpi_5=[1;0,-1,1], \varpi_6=[-1;1,0,0], \\
&\varpi_7=[-1;-1,1,0], \varpi_8=[-1;0,-1,1], \varpi_9=[-1;0,0,1] . 
\end{split}
\end{align}
\vspace{0.2cm}

\begin{tabular}{|c|c|c|c|}
\hline
 & $5ab^-\  (0110000)$ & $5ab^+\  (1110000)$ &   \\
 & $4ab^-\  (0111000)$& $4ab^+\  (1110001)$ &  \\
 & $3ab^-\  (0111001)$ &$3ab^+\  (1111001)$ &  \\
 & $2ab^-\  (0101001)$& $2ab^+\  (1111101)$ &  \\
 $1a^-\  (0001001)$ & \ $1b^-\  (0101101)$& \ $1b^+\  (1101101)$  &$1a^+\  (1111111)$   \\
 \hline
\end{tabular}
\vspace{0.3cm}
\end{center}
\caption{Chambers of the hyperplane arrangement I($A_1\oplus A_2,\mathbf{R}$) with $\mathbf{R}=(\bf{3},\bf{1})\oplus(\bf{1},\bf{15})\oplus (\bf{2},\bf{4})\oplus (\bf{2},\bf{\bar{4}})\oplus(\bf{1},\bf{6})$.  Each circle corresponds to a chamber. The label on the edge connecting two chambers is the wall separating them.  
In the sign vector, an entry $s$  means a sign $(-1)^{s+1}$ for the corresponding form, that is, $s=0$ (resp. $s=1$) means that the corresponding linear form is negative (resp. positive). 
For example, the chamber $1a^-$ corresponds to $(010001001)$, which gives the sign vector $(-1,1,-1,-1,-1,1,-1,-1)$. }
\label{Chambers}
\end{figure}
\clearpage

\subsection{Triple intersection numbers}\label{sec:Triple}

In contrast to the Euler characteristic, the triple intersection  polynomial  do depend on the choice of a crepant resolution. Those presented here corresponds to the crepant resolutions given by the sequence of blowups listed in 
Table  \ref{Table.Blowups}. 
For each model, we compute the triple intersection numbers of the fibral divisors. 
We start with the Weierstrass models 
$\varphi:Y\to B$ listed in Table \ref{Table.Models} and consider the crepant resolution  $f:\widetilde{Y}\to Y$ induced by the sequence of blowups  in  Table \ref{Table.Blowups}.
The crepant resolution produces fibral divisors  $D_a^s$ and $D_a^t$ whose classes are  listed in Table \ref{tb:FibralDiv}. Here, the index $a$ runs through $\{1,2\}$ for $\text{Sp($4$)}$ and $\{1,2,3\}$ for $\text{SU($4$)}$. The class of the proper transform $\tilde{Y}$ of $Y$ is
\begin{align}
(\text{SU($2$)}\times \text{Sp($4$)})/\mathbb{Z}_2 , \ \text{SU($2$)}\times \text{Sp($4$)} &: \quad [\widetilde{Y}]=3H+6L-2E_1-2W_1-2W_2 , \\
(\text{SU($2$)}\times \text{SU($4$)})/\mathbb{Z}_2 , \ \text{SU($2$)}\times \text{SU($4$)} &: \quad [\widetilde{Y}]=3H+6L-2E_1-2W_1-W_2-W_3 .
\end{align}
The triple intersection numbers are then  given by 
\begin{equation}
\mathscr{F}_{trip}=\varphi_* f_* \left( \left( D_1^{\text{s}} \psi_1+\sum_a D_a^t \phi_a \right)^3 [\widetilde{Y}] \right).
\end{equation}
The pushforwards are computed using theorems \ref{Thm:Push} and \ref{Thm:PushH}. 
We specialize to the  Calabi-Yau case by imposing the condition $L=-K$ which  ensures that the canonical class of $Y$ is trivial. 
The triple intersection polynomial of the Calabi-Yau threefold obtained by the resolutions listed  in Table   \ref{Table.Blowups} are 
\begin{align}
\mathscr{F}^{(\text{SU($2$)}\times \text{Sp($4$)})/\mathbb{Z}_2}_{trip}=&-8 (T+3K) (T+2K) \psi _1^3 -8 T^2 \phi _1^3-4 T(K+T) \phi _2^3 -12KT \phi _1^2 \phi _2 \nonumber \\
&+6 T (T+2K) \phi _1\phi _2^2 +12 T(T+2K) \psi_1 (2\phi _1^2 -2\phi _2 \phi _1 +\phi _2^2 ) .
\end{align}
\begin{align}
\mathscr{F}^{\text{SU($2$)}\times \text{Sp($4$)}}_{trip}=& -2 S (S-2K)\psi _1^3-8T^2 \phi _1^3 +2T(2K+S)\phi _2^3  \\
&+6T(S+2 T+2K)\phi _1^2 \phi _2-6T(2K+S+T) \phi _1\phi _2^2 -6 S T \psi _1 \left(2 \phi _1^2-2 \phi _2 \phi _1+\phi _2^2\right).\nonumber
\end{align}
\begin{align}
\mathscr{F}^{(\text{SU($2$)}\times \text{SU($4$)})/\mathbb{Z}_2}_{trip}=&-8 \left(6K^2+5KT+T^2\right) \psi _1^3 -4T(K+T)\left(\phi _1^3+\phi _2^3\right) -2T(K+2 T) \phi _3^3 +6KT \phi _1^2 \phi _3 \nonumber \\
&-6KT \phi _1 \phi _2 \phi _3 +3T(T+2K)\phi _2^2 \left(\phi _1+\phi _3\right)-3KT\phi _2 \left(\phi _1^2+\phi _3^2\right) \\
&+12T(T+2K) \psi _1 \left(\phi _1^2-\phi _2 \phi _1+\phi _2^2+\phi _3^2-\phi _2 \phi _3\right) \nonumber.
\end{align}
\begin{align}
\mathscr{F}^{\text{SU($2$)}\times \text{SU($4$)}}_{trip}=&-2S(S-2K)\psi _1^3 -4T(K+T)\phi _1^3 +2T(S+2K)\phi _2^3 -2T(K+2T)\phi _3^3 +6KT \phi _1^2 \phi _3 \nonumber \\
&-6KT \phi _1 \phi _2 \phi _3 -3T(2K+S+T) \phi _2^2 \left(\phi _1+\phi _3\right)+3T(3K+S+2T) \phi _2 \left(\phi _1^2+\phi _3^2\right)\nonumber \\
&-6 S T \psi _1 \left(\phi _1^2-\phi _2 \phi _1+\phi _2^2+\phi _3^2-\phi _2 \phi _3\right) .
\end{align}

\subsection{The prepotential of the five-dimensional theories}
Following Intrilligator, Morrison, and Seiberg \cite{IMS}, we compute the quantum contribution to the prepotential of a five-dimensional gauge theory ($6\mathscr{F}_{IMS}$) with the matter fields in the representations $\mathbf{R}_i$ of the gauge group. Let $\phi$ be in the Cartan subalgebra of a Lie algebra $\mathfrak{g}$.  We denote by $\phi=\{\psi_1,\phi_1,\phi_2\}$ for the cases with $\frak{g}=\frak{su}(2)+\frak{sp}(4)$, and we denote by $\phi=\{\psi_1,\phi_1,\phi_2,\phi_3\}$ for the cases with $\frak{g}=\frak{su}(2)+\frak{su}(4)$. The  weights are in the dual space of the Cartan subalgebra.  We denote the evaluation of a  weight on $\phi$ as a scalar product $\langle \mu,\phi \rangle$.  We recall that the roots are the weights of the adjoint representation of  $\mathfrak{g}$.
Denoting the fundamental roots by $\alpha$ and the weights of $\mathbf{R}_i$ by $\varpi$ we have 
\begin{align}
\mathscr{F}_{\text{IMS}} =&\frac{1}{12} \left(
{
\sum_{\alpha} |\langle \alpha, \phi \rangle|^3-\sum_{\mathbf{R}_i} \sum_{\varpi\in W_i} n_{\mathbf{R}_i} |\langle \varpi, \phi\rangle|^3 
}
\right).
\end{align}
The representations $\bf{R}$ for each group are determined geometrically by using the splittings of the curves. The prepotential is computed in a particular chamber of the five-dimensional theory that matches with the crepant resolution in which the triple intersection polynomial is computed.

For the case of $G=(\text{SU($2$)}\times \text{Sp($4$)})/\mathbb{Z}_2$, we first determine that matching chamber is given by
\begin{equation}\nonumber
\text{Chamber} \ [-,+] : 2\phi _2>2\phi _1>\phi _2>0\land \psi _1>\phi _1,
\end{equation}
which is the left chamber in Figure \ref{I2nsI4nsCham}. The prepotential in this chamber $[-,+]$ is given by
\begin{align}
\begin{split}
6\mathscr{F}_{\text{IMS}}=&-4 (n_{\bf{2},\bf{4}}+2 n_{\bf{3},\bf{1}}-2) \psi _1^3 -8 (n_{\bf{1},\bf{10}}-1) \phi _1^3+(-8 n_{\bf{1},\bf{10}}-n_{\bf{1},\bf{5}}+8)\phi _2^3 \\
& -3 \phi _1^2 \phi _2 (4 n_{\bf{1},\bf{10}}+n_{\bf{1},\bf{5}}-4) +3(6 n_{\bf{1},\bf{10}}+n_{\bf{1},\bf{5}}-6) \phi _1\phi _2^2 \\
&+\psi _1 \left(-12 n_{\bf{2},\bf{4}} \phi _1^2+12 n_{\bf{2},\bf{4}} \phi _2 \phi _1-6 n_{\bf{2},\bf{4}} \phi _2^2\right).
\end{split}
\end{align}
For the case of $G=\text{SU($2$)}\times \text{Sp($4$)}$, we find the matching chamber to be
\begin{equation}\nonumber
\text{Chamber} \ [-,+]: 2\phi _2>2\phi _1>\phi _2>0\land \psi _1>\phi _1,
\end{equation}
which is the very same chamber with the one above with a trivial Mordell-Weil group. Due to its different representations, the prepotential is given by
\begin{align}
\begin{split}
6 \mathscr{F}_{\text{IMS}}=& - (n_{\bf{2},\bf{1}}-4n_{\bf{2},\bf{4}}+8 n_{\bf{3},\bf{1}}-8) \psi _1^3 -8 (n_{\bf{1},\bf{10}}+n_{\bf{1},\bf{5}}-1) \phi _1^3-(8 n_{\bf{1},\bf{10}}+n_{\bf{1},\bf{4}}-8)\phi _2^3 \\
& -3 \phi _1^2 \phi _2 (4 n_{\bf{1},\bf{10}}+n_{\bf{1},\bf{4}}-4 n_{\bf{1},\bf{5}}-4) +3(6 n_{\bf{1},\bf{10}}+n_{\bf{1},\bf{4}}-2n_{\bf{1},\bf{5}}-6) \phi _1\phi _2^2 \\
&+\psi _1 \left(-12 n_{\bf{2},\bf{4}} \phi _1^2+12 n_{\bf{2},\bf{4}} \phi _1\phi _2 -6 n_{\bf{2},\bf{4}} \phi _2^2\right) .
\end{split}
\end{align}
For the case of $G=(\text{SU($2$)}\times \text{SU($4$)})/\mathbb{Z}_2$, we find the matching chamber to be
\begin{equation}
\text{Chamber} \ 5ab+ : \phi _1>\phi _2>0\land \psi _1>\phi _1\land \psi _1>\phi _3>\phi _2-\psi _1 ,
\end{equation}
which is a chamber on the top right of Figure \ref{Chambers}. The prepotential in this chamber is given by
\begin{align}
\begin{split}
6\mathscr{F}_{\text{IMS}}=& \  -4 (n_{\bf{2},\bf{4}}+n_{\bf{2},\bf{\bar{4}}}+2 n_{\bf{3},\bf{1}}-2) \psi _1^3 -8 (n_{\bf{1},\bf{15}}-1) \phi _1^3-(8 n_{\bf{1},\bf{15}}-8)\phi _2^3  \\
&-2 (4 n_{\bf{1},\bf{15}}+n_{\bf{1},\bf{6}}-4) \phi _3^3 -6 n_{\bf{1},\bf{6}} \phi _1^2\phi _3 +6 n_{\bf{1},\bf{6}} \phi _1\phi _2 \phi _3 +\frac{3}{2} (4 n_{\bf{1},\bf{15}}-2 n_{\bf{1},\bf{6}}-4) \phi _2^2 (\phi _1+\phi _3) \\
& -\frac{3}{2} (-2 n_{\bf{1},\bf{6}}) \phi _2 (\phi _1^2 + \phi _3^2) -6 (n_{\bf{2},\bf{4}}+n_{\bf{2},\bf{\bar{4}}}) \psi _1 \left(\phi _1^2 -\phi _1 \phi _2 +\phi _2^2 +\phi _3^2 -\phi _2 \phi _3 \right) .
\end{split}
\end{align}
For the case of $G=\text{SU($2$)}\times \text{SU($4$)}$, we find the matching chamber to be
\begin{equation}
\text{Chamber} \ 5b+ : 2 \phi _1>\phi _2>0 \land \phi _2>\phi _3>\phi _1 \land \psi _1>\phi _3,
\end{equation}
which is represented in Figure \ref{ChambersNoZ2}. The prepotential in this chamber is given by
\begin{align}
\begin{split}
6 \mathscr{F}_{\text{IMS}}=& \ -(n_{\bf{3},\bf{1}}+4 (n_{\bf{2},\bf{4}}+n_{\bf{2},\bf{\bar{4}}}+2 n_{\bf{3},\bf{1}}-2)) \psi _1^3 -8 (n_{\bf{1},\bf{15}}-1) \phi _1^3 \\
& -(8 n_{\bf{1},\bf{15}}+n_{\bf{1},\bf{4}}+n_{\bf{1},\bf{\bar{4}}}-8)\phi _2^3 -2 (4 n_{\bf{1},\bf{15}}+n_{\bf{1},\bf{6}}-4) \phi _3^3  -6 n_{\bf{1},\bf{6}} \phi _1^2\phi _3 +6 n_{\bf{1},\bf{6}} \phi _1\phi _2 \phi _3 \\
& +\frac{3}{2} (4 n_{\bf{1},\bf{15}}+n_{\bf{1},\bf{4}}+n_{\bf{1},\bf{\bar{4}}}-2 n_{\bf{1},\bf{6}}-4) \phi _2^2 (\phi _1+\phi _3) -\frac{3}{2} (n_{\bf{1},\bf{4}}+n_{\bf{1},\bf{\bar{4}}}-2 n_{\bf{1},\bf{6}}) \phi _2 (\phi _1^2 + \phi _3^2) \\
&-6 (n_{\bf{2},\bf{4}}+n_{\bf{2},\bf{\bar{4}}}) \psi _1 \left(\phi _1^2 -\phi _1 \phi _2 +\phi _2^2 +\phi _3^2 -\phi _2 \phi _3 \right) .
\end{split}
\end{align}

\subsection{Number of charged hypermultiplets.}

The number of charged hypermultiplets under each representation is obtained by comparing the triple intersection numbers and the one-loop prepotential:
\begin{equation}
\mathscr{F}_{trip}=6\mathscr{F}_{IMS}.
\end{equation}
The comparison is enough to completely determine the number $n_{\mathbf{R}_i}$ for the models with a $\mathbb{Z}_2$ Mordell-Weil group, that is, 
$(\text{SU($2$)}\times \text{Sp($4$)})/\mathbb{Z}_2$ and $(\text{SU($2$)}\times \text{SU($4$)})/\mathbb{Z}_2$.  
We see that the introduction of a $\mathbb{Z}_2$ Mordell-Weil group removes the fundamental representation, but does not affect the (traceless) antisymmetric representation, the 
adjoint, or bifundamental matters since they are self-dual representations. 
However, for the models $\text{SU($2$)}\times \text{Sp($4$)}$ and 
$\text{SU($2$)}\times \text{SU($4$)}$, comparing the triple intersection numbers  and the one-loop prepotential is  not enough to fix all the multiplicities and we are left with some linear relations between the number of representations. 
This is because without the $\mathbb{Z}_2$, we get additional matter content but the non-zero triple intersection numbers are unchanged. 
The remaining linear relations can be solved in many ways. For example, we can use Witten's genus formula to count the number of adjoint matters as the genus of the curve supporting the gauge group \cite{Witten}. 
Another possibility is to direct count the number of bi-fundamental representations  as intersection numbers between the divisors $S$ and $T$. 
We can also use the vanishing of anomalies in the six dimensional uplift to fix the remaining linear equations. For example,  the gravitational anomaly or the vanishing of the terms $\tr F^4_a$ are enough in addition to the  linear relations left from the triple intersection numbers.  The result is spelled out in Table \ref{Table.Matter}.

\subsection{Cancellation of anomalies in the uplifted six-dimensional theories} 
 
The anomalies in six-dimensional theories can be formally summarized by an eight-form I$_8$ constructed from the Riemann curvature two-form $R$ and the gauge curvature two-forms $F_i$ 
(see \cite{Green:1984bx,Sagnotti:1992qw} and \cite{Schwarz:1995zw,Erler:1993zy,Avramis:2005hc}).  
The Green-Schwarz mechanism consists of adding a counterterm depending on the Yang-Mills and gravitational Chern-Simons three-forms and modifying the gauge transformation of the appropriate antisymmetric self-dual or anti-self-dual two-forms  simultaneously \cite{Schwarz:1995zw,Sagnotti:1992qw}. 
This requires I$_8$ to factorize into a product of two four-forms. 
We check these conditions in detail and factorize the anomaly polynomial I$_8$ explicitly for each cases.

Before factoring the anomaly polynomial I$_8$, a necessary condition is the vanishing of the coefficients of the terms $\tr R^4$ and $\tr F_a^4$ from the quartic contribution of the pure gravitational and the pure gauge anomalies. 
The condition on the pure gravitational anomaly requires knowing the Euler characteristic of the elliptic fibration due to the content of hypermultiplets.  Due to the recent results on the pushforward of blowups  \cite{Euler}, we can now easily compute such invariants using a crepant resolution of singularities with centers that are local complete intersections. In particular, we get results that are independent of the dimension of the base and provide the results for all $n$-folds. 

Using the number of multiplets, we show that  the pure gravitational and the pure gauge anomalies are canceled using the number of representations $n_{\mathbf{R}_i}$, which are restricted by matching triple intersection numbers with the one-loop quantum correction to the cubic prepotential  (see Section  \ref{sec:anomaly}).  Assuming that the coefficients of $\tr R^4$ and $\tr F_a^4$ vanish, for the semi-simple gauge group with two simple components $G=G_1\times G_2$, the anomaly polynomial is 
$$
I_8 =\frac{K^2}{8} (\tr R^2)^2 +\frac{1}{6} (X^{(2)}_{1} +X^{(2)}_{2}) \tr R^2-\frac{2}{3} (X^{(4)}_{1}+X^{(4)}_{2})+4Y_{12},
$$
which we prove to reduce to a perfect square for all case considered, as expected from Sadov's work \cite{Sadov:1996zm}. More precisely,
\begin{align}\nonumber
I_8= \frac{1}{2} \Big( \frac{K}{2}\tr R^2-2 S \tr_{\mathbf{2}} F_1^2 -2 T  \tr_{\mathbf{4}} F_2^2\Big)^2.
\end{align}
The term that is squared is used  as a magnetic source  for the antisymmetric two-form in the gravitational multiplet that cancel the anomaly in the Green-Schwarz-Sognatti mechanism.

\section{$(\text{SU($2$)}\times \text{Sp($4$)})/\mathbb{Z}_2$-model} \label{sec:nonsplitZ2}

The fiber geometry of the collision of I$_2 ^{\text{ns}}$ and I$_4 ^{\text{ns}}$ is described in detail. The Weierstrass equation of I$_2 ^{\text{ns}}+$I$_4 ^{\text{ns}}$ is given by
\begin{equation}
y^2z=x^3+a_2x^2z+st^2xz^2 ,
\end{equation}
where $S=V(s)$ is the divisor supporting I$_2^{\text{ns}}$ and $T=V(t)$ is the divisor supporting I$_4^{\text{ns}}$. The discriminant of this model is
\begin{equation}
\Delta=16 s^2 t^4 (a_2^2-4 s t^2).
\end{equation}
The corresponding simply connected group $G$ and the representation $\mathbf{R}$, which is computed geometrically in the next section, are 
\begin{equation}
G=(\text{SU($2$)}\times \text{Sp($4$)})/\mathbb{Z}_2, \quad \bf{R}=(\bf{3},\bf{1})\oplus (\bf{1},\bf{10})\oplus (\bf{2},\bf{4})\oplus (\bf{1},{5}).
\end{equation}
The following sequence of blowups gives a crepant resolution of the elliptic fibration:
\begin{equation}
  \begin{tikzcd}[column sep=huge] 
  X_0  \arrow[leftarrow]{r} {(x,y,s|e_1)} & \arrow[leftarrow]{r} {(x,y,t|w_1)}  X_1 &X_2  \arrow[leftarrow]{r}{(x,y,w_1|w_2)} &X_3.
  \end{tikzcd}
\end{equation}
The proper transform is 
\begin{equation}
y^2=e_1w_1w_2^2x^3+a_2x^2+st^2w_1x,
\end{equation}
and the relative projective ``coordinates'' are
\begin{equation}
[e_1w_1w_2^2x : e_1w_1w_2^2y : z=1][w_1w_2^2x : w_1w_2^2y : s][w_2x : w_2y : t][x : y : w_1].
\end{equation}
To show that we have  a resolution of singularities, it is enough to assume that $V(a_2)$, $V(s)$, and $V(t)$ are smooth divisors intersecting transversally. 
In particular, working in patches using $(x,y,a_2)$ as a part of the local coordinates, the absence of singularities follows from the Jacobian criterion. 
From applying the adjunction formula after each blowup, we conclude that the resolution is crepant. 
\subsection{Fiber structure and representations}
We denote by $D_i^{\text{s}}$ ($i=0,1$), $D_j^t$ ($j=0,1,2$) the fibral divisors; by $C_i^{\text{s}}$ ($i=0,1$) and $C_j^t$ ($j=0,1,2$) the generic fibers of D$_i^{\text{s}}$ over $S$ and D$_j^t$ over $T$, respectively. The fibral divisors for this model are
\begin{align}
\begin{cases}
 D_0^{\text{s}} &: s=y^2-e_1w_1w_2^2x^3-a_2x^2=0 , \\
 D_1^{\text{s}} &: e_1=y^2-a_2x^2-st^2w_1x=0 , \\
 D_0^t &: t=y^2-e_1w_1w_2^2x^3-a_2x^2=0 , \\
 D_1^t &: w_1=y^2-a_2x^2=0 , \text{ (two roots $C_1=C_{1+}^t+C_{1-}^t$),}\\
 D_2^t &: w_2=y^2-a_2x^2-st^2w_1x=0 .
\end{cases}
\end{align}
The only components that touch the generator of $\mathbb{Z}_2$ are $C_1^{\text{s}}$ and $C_2^t$. The only sections that touch the zero section are $D_0^{\text{s}}$ and $D_0^t$.

Over $S=V(s)$, we have a generic fiber of type I$_2^{\text{ns}}$ with two geometric components $C_0^{\text{s}}$ and $C_1^{\text{s}}$. The fiber I$_2^{\text{ns}}$ specializes to a fiber of  type III over $V(a_2)$. Over $T=V(t)$, on the other hand, we have a generic fiber of type I$_4^{\text{ns}}$, whose geometric components are $C_0^t$, $C_{1+}^t$, $C_{1-}^t$, and $C_2^t$. The fiber I$_4^{\text{ns}}$ further enhances over $V(a_2)$, where two non-split curves $C_{1\pm}^t$ degenerate, which is represented on the right side of Figure \ref{I2nsI4ns}.

At the collision of $S$ and $T$, we produce the following curves:
\begin{align}
\begin{cases}
C_0^{\text{s}}\cap C_0^t &: s=t=y^2-e_1w_1w_2^2x^3-a_2x^2=0 \rightarrow \eta^{00} , \\
C_1^{\text{s}}\cap C_0^t &: e_1=t=y^2-a_2x^2=0 \rightarrow \eta^{10 \pm} , \text{ (two roots for each curve,)}\\
C_1^{\text{s}}\cap C_1^t &: e_1=w_1=y^2-a_2x^2=0 \rightarrow \eta^{11 \pm} , \text{ (two roots for each curve,)}\\
C_1^{\text{s}}\cap C_2^t &: e_1=w_2=y^2-a_2x^2-st^2w_1x=0 \rightarrow \eta^{12} .
\end{cases}
\end{align}
The fiber structure is represented in  Figure \ref{I2nsI4ns}. As expected, the collision of the divisors of the two fibers (type I$_2^{\text{ns}}$ and I$_4^{\text{ns}}$) is naturally enhanced into an I$_6^{\text{ns}}$.

The fibers of the collisions can be described from the splitting of the curves $C_i^{\text{s}}$ ($i=0,1$) and  $C_i^t$ ($i=0,1,2$) from I$_2^{\text{ns}}$ and  I$_4^{\text{ns}}$, respectively:
\begin{align}
\begin{cases}
C_0^{\text{s}}  & \rightarrow \eta^{00} ,\quad C_1^{\text{s}}  \rightarrow \eta^{10+}+\eta^{10-}+\eta^{11+}+\eta^{11-}+\eta^{12} ,\\
C_0^t  & \rightarrow \eta^{00}+\eta^{10+}+\eta^{10-} ,\quad C_1^t \rightarrow \eta^{11+}+\eta^{11-} ,\quad C_2^t  \rightarrow \eta^{12} .
\end{cases}
\end{align}
From these splittings of the curves, we compute the intersection numbers between the curves and the fibral divisors of I$_2^{\text{ns}}$ and I$_4^{\text{ns}}$  on the collision to be

\begin{equation}
\begin{tabular}{c|ccccc}
 & $D_0^{\text{s}}$ & $D_1^{\text{s}}$ & $D_0^t$ & $D_1^t$ & $D_2^t$ \\
 \hline
$\eta^{00}$ & $-2$ & 2 & 0 & 0 & 0 \\
$\eta^{10+}+\eta^{10-}$ & 2 & $-2$ & $-2$ & 2 & 0 \\
$\eta^{10\pm}$ & 1 & $-1$ & $-1$ & 1 & 0 \\
$\eta^{11+}+\eta^{11-}$ & 0 & 0 & 2 & $-4$ & 2 \\
$\eta^{11\pm}$ & 0 & 0 & 1 & $-2$ & 1 \\
$\eta^{12}$ & 0 & 0 & 0 & 2 & $-2$ 
\end{tabular}
\quad \raisebox{-40pt}{.} 
\end{equation}
The physical weight are minus the intersection numbers. We recall that the curve $\eta^{10\pm}$ carries the weight $[1]$ on the $\mathfrak{su}(2)$ side and the weight $[-1,0]$ on the $\mathfrak{sp}(4)$ side. This gives $[1;-1,0]$, which yields the representation $(\bf{2},\bf{4})$. These non-split curves join together to produce $\eta^{10+}+\eta^{10-}$ with weight $[2;-2,0]$, and the corresponding representation is $(\bf{3},\bf{10})$. Hence, the representation for the I$_2^{\text{ns}}+$I$_4^{\text{ns}}$-model with Mordell-Weil group $\mathbb{Z}_2$ is $
\bf{R}=(\bf{3},\bf{1})\oplus (\bf{1},\bf{10})\oplus (\bf{3},\bf{10})\oplus (\bf{2},\bf{4})\oplus (\bf{1},{5}).
$
We note that for the case of threefolds, the curves $\eta^{10\pm}$ are always split since all curve can split over a codimension-two point. Hence, the bi-adjoint $(\bf{3},\bf{10})$ does not show up geometrically. Hence for the Calabi-Yau threefolds, the representation is then
\begin{equation}
\bf{R}=(\bf{3},\bf{1})\oplus (\bf{1},\bf{10})\oplus (\bf{2},\bf{4})\oplus (\bf{1},{5}).
\end{equation}
The only group that is consistent with this representation $R$ is
\begin{equation}
G=(\text{SU($2$)}\times \text{Sp($4$)})/\mathbb{Z}_2,
\end{equation}
where $\mathbb{Z}_2$ is minus the identity.

The representations with respect to $(\frak{su}(2),\frak{sp}(4))$ from this I$_2^{\text{ns}}+$I$_4^{\text{ns}}$-model with the Mordell-Weil group $\mathbb{Z}_2$ are summarized in Table \ref{Rep.NonsplitModel} below. Here we denoted weights as $[\psi ;\varphi_1,\varphi_2]$ where $[\psi]$ is the weight for the $\frak{su}(2)$ and $[\varphi_1,\varphi_2]$ is the weight for the $\frak{sp}(4)$.
\begin{table}[H]
\begin{center}
\begin{tabular}{|c|c|c|c|}
\hline
Locus & {$tw_1w_2=0$} &  \multicolumn{2}{c|}{$se_1=tw_1w_2=0$} \\
\hline
Curves & $C_1^{t'}$ & $\eta^{10\pm}$ & $\eta^{10+}+\eta^{10-}$ \\
\hline
Weights &  $[0;2,-1]$ & $[1;-1,0]$ & $[2;-2,0]$ \\
\hline
Representations & $(\bf{1},\bf{5})$ & $(\bf{2},\bf{4})$ & $(\bf{3},\bf{10})$ \\
\hline
\end{tabular}
\end{center}
\caption{Weights and representations for the I$_2^{\text{ns}}+$I$_4^{\text{ns}}$-model with a Mordell-Weil group $\mathbb{Z}_2$}
\label{Rep.NonsplitModel}
\end{table}
\begin{figure}[H]
\begin{center}
\includegraphics[scale=.73]{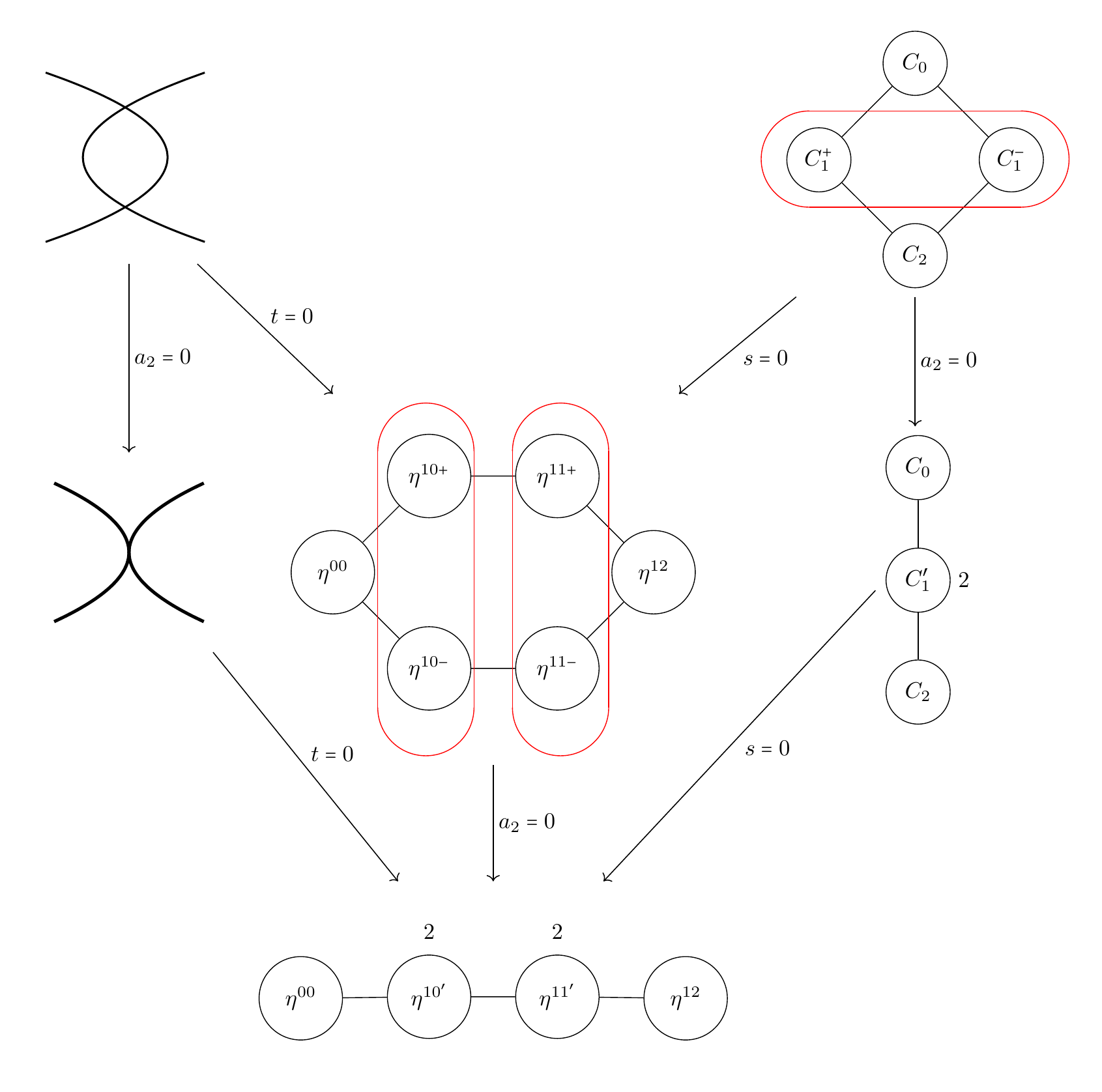}
\caption{Fiber structure of I$_2^{\text{ns}}+$I$_4^{\text{ns}}$ with a Mordell-Weil group $\mathbb{Z}_2$. The diagrams on the top are the Kodaira fibers of type I$_2^{\text{ns}}$ and I$_4^{\text{ns}}$. When $a_2=0$, I$_2^{\text{ns}}$ specializes to III, as seen in the middle left diagram, and I$_4^{\text{ns}}$ specializes to the diagram on the middle right. When I$_2^{\text{ns}}$ and I$_4^{\text{ns}}$ collide on the locus $se_1=tw_1w_2w_3=0$, we get the fiber structure of the collision of I$_2^{\text{ns}}$ and I$_4^{\text{s}}$ as the hexagon drawn in the middle. This enhancement has two newly split curves $\eta^{10+}$ and $\eta^{10-}$ that are of non-split type. When $a_2=0$, this hexagon further specializes to the diagram on the bottom.} 
\label{I2nsI4ns} 
\end{center}
\end{figure}


\subsection{Coulomb phases}
In this section, we show that the I$_2^{\text{ns}}+$I$_4^{\text{ns}}$-model with a Mordell-Weil group $\mathbb{Z}_2$ has three chambers. Denote $\frak{su}(2)$ by $[\psi]$  and $\frak{sp}(4)$ by $[\varphi_1,\varphi_2,\varphi_3]$. Then the Weyl chamber of the I$_2^{\text{ns}}+$I$_4^{\text{ns}}$-model with $\mathbb{Z}_2$ is defined with three hyperplanes given by
\begin{equation}
\psi_1>0 , \quad \phi _2-\phi _1>0, \quad 2 \phi _1-\phi _2>0.
\label{NonsplitRel}
\end{equation}
In addition, the subchambers are defined  by  the weights of the representation $(\bf{2},\bf{4})$,
\begin{equation}
\varpi_1=\phi _1-\psi _1, \quad \varpi_2=\psi _1+\phi _1-\phi _2,
\end{equation}
where the summation of these is positive from the first two hyperplanes in equation \eqref{NonsplitRel}:
\begin{equation}
\varpi_1+\varpi_2=(2\phi_1-\phi_2)+\psi_1>0.
\end{equation}
Hence, $\varpi_1$ and $\varpi_2$ cannot be both negative. It follows there are a total of three chambers, which are denoted by the signs of $[\varpi_1,\varpi_2]$:
\begin{align}
[-,+] &: \phi _2>0\land \frac{\phi _2}{2}<\phi _1<\phi _2\land \psi _1>\phi _1, \\
[+,+] &: \phi _2>0\land \frac{\phi _2}{2}<\phi _1<\phi _2\land \phi _2-\phi _1<\psi _1<\phi _1, \\
[+,-] &: \phi _1>0\land \phi _1<\phi _2<2 \phi _1\land 0<\psi _1<\phi _2-\phi _1.
\end{align}
The chambers are flop related by $\varpi_1$ and $\varpi_2$, as shown in Figure \ref{I2nsI4nsCham}.

\subsection{$5d$  $\mathcal{N}=1$ prepotentials and the triple intersection polynomials}
The triple intersection polynomial is computed for the I$_2^{\text{ns}}+$I$_4^{\text{ns}}$-model with the Mordell-Weil group $\mathbb{Z}_2$ in the crepant resolution:
\begin{align}
\begin{split}
\mathscr{F}_{trip}=&-8 (T-3 L) (T-2 L) \psi _1^3 -8 T^2 \phi _1^3+4 T(L-T) \phi _2^3+12 L T \phi _1^2 \phi _2 +6 T (T-2 L) \phi _1\phi _2^2  \\
&+\psi_1(24 T(T-2 L) \phi _1^2 -24 T(T-2 L) \phi _2 \phi _1 +12 T(T-2 L) \phi _2^2 ) \\
&-8 \psi _0^3 (3 L-2 T) (2 L-T)+6 T (2 L-T) \phi _0^2 \left(\phi _1-2 \psi _1\right) -4 L T \phi _0^3 \\
&+2 \phi _0 \left(6 T \psi _1^2 (T-2 L)+12 T \psi _1 \phi _1 (2 L-T)+6 T \phi _1^2 (T-L)\right)\\
&+\psi _0^2 \left(2 \left(8 \psi _1 (3 L-2 T) (2 L-T)-4 \psi _1 (2 L-T) (3 L-T)\right)+12 T \phi _0 (T-2 L)\right) \\
&+\psi _0 \left(2 \left(8 \psi _1^2 (2 L-T) (3 L-T)-4 \psi _1^2 (3 L-2 T) (2 L-T)\right)-24 T \psi _1 \phi _0 (T-2 L)\right) .
\end{split}
\end{align}
For Calabi-Yau threefolds, the representation for this I$_2^{\text{ns}}+$I$_4^{\text{ns}}$-model is geometrically computed as
\begin{align}
\bf{R}=(\bf{3},\bf{1})\oplus(\bf{1},\bf{10})\oplus(\bf{2},\bf{4})\oplus(\bf{1},\bf{5}).
\end{align}
Using these representations, the $5d$ prepotential in the  chamber $[-,+]$ is
\begin{align}
\begin{split}
6\mathscr{F}_{\text{IMS}}=&-4 (n_{\bf{2},\bf{4}}+2 n_{\bf{3},\bf{1}}-2) \psi _1^3 -8 (n_{\bf{1},\bf{10}}-1) \phi _1^3+(-8 n_{\bf{1},\bf{10}}-n_{\bf{1},\bf{5}}+8)\phi _2^3 \\
& -3 \phi _1^2 \phi _2 (4 n_{\bf{1},\bf{10}}+n_{\bf{1},\bf{5}}-4) +3(6 n_{\bf{1},\bf{10}}+n_{\bf{1},\bf{5}}-6) \phi _1\phi _2^2 \\
&+\psi _1 \left(-12 n_{\bf{2},\bf{4}} \phi _1^2+12 n_{\bf{2},\bf{4}} \phi _2 \phi _1-6 n_{\bf{2},\bf{4}} \phi _2^2\right).
\end{split}
\end{align}
Using the triple intersection polynomials that are independent from $\psi_0$ and $\phi_0$ to match the prepotential, the numbers of representations $n_R$ are computed to be
\begin{align}
\begin{split}
&n_{\bf{3},\bf{1}}=6 L^2-7 L T+2 T^2+1=g_S, \quad n_{\bf{2},\bf{4}}=-2 T (T-2 L)=2 (-4 g_T+T^2+4), \\
&n_{\bf{1},\bf{5}}=\frac{1}{2} \left(L T+T^2\right)=-g_T+T^2+1, \quad n_{\bf{1},\bf{10}}=\frac{1}{2} \left(-L T+T^2+2\right)=g_T .
\end{split}
\end{align}

\subsection{$6d$ $\mathcal{N}=(1,0)$ anomaly cancellation}
In this section, we consider an I$_2^{\text{ns}}+$I$_4^{\text{ns}}$-model with the Mordell-Weil group $\mathbb{Z}_2$. Then, the gauge algebra is given by
\begin{equation}
\frak{g}=A_1+C_2 ,
\end{equation}
and the representation is geometrically computed in section \ref{sec:nonsplitZ2} to be
\begin{equation}
\bf{R}=(\bf{3},\bf{1})\oplus (\bf{1},\bf{10})\oplus (\bf{2},\bf{4})\oplus (\bf{1},{5}).
\end{equation}
Then the number of vector multiplets $n_V^{(6)}$, tensor multiplets $n_T$, and hypermultiplets $n_H$ are computed to be
\begin{align}
\begin{split}
n_V^{(6)}&=13, \quad n_T=9-K^2, \\
n_H&=h^{2,1}(Y)+1+n_{\bf{3},\bf{1}}(3-1)+n_{\bf{2},\bf{4}} (10-2) + n_{\bf{1},\bf{5}} (5 - 1) + n_{\bf{1},\bf{10}} (10 - 2)\\
&=17 K^2+16 K T+6 T^2+14 .
\end{split}
\end{align}
 We recall  the number of representations from the earlier subsection:
\begin{align}
\begin{split}
&n_{\bf{3},\bf{1}}=(2 K+T) (3 K+2 T)+1, \ n_{\bf{2},\bf{4}}=-2 T (2 K+T), \\
&n_{\bf{1},\bf{5}}=\frac{1}{2} T (T - K), \quad n_{\bf{1},\bf{10}}=\frac{1}{2}(KT+T^2+2) .
\end{split}
\end{align}
Thus, we see that
\begin{equation}
n_H-n_V^{(6)}+29n_T-273=0,
\end{equation}
which means that the pure gravitational anomalies are canceled.

By using the trace identities for $\text{SU($2$)}$ given by equation \eqref{eq:SU2trace}, we first compute the $\text{SU($2$)}$ side of the anomaly polynomials. First, we can determine that
\begin{equation}
n_{\bf{3}}=n_{\bf{3},\bf{1}} , \quad n_{\bf{2}}=4n_{\bf{2},\bf{4}}.
\end{equation}
Hence, $X^{(2)}_1$ and $X^{(4)}_1$ are given by
\begin{align}
X^{(2)}_{1}&=\left(A_{\bf{3}}(1-n_{\bf{3}})-n_{\bf{2}}A_{\bf{2}}\right)\tr_{\bf{2}}F^2_1 =-12K(2K+T)\tr_{\bf{2}}F^2_1\\
\begin{split}
X^{(4)}_{1}&=\left(B_{\bf{3}}(1-n_{\bf{3}})-n_{\bf{2}}B_{\bf{2}}\right)\tr_{\bf{2}}F^4_1+\left(C_{\bf{3}}(1-n_{\bf{3}})-n_{\bf{2}}C_{\bf{2}}\right) (\tr_{\bf{2}}F^2_1)^2 \\
&=-12(2K+T)^2 (\tr_{\bf{2}}F^2_1)^2.
\end{split}
\end{align}
Now consider the $\text{Sp($4$)}$ side of the anomaly cancellation by using the trace identities for $\text{Sp($4$)}$ given by equation \eqref{eq:Sp4trace}. We first determine that
\begin{equation}
n_{\bf{10}}=n_{\bf{1},\bf{10}} , \quad n_{\bf{5}}=n_{\bf{1},\bf{5}} , \quad n_{\bf{4}}=2n_{\bf{2},\bf{4}}.
\end{equation}
Hence, $X^{(2)}_2$ and $X^{(4)}_2$ are given by
\begin{align}
X^{(2)}_{2}&=\left(A_{\bf{10}}(1-n_{\bf{10}})-n_{\bf{5}}A_{\bf{5}}-n_{\bf{4}}A_{\bf{4}}\right)\tr_{\bf{4}}F^2_2 =6K T \tr_{\bf{4}}F^2_2\\
\begin{split}
X^{(4)}_{2}&=\left(B_{\bf{10}}(1-n_{\bf{10}})-n_{\bf{5}}B_{\bf{5}}-n_{\bf{4}}B_{\bf{4}}\right)\tr_{\bf{4}}F^4_2+\left(C_{\bf{10}}(1-n_{\bf{10}})-n_{\bf{5}}C_{\bf{5}}-n_{\bf{4}}C_{\bf{4}}\right) (\tr_{\bf{4}}F^2_2)^2 \\ &=-3T^2 (\tr_{\bf{4}}F^2_2)^2.
\end{split}
\end{align}
Now we further include on both the $\text{SU($2$)}$ and $\text{Sp($4$)}$ sides the additional mixed term
\begin{equation}
Y_{12}=n_{\bf{2},\bf{4}} \tr_{\bf{2}}F^2_1 \tr_{\bf{4}}F^2_2;
\end{equation}
this is necessary to fully consider the bifundamental representation $(\bf{2},\bf{4})$. Then the full anomaly polynomial is given by
\begin{align}
\begin{split}
I_8&=\frac{9-n_T}{8} (\tr R^2)^2 +\frac{1}{6} (X^{(2)}_{1} +X^{(2)}_{2}) \tr R^2-\frac{2}{3} (X^{(4)}_{1}+X^{(4)}_{2})+4Y_{12} \\
&=\frac{1}{8} \left(K\tr R^2 -16 K \tr_{\bf{4}}F^2_2-8 T\tr_{\bf{4}}F^2_2+4 T \tr_{\bf{2}}F^2_1\right)^2,
\end{split}
\end{align}
which is a perfect square. This means that the total anomalies are canceled.

\section{$\text{SU($2$)}\times \text{Sp($4$)}$-model}  \label{sec:nonsplitNoZ2}

In this section we consider the I$_2 ^{\text{ns}}+$I$_4 ^{\text{ns}}$-model with a trivial Mordell-Weil group. The Weierstrass model is
\begin{equation}
y^2z=x^3+a_2x^2z+{\widetilde{a}_4}st^2xz^2+{\widetilde{a}_6}s^2t^4z^3.
\end{equation}
The discriminant is 
\begin{equation}
\Delta=-16 s^2 t^4 (4 a_2^3{\widetilde{a}_6}-a_2^2{\widetilde{a}_4}^2-18 a_2 {\widetilde{a}_4} {\widetilde{a}_6} s t^2+4 {\widetilde{a}_4}^3 s t^2+27 {\widetilde{a}_6}^2 s^2 t^4).
\end{equation}
The corresponding simply connected group $G$ and the representation $\mathbf{R}$, which is computed geometrically in the next section, are 
\begin{equation}
G=\text{SU($2$)}\times \text{Sp($4$)}, \quad \bf{R}=(\bf{3},\bf{1})\oplus(\bf{1},\bf{10})\oplus (\bf{2},\bf{1})\oplus (\bf{2},\bf{4})\oplus (\bf{1},{5}) \oplus (\bf{2},{1})\oplus (\bf{1},\bf{4}).
\end{equation}
The following sequence of blowups is a crepant resolution of the Weierstrass model:
\begin{equation}
  \begin{tikzcd}[column sep=huge] 
  X_0  \arrow[leftarrow]{r} {(x,y,s|e_1)} & \arrow[leftarrow]{r} {(x,y,t|w_1)}  X_1 &X_2  \arrow[leftarrow]{r}{(x,y,w_1|w_2)} &X_3.
  \end{tikzcd}
\end{equation}
The proper transform is
\begin{equation}
y^2=e_1w_1w_2^2x^3+a_2x^2+{\widetilde{a}_4}st^2w_1x+{\widetilde{a}_6} s^2t^4,
\end{equation}
and the relative projective coordinates are
\begin{equation}
[e_1w_1w_2^2x : e_1w_1w_2^2y : z=1][w_1w_2^2x : w_1w_2^2y : s][w_2x : w_2y : t][x : y : w_1].
\end{equation}
To prove that this is a crepant resolution, it is enough to assume that  $V(a_2)$, $V({\widetilde{a}_6})$, $S=V(s)$, and $T=V(t)$  are 
smooth divisors intersecting two by two transversally.  
\subsection{Fiber structure and representations}
The fibral divisors for this model are
\begin{align}
\begin{cases}
 D_0^{\text{s}} &: s=y^2-x^2(e_1w_1w_2^2x+a_2)=0 , \\
 D_1^{\text{s}} &: e_1=y^2-a_2x^2-{\widetilde{a}_4}st^2w_1x-{\widetilde{a}_6}s^2t^4w_1^2=0 , \\
 D_0^t &: t=y^2-x^2(e_1w_1w_2^2x+a_2)=0 , \\
 D_1^t &: w_1=y^2-a_2x^2=0 , \\
 D_2^t &: w_2=y^2-a_2x^2-{\widetilde{a}_4}st^2w_1x-{\widetilde{a}_6}s^2t^4w_1^2=0 .
\end{cases}
\end{align}
The fiber I$_2^{\text{ns}}$  specializes to a fiber of  type III over $V(a_2)$ and a fiber of type I$_3^{\text{ns}}$ over $V(a_2{\widetilde{a}_6}-4{\widetilde{a}_4}^2)$ as the generic fiber of $D_1^{\text{s}}$ degenerates into two lines $C_{1\pm}^{\text{s}}$. 
 The intersection numbers between the curves and the fibral divisors of I$_2^{\text{ns}}$ are
\begin{equation}
\begin{array}{c|ccccc}
 &D_0^{\text{s}}
 & D_1^{\text{s}} & D_0^t  & D_1^t   & D_2^t   \\
 \hline
C_0^{\text{s}} & -2 & 2  & 0 & 0 & 0  \\
C_{1\pm}^{\text{s}} & 1 & -1  & 0 & 0 & 0  \\
C_{1+}^{\text{s}}+C_{1-}^{\text{s}} & 2 & -2 & 0 & 0 & 0 
\end{array}
\end{equation}
We get the weight $[-1]$ from each copy of $C_{1\pm}^{\text{s}}$. This is in the representation $\bf{2}$ of $A_1$ and uncharged from $\frak{sp}(4)$ as it is away from its locus. Hence, the charged matter is in the representation  $(\bf{2},\bf{1})$.

Over $T=V(t)$, we have a generic fiber of type I$_4^{\text{ns}}$, whose geometric components are  $C_0^{\text{s}}$, $C_{1+}^t$, $C_{1-}^t$, and $C_2^t$. The curve $C_2^t$ is  a conic that splits into two lines over  $V(4a_2{\widetilde{a}_6}-{\widetilde{a}_4}^2)$, 
\begin{equation}
C_2^t \rightarrow C_{2+}^t+C_{2-}^t ,
\end{equation}
which results in the degeneration I$_4^{\text{ns}}\to$ I$_5^{\text{ns}}$. Then we get an enhancement of type I$_5^{\text{ns}}$, which is represented in Figure \ref{I2nsI4nsNoZ2}. Based on the splitting of the curve $C_2^t$, with $C_0^t$, $C_1^t$, and $C_3^t$ remaining  the same, the intersection numbers between these curves and the fibral divisors of I$_4^{\text{ns}}$ are computed to find the weights of the new curves:
\begin{equation}
\begin{array}{c|rrrrr}
 & D_0^{\text{s}} & D_1^{\text{s}}   & D_0^t & D_1^t & D_2^t  \\
 \hline
C_0^t &0 & 0 &  -2 & 2 & 0  \\
C_{1+}^t & 0 & 0 & 1 & -2 & 1  \\
C_{2+}^t & 0 & 0 & 0 & 1 & -1  \\
C_{2-}^t & 0 & 0 & 0 & 1 & -1  \\
C_{1-}^t & 0 & 0 & 1 & -2 & 1 
\end{array}
\end{equation}
From the new splittings of the curves, each produce the weight $[-1,1]$, which corresponds to the representation $\bf{4}$ of $\frak{sp}(4)$. Since this is away from the locus of $se_1=0$, the weight produced is simply $[0;-1,1]$. Thus, the charged matter is in the representation $(\bf{1},\bf{4})$ of $(\frak{su}(2),\frak{sp}(4))$.

At the collision of $S$ and $T$, we get  the following curves:
\begin{align}
\begin{cases}
C_0^{\text{s}}\cap C_0^t &: s=t=y^2-x^2(e_1w_1w_2^2x+a_2)=0 \rightarrow \eta^{00} , \\
C_1^{\text{s}}\cap C_0^t &: e_1=t=y^2-a_2x^2=0 \rightarrow \eta^{10 \pm} , \text{ (two roots for each curve,)}\\
C_1^{\text{s}}\cap C_1^t &: e_1=w_1=y^2-a_2x^2=0 \rightarrow \eta^{11 \pm} , \text{ (two roots for each curve,)}\\
C_1^{\text{s}}\cap C_2^t &: e_1=w_2=y^2-a_2x^2-{\widetilde{a}_4}st^2w_1x-{\widetilde{a}_6}s^2t^4w_1^2=0 \rightarrow \eta^{12} .
\end{cases}
\end{align}
The fiber structure is presented in Figure \ref{I2nsI4nsNoZ2}. As expected, we get a natural enhancement of an I$_6^{\text{ns}}$. 
The fibers of the collisions can be described from the splitting of the curves $C_i^{\text{s}}$ ($i=0,1$) from I$_2^{\text{ns}}$ and the curves $C_i^t$ ($i=0,1,2$) fom I$_4^{\text{ns}}$. 
From these splittings of the curves, we compute the intersection numbers between the curves and the fibral divisors of I$_2^{\text{ns}}$ and I$_4^{\text{ns}}$. The splitting of the curves $C_a^{\text{s}}$ and their intersection numbers with the fibral divisors 
are computed to be 
\begin{align}
\begin{array}{c|ccccc}
 & D_0^{\text{s}} & D_1^{\text{s}} & D_0^t & D_1^t & D_2^t \\
 \hline
\eta^{00} & -2 & 2 & 0 & 0 & 0 \\
\eta^{10+}+\eta^{10-} & 2 & -2 & -2 & 2 & 0 \\
\eta^{10\pm} & 1 & -1 & -1 & 1 & 0 \\
\eta^{11+}+\eta^{11-} & 0 & 0 & 2 & -4 & 2 \\
\eta^{11\pm} & 0 & 0 & 1 & -2 & 1 \\
\eta^{12} & 0 & 0 & 0 & 2 & -2 
\end{array} \quad\quad\quad
\begin{cases}
C_0^{\text{s}}  & \rightarrow \eta^{00}\\
C_1^{\text{s}}  & \rightarrow \eta^{10+}+\eta^{10-}+\eta^{11+}+\eta^{11-}+\eta^{12} \\
C_0^t  & \rightarrow \eta^{00}+\eta^{10+}+\eta^{10-} \\
C_1^t & \rightarrow \eta^{11+}+\eta^{11-} \\
C_2^t  & \rightarrow \eta^{12} 
\end{cases}
\end{align}
 The curve $\eta^{10\pm}$  yields the representation $(\bf{2},\bf{4})$. These nonsplit curves together $\eta^{10+}+\eta^{10-}$ produce the weight $[2;-2,0]$, the corresponding representation is $(\bf{3},\bf{10})$. Hence, the representation for the I$_2^{\text{ns}}+$I$_4^{\text{ns}}$-model with a trivial Mordell-Weil group is
$
\bf{R}=(\bf{3},\bf{1})\oplus (\bf{1},\bf{10})\oplus (\bf{3},\bf{10})\oplus (\bf{2},\bf{4})\oplus (\bf{1},{5})\oplus (\bf{2},\bf{1})\oplus (\bf{1},{4}).
$
We note that for the case of threefolds, the curves $\eta^{10\pm}$ are always split since all curve can split over a codimension-two point. Hence, the bi-adjoint $(\bf{3},\bf{10})$ does not show up geometrically, and the representation is then
\begin{equation}
\bf{R}=(\bf{3},\bf{1})\oplus (\bf{1},\bf{10})\oplus (\bf{2},\bf{4})\oplus (\bf{1},{5})\oplus (\bf{2},\bf{1})\oplus (\bf{1},{4}) .
\end{equation}
We summarize the representations and the weights of this model with their locus in Table \ref{Rep.NonsplitModel}.
\begin{table}[H]
\begin{center}
\begin{tabular}{|c|c|c|c|c|c|}
\hline
Locus &  \multicolumn{2}{c|}{$tw_1w_2=0$} &  \multicolumn{2}{c|}{$se_1=tw_1w_2=0$} & $se_1=0$\\
\hline
Curves & $C_1^{t'}$ & $C_{2\pm}^t$ & $\eta^{10\pm}$ & $\eta^{10+}+\eta^{10-}$ & $C_{1\pm}$\\
\hline
Weights &  $[0;2,-1]$ & $[0;-1,1]$ & $[1;-1,0]$ & $[2;-2,0]$ & $[-1;0,0,0]$ \\
\hline
Representations & $(\bf{1},\bf{5})$ & $(\bf{1},\bf{4})$ & $(\bf{2},\bf{4})$ & $(\bf{3},\bf{10})$ & $(\bf{2},\bf{1})$ \\
\hline
\end{tabular}
\end{center}
\caption{Weights and representations for the I$_2^{\text{ns}}+$I$_4^{\text{ns}}$-model with a trivial Mordell-Weil group.}
\label{Rep.NonsplitModel}
\end{table} 

{ When $a_2=0$, $\eta^{10\pm}$ becomes a degenerate node $\eta^{10'}$ with degeneracy two and $\eta^{11\pm}$ also degenerates into a single node $\eta^{11'}$ of degeneracy two:
\begin{align}
\eta^{10\pm}& \rightarrow \eta^{10'}, \quad \eta^{11\pm} \rightarrow \eta^{11'},
\end{align}
where the two new types of the curves are given by
\begin{align}
\eta^{10'} &: e_1=t=y=0 , \ \text{and} \ \eta^{11'} : e_1=w_1=y=0 .
\end{align}
This new fiber structure for $a_2=0$ is the bottom left diagram of Figure \ref{I2nsI4nsNoZ2}. 
When ${\widetilde{a}_4}^2-a_2{\widetilde{a}_6}=0$ instead, $\eta^{12}$ is geometrically reducible, and thus becomes two nonsplit curves 
\begin{equation}
\eta^{12}\rightarrow \eta^{12+'}+\eta^{12-'}.
\end{equation}
Therefore, we get the bottom right diagram in Figure \ref{I2nsI4nsNoZ2}.
When $a_2={\widetilde{a}_4}=0$, $\eta^{12}$ geometrically reduces into two nonsplit curves
\begin{equation}
\eta^{12} \rightarrow \eta^{12+''}+\eta^{12-''} : e_1=w_2=y^2-{\widetilde{a}_6}s^2t^4w_1^2=0,
\end{equation}
whereas from the other specialization when ${\widetilde{a}_4}^2-a_2{\widetilde{a}_6}={\widetilde{a}_4}=0$,
\begin{align}
\begin{cases}
\eta^{10} &\rightarrow \eta^{10'}, \\
\eta^{11} &\rightarrow \eta^{11'}, \\
\eta^{12} &\rightarrow \eta^{12+''}+\eta^{12-''} .
\end{cases}
\end{align}
When $a_2={\widetilde{a}_4}={\widetilde{a}_6}=0$, the nonsplit fibers become $\eta^{12'''}$ with degeneracy two:
\begin{equation}
\eta^{12\pm''}\rightarrow \eta^{12'''} : e=w_2=y=0,
\end{equation}
which gives the fiber structure to be the very bottom diagram in Figure \ref{I2nsI4nsNoZ2}.}

\begin{figure}[H]
\begin{center}
\vspace{-1.0cm}
\lapbox{-0.9cm}
{\includegraphics[scale=1]{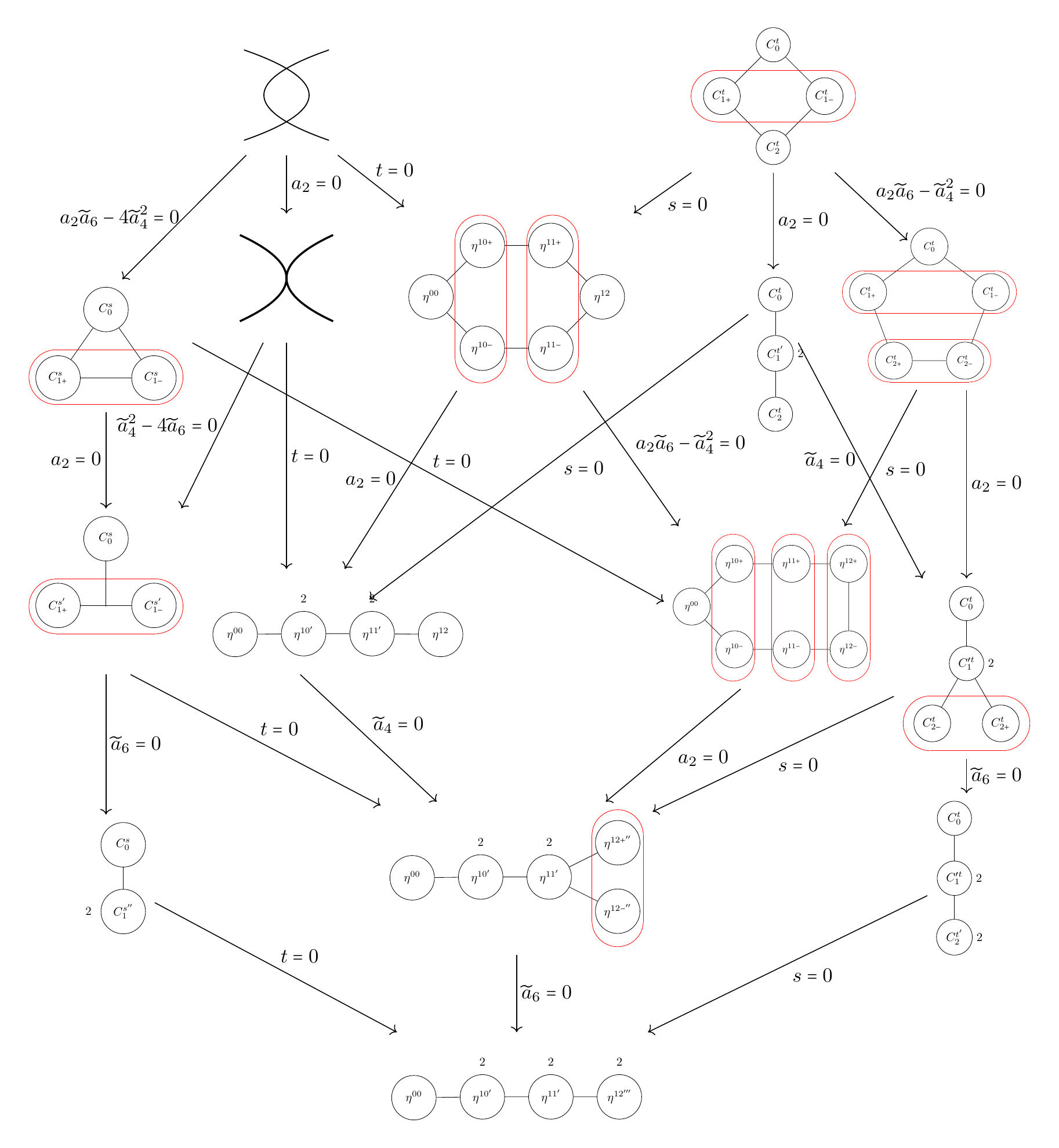}}
\caption{Fiber structure of I$_2^{\text{ns}}+$I$_4^{\text{ns}}$ with a trivial Mordell-Weil group. The diagrams on the top are the Kodaira fibers of type I$_2^{\text{ns}}$ and I$_4^{\text{ns}}$. When $a_2=0$, I$_2^{\text{ns}}$ specializes to III as seen in the middle left diagram, and I$_4^{\text{ns}}$ specializes to the diagram on the middle right. When I$_2^{\text{ns}}$ and I$_4^{\text{ns}}$ collide on the locus $se_1=tw_1w_2w_3=0$, we get the fiber structure as an hexagon drawn in the middle. This enhancement has two newly split curves $\eta^{10+}$ and $\eta^{10-}$ that are of non-split type. When $a_2=0$, this hexagon further specializes to the diagram on the bottom left, but when ${\widetilde{a}_4}^2-a_2{\widetilde{a}_6}=0$, the hexagon becomes a heptagon.} 
\label{I2nsI4nsNoZ2}
\end{center}
\end{figure}

\subsection{Coulomb phases} \label{IMSnonsplitNoZ2}
The groups $\text{SU($2$)}$ and $\text{Sp($4$)}$ individually have a unique Coulomb chamber. Their product with the bifundamental representation introduces an interior wall inside the Weyl chamber, which yields three chambers. Since we have the same Lie algebra with the bifundamental representation $(\bf{2},\bf{4})$, the chamber structure does not change under the change of the Mordell-Weil group.

\subsection{$5d$  $\mathcal{N}=1$ prepotentials and the triple intersection polynomials}
The I$_2^{\text{ns}}+$I$_4^{\text{ns}}$-model with a trivial Mordell-Weil group has the identical blowups with  the nonsplit model with the Mordell-Weil group $\mathbb{Z}_2$, and the fiber structure in codimension two is the same for the hexagon. Hence, the triple intersection polynomial and the prepotential are the same with the I$_2^{\text{ns}}+$I$_4^{\text{ns}}$-model with the $\mathbb{Z}_2$, without the relation between two divisors of class $S$ and $T$. Thus, the triple intersection polynomial is given by
\begin{align}
\begin{split}
\mathscr{F}_{trip}=& -2 S (2 L+S)\psi _1^3-8T^2 \phi _1^3-2T (2 L-S)\phi _2^3 \\
&+6T(S+2 T-2 L)\phi _1^2 \phi _2 -6T(-2 L+S+T) \phi _1\phi _2^2 -6 S T \psi _1 \left(2 \phi _1^2-2 \phi _2 \phi _1+\phi _2^2\right) \\
&+6 T \phi _0^2 \left((S+T-2 L) \phi _1 -S \psi _1\right) +6 T \phi _0 \left((2 L-S)\phi _1^2 -S \left(\psi _0-\psi _1\right)^2+2 S \psi _1 \phi _1\right) \\
&-2 T (-2 L+S+2 T)\phi _0^3 +4 S \psi _0^3 (L-S)+6 S(S-2 L) \psi _0^2\psi _1+12 L S\psi _0 \psi _1^2  .
\label{eq:nonsplit.tri.NoZ2}
\end{split}
\end{align}
The representations we achieve geometrically are
\begin{equation}
\bf{R}=(\bf{3},\bf{1})\oplus(\bf{2},\bf{1})\oplus(\bf{2},\bf{4})\oplus(\bf{1},\bf{4})\oplus(\bf{1},\bf{5})\oplus(\bf{1},\bf{10}).
\end{equation}
Using these representations, the $5d$ prepotential is computed for chamber $1$,  which is $[-,+]$, to be
\begin{align}
\begin{split}
6 \mathscr{F}_{\text{IMS}}=& - (n_{\bf{2},\bf{1}}-4n_{\bf{2},\bf{4}}+8 n_{\bf{3},\bf{1}}-8) \psi _1^3 -8 (n_{\bf{1},\bf{10}}+n_{\bf{1},\bf{5}}-1) \phi _1^3-(8 n_{\bf{1},\bf{10}}+n_{\bf{1},\bf{4}}-8)\phi _2^3 \\
& -3 \phi _1^2 \phi _2 (4 n_{\bf{1},\bf{10}}+n_{\bf{1},\bf{4}}-4 n_{\bf{1},\bf{5}}-4) +3(6 n_{\bf{1},\bf{10}}+n_{\bf{1},\bf{4}}-2n_{\bf{1},\bf{5}}-6) \phi _1\phi _2^2 \\
&+\psi _1 \left(-12 n_{\bf{2},\bf{4}} \phi _1^2+12 n_{\bf{2},\bf{4}} \phi _1\phi _2 -6 n_{\bf{2},\bf{4}} \phi _2^2\right)
\end{split}
\end{align}
The triple intersection numbers that are independent from $\psi_0$ and $\phi_0$, which are the first two lines of the equation \eqref{eq:nonsplit.tri.NoZ2}, are matched with the $5d$ prepotential term by term. This  fixes the linear combination of the number of representations $n_{\bf{R}}$:
\begin{align}
\begin{split}
&8n_{\bf{3},\bf{1}}+n_{\bf{2},\bf{1}}=2 S (2 L + S - 2 T) + 8, \\
&8n_{\bf{1},\bf{10}}+n_{\bf{1},\bf{4}}=2 (2 L T - S T + 4), \quad n_{\bf{1},\bf{10}}+n_{\bf{1},\bf{5}}=T^2+1.
\end{split}
\end{align}
However, we need further information    in order to fix the number of representations. 

Using the fact that the charged matter in the representation $(\bf{1},\bf{4})$ is from the splittings of the curve $C_2^t \rightarrow C_{2+}^t+C_{2-}^t$ when $4a_2{\widetilde{a}_6}-{\widetilde{a}_4}^2=0$, whose class is given by twice of $[{\widetilde{a}_4}]=(4L-S-2T)$, in the locus of $tw_1w_2=0$, we  can safely see that this gives the class $n_{\bf{1},\bf{4}}=2T(4L-S-2T)$. With this specified $n_{\bf{1},\bf{4}}$, we can fix the number of representations for $n_{\bf{1},\bf{5}}$ and $n_{\bf{1},\bf{10}}$. Moreover, $n_{\bf{2},\bf{1}}$ and $n_{\bf{3},\bf{1}}$ terms are both only in $\psi_1^3$ term. In order to fix these representations, we look into the specialization of the curve $C_1^{\text{s}} \rightarrow C_{1+}^{\text{s}}+C_{1-}^{\text{s}}$ that produces the charged matter in the representation $(\bf{2},\bf{1})$. This specialization was when $4a_2{\widetilde{a}_6}-{\widetilde{a}_4}^2=0$, whose class is then the same  as twice the class of $[{\widetilde{a}_4}]=4L-S-2T$. Since the splittings happen when $se_1=4a_2{\widetilde{a}_6}-{\widetilde{a}_4}^2=0$, we can safely see that this gives the class $n_{\bf{2},\bf{1}}=2S(4L-S-2T)$. Hence,  the number of representations $n_{\bf{R}}$ can all be computed, which are listed below:
\begin{align}
\begin{split}
&n_{\bf{3},\bf{1}}=\frac{1}{2} \left(-L S+S^2+2\right), \quad n_{\bf{2},\bf{1}}=2 S (4 L-S-2 T), \quad n_{\bf{2},\bf{4}}=ST,\\
&n_{\bf{1},\bf{4}}=2 T (4 L - S - 2 T) \quad n_{\bf{1},\bf{5}}=\frac{1}{2} T (L+T), \quad n_{\bf{1},\bf{10}}=\frac{1}{2} \left(-L T+T^2+2\right) .
\end{split}
\end{align}

When we impose the Calabi-Yau condition, $L=-K$,
$n_{\bf{3},\bf{1}}=g_S$ and  $n_{\bf{1},\bf{10}}=g_T$.

\subsection{$6d$ $\mathcal{N}=(1,0)$ anomaly cancellation}
To compute the number of hypermultiplets, we recall  the number of representations from section \ref{IMSnonsplitNoZ2}:
\begin{align}
\begin{split}
&n_{\bf{3},\bf{1}}=\frac{1}{2} \left(K S+S^2+2\right), \quad n_{\bf{2},\bf{1}}=-2 S (4K+S+2 T), \quad n_{\bf{2},\bf{4}}=ST, \\
&n_{\bf{1},\bf{4}}=-2 T (4K+S+2 T), \quad n_{\bf{1},\bf{5}}=-\frac{1}{2}(KT-T^2), \quad n_{\bf{1},\bf{10}}=\frac{1}{2} \left(K T+T^2+2\right) .
\end{split}
\end{align}
The number of vector multiplets $n_V^{(6)}$, tensor multiplets $n_T$, and hypermultiplets $n_H$ are
\begin{align}
\begin{split}
n_V^{(6)}&=13, \quad n_T=9-K^2, \\
n_H&=h^{2,1}(Y)+1+n_{\bf{2},\bf{1}}(2-0)+n_{\bf{3},\bf{1}}(3-1)+n_{\bf{2},\bf{4}} (8-0) + n_{\bf{1},\bf{4}} (4-0)\\
&\quad + n_{\bf{1},\bf{5}} (5 - 1) + n_{\bf{1},\bf{10}} (10 - 2)=29 K^2+25.
\end{split}
\end{align}
The coefficients of the $\tr R^4$ vanishes as $
n_H-n_V^{(6)}+29n_T-273=0$,
so we can conclude that the pure gravitational anomalies are canceled out.

The terms $X^{(2)}_1$, $X^{(2)}_2$, $X^{(4)}_1$, $X^{(4)}_2$, and $Y_{12}$ are obtained as\footnote{
A key point of the computation is that  along SU($2$), a hypermultiplet transforming in the representation $(\bf{2},\bf{4})$ is seen as 
$4$ hypermultiplets in the representation $\bf{2}$ of SU($2$). In the same way, the same hypermultiplet in the representation $(\bf{2},\bf{4})$ is seen from the group Sp($4$) as 
 $2$  hypermultiplets in the representation $\bf{4}$ of Sp($4$). It follows that we use
$$
\begin{aligned}
n_{\bf{3}}=n_{\bf{3},\bf{1}}, &\quad n_{\bf{2}}=n_{\bf{2},\bf{1}}+4n_{\bf{2},\bf{4}}, \quad
n_{\bf{10}}=n_{\bf{1},\bf{10}}, \quad n_{\bf{5}}=n_{\bf{1},\bf{5}},  \quad n_{\bf{4}}=n_{\bf{1},\bf{4}}+2n_{\bf{2},\bf{4}}.
\end{aligned}
$$
}
\begin{align}
X^{(2)}_{1}&=\left(A_{\bf{3}}(1-n_{\bf{3}})-n_{\bf{2}}A_{\bf{2}}\right)\tr_{\bf{2}}F^2_1 =6KS \tr_{\bf{2}}F^2_1 ,\\
X^{(2)}_{2}&=\left(A_{\bf{10}}(1-n_{\bf{10}})-n_{\bf{5}}A_{\bf{5}}-n_{\bf{4}}A_{\bf{4}}\right)\tr_{\bf{4}}F^2_2 =6KT\tr_{\bf{4}}F^2_2 ,\\
Y_{12} &=n_{\bf{2},\bf{4}} \tr_{\bf{2}}F^2_2 \tr_{\bf{4}}F^2_2=ST \tr_{\bf{2}}F^2_2 \tr_{\bf{4}}F^2_2 ,\\
X^{(4)}_{1}&=\left(B_{\bf{3}}(1-n_{\bf{3}})-n_{\bf{2}}B_{\bf{2}}\right)\tr_{\bf{2}}F^4_1+\left(C_{\bf{3}}(1-n_{\bf{3}})-n_{\bf{2}}C_{\bf{2}}\right) (\tr_{\bf{2}}F^2_1)^2 =-3S^2 (\tr_{\bf{2}}F^2_1)^2 ,\\
X^{(4)}_{2}&=\left(B_{\bf{10}}(1-n_{\bf{10}})-n_{\bf{5}}B_{\bf{5}}-n_{\bf{4}}B_{\bf{4}}\right)\tr_{\bf{4}}F^4_2\\
&\quad +\left(C_{\bf{10}}(1-n_{\bf{10}})-n_{\bf{5}}C_{\bf{5}}-n_{\bf{4}}C_{\bf{4}}\right) (\tr_{\bf{4}}F^2_2)^2 =-3 T^2(\tr_{\bf{4}}F^2_2)^2 ,
\end{align}
Since we have a theory with two quartic Casimirs, to satisfy the anomaly cancellation conditions, the coefficients of the $\tr_{\bf{2}}F^4_1$ and $\tr_{\bf{4}}F^4_2$ must vanish. These terms are coming from $X^{(4)}_1$ and $X^{(4)}_2$. We can observe from the equations above that the coefficients of $\tr_{\bf{2}}F^4_1$ and $\tr_{\bf{2}}F^4_1$ indeed vanish. 

Using the terms above, we compute the anomaly polynomial as
\begin{equation}
I_8=\frac{1}{2} \left(\frac{1}{2} K \tr R^2+2 S\tr_{\bf{4}}F^2_2+2 T\tr_{\bf{2}}F^2_1\right)^2 ,
\end{equation}
which is a perfect square. Hence we conclude that the anomalies are all canceled when lifted to the six-dimensional theories.

\section{$(\text{SU($2$)}\times \text{SU($4$)})/\mathbb{Z}_2$-model} \label{sec:splitZ2}

In this section, we  study the I$_2^{\text{ns}}$+I$_4^{\text{s}}$ with a $\mathbb{Z}_2$ Mordell-Weil group.  The Weierstrass model is
\begin{equation}
y^2z+a_1xyz=x^3+{\widetilde{a}_2}tx^2z+st^2xz^2 .
\end{equation}
The discriminant for this model is
\begin{equation}
\Delta=s^2 t^4 \left(a_1^4+8 a_1^2 {\widetilde{a}_2} t+16 {\widetilde{a}_2}^2 t^2-64 s t^2\right).
\end{equation}
The corresponding simply connected group $G$ and the representation $\bf{R}$, which is computed geometrically in the next section, are
\begin{equation}
G=(\text{SU($2$)}\times \text{SU($4$)})/\mathbb{Z}_2 , \quad \bf{R}= (\bf{3},\bf{1})\oplus(\bf{1},\bf{15})\oplus(\bf{1},\bf{6})\oplus (\bf{2},\bf{4})\oplus (\bf{2},\bf{\bar{4}}).
\end{equation}

We consider the following sequence of blowups for a crepant resolution:
\begin{equation}
\begin{tikzcd}[column sep=huge]  X_0  \arrow[leftarrow]{r} {(x,y,s|e_1)} & \arrow[leftarrow]{r} {(x,y,t|w_1)}  X_1 &X_2  \arrow[leftarrow]{r}{(y,w_1|w_2)} &X_3 \arrow[leftarrow]{r}{(x,w_2|w_3)} & X_4 \end{tikzcd}.
\end{equation}
The proper transform is
\begin{equation}
w_2y^2+a_1xy=w_1 x(e_1 w_3^2 x^2+{\widetilde{a}_2}w_3 tx+st^2),
\end{equation}
where the relative projective coordinates are given by
\begin{equation}
[e_1w_1w_2w_3^2x : e_1w_1w_2^2w_3^2y : z=1][w_1w_2w_3^2x : w_1w_2^2w_3^2y : s][w_3x : w_2w_3y : t][y : w_1][x : w_2].
\end{equation}

\subsection{Fiber structure}
This model has the following fibral divisors that corresponds to their curves:
\begin{align}
\begin{cases}
 D_0^{\text{s}} &: s=w_2y^2+a_1xy-w_1w_3x^2(e_1w_3x+{\widetilde{a}_2}t)=0 , \\
 D_1^{\text{s}} &: e_1=w_2y^2+a_1xy-w_1tx({\widetilde{a}_2}w_3x+st)=0 , \\
 D_0^t &: t=w_2y^2+a_1xy-e_1w_1w_3^2x^3=0 , \\
 D_1^t &: w_1=w_2y+a_1x=0\\
 D_2^t &: w_3=w_2y^2+ x(a_1y-w_1 st^2)=0\\
 D_3^t &: w_2=a_1y-w_1(e_1w_3^2x^2+{\widetilde{a}_2}w_3 t x+st^2)=0 .
\end{cases}
\end{align}
The fiber of type I$_2^{\text{ns}}$ consists of two curves $C_i^{\text{s}}$ ($i=0,1$) with their intersection point given by
\begin{equation}
C_0^{\text{s}} \cap C_1^{\text{s}} : s=e_1=w_2 y^2+a_1xy-{\widetilde{a}_2} tw_1w_3x^2=0.
\end{equation}
Hence it specializes to a type III fiber over $V(a_1^2+4{\widetilde{a}_2} t)$.

On the other hand, over $T=V(t)$, we have a generic fiber of type I$_4^{\text{s}}$ with its geometric components $C_0^t$, $C_1^t$, $C_2^t$, and $C_3^t$. This fiber I$_4^{\text{s}}$ enhances into a fiber of type I$_0^{*ss}$ over $V(a_1)$, which is presented in Figure \ref{I2sI4s}. The curve $C_1^t$ specializes into the central node $C_{13}^t$ where
\begin{equation}
C_1^t \rightarrow C_{13}^t : w_1=w_2=0 .
\end{equation}
The curve $C_3^t$ splits into three curves $C_{12}^t$, which is the central node, and $C_{3 \pm}^{t'}$, which is given by the two roots of the curve:
\begin{align}
& C_3^t  \rightarrow  C_{13}^t+C_{3+}^{t'}+C_{3-}^{t'} \quad \quad (
C_{3 \pm}^{t'}: w_2=e_1w_3^2x^2+{\widetilde{a}_2}tw_3x+st^2=0).
\end{align}
From this specialization, we can compute the weights of the curves to see what charged matter we have in the five-dimensional theory. 
\begin{equation}
\begin{tabular}{c|cccccc}
  & $D_0^{\text{s}}$  & $D_1^{\text{s}}$ & $D_0^t$ & $D_1^t$ & $D_2^t$ & $D_3^t$ \\
 \hline
$C_0^t$ & 0 & 0 & -2 & 2 & 0 & 0 \\
$2 \, C_{13}^t$ & 0 & 0 & 2 & -4 & 2 & 0 \\
$C_2^t$ & 0 & 0 & 0 & 0 & -2 & 2 \\
$C_{3+}^{t'}+C_{3-}^{t'}$ & 0 & 0 & 0 & 2 & 0 & -2
\end{tabular}
\quad \raisebox{-30pt}{.}
\end{equation}
For the curve $C_{13}^t$, the weight is computed to be $[2,-1,0]$, and this corresponds to the representation $\bf{15}$, which is the adjoint representation of $\frak{su}(4)$. The curves $C_{3\pm}^{t'}$ each carries the weight $[-1,0,1]$ that corresponds to the representation $\bf{6}$, which is the antisymmetric representation of $\frak{su}(4)$. 
{ Since these two curves are nonsplit, when we compute the weight of them together as $C_{3}^{t'}=C_{3+}^{t'}+C_{3-}^{t'}$, the representation is  $\bf{20'}$.} Since the locus is away from $se_1=0$, it is uncharged on the side for $\frak{su}(2)$ and hence the representation for the whole product group $\frak{su}(2)\times\frak{su}(4)$ is $\bf{(1,6)}$.

At the collision of both $S$ and $T$, 
\begin{align}
\begin{cases}
C_0^{\text{s}}\cap C_0^t :& s=t=w_2y^2+a_1xy-w_1w_3^2e_1x^3=0 \rightarrow \eta^{00} , \\
C_1^{\text{s}}\cap C_0^t :& e_1=t=w_2y+a_1x=0 \rightarrow \eta^{10} , \\
& e_1=t=y=0 \rightarrow \eta^{10y} , \\
C_1^{\text{s}}\cap C_1^t :& e_1=w_1=w_2y+a_1x=0 \rightarrow \eta^{11} , \\
C_1^{\text{s}}\cap C_2^t :& e_1=w_3=w_2y^2+ x(a_1y-w_1 st^2)=0 \rightarrow \eta^{12} , \\
C_1^{\text{s}}\cap C_3^t :& e_1=w_2=a_1y-w_1t({\widetilde{a}_2} w_3 x+st)=0 \rightarrow \eta^{13} .
\end{cases}
\end{align}
The fiber structure for this collision is an I$_6^{\text{s}}$ fiber, as depicted in Figure \ref{I2sI4s}.

In order to compute the weights for this collision of $\frak{su}(2)\times\frak{su}(4)$, we need to investigate the splittings of the curves from $C_i^{\text{s}}$ ($i=0,1$) and $C_i^t$ ($i=0,1,2,3$). We find that
\begin{align}
\begin{cases}
C_0^{\text{s}} & \rightarrow \eta^{00} ,\\
C_1^{\text{s}} & \rightarrow \eta^{10}+\eta^{11}+\eta^{12}+\eta^{13} ,\\
C_0^t & \rightarrow \eta^{00}+\eta^{10}+\eta^{10y} ,\\
C_1^t & \rightarrow \eta^{11} ,\\
C_2^t & \rightarrow \eta^{12} ,\\
C_3^t & \rightarrow \eta^{13} .
\end{cases}
\end{align}
Using linear relations, the intersection numbers between the curves and the Cartan divisors are computed below. Since $\eta^{00}$, $\eta^{11}$, $\eta^{12}$, and $\eta^{13}$ are obtained directly from $C_0^{\text{s}}$, $C_1^t$, $C_2^t$, and $C_3^t$ without modifications, the only curves in consideration are $\eta^{10}$ and $\eta^{10y}$ and we get
\begin{equation}
\begin{array}{c|rrrrrr}
 & D_0^{\text{s}} & D_1^{\text{s}} & D_0^t & D_1^t & D_2^t & D_3^t \\
 \hline
\eta^{00} & -2 & 2 & 0 & 0 & 0 & 0 \\
\eta^{10} & 1 & -1 & -1 & 1 & 0 & 0 \\
\eta^{10y} & 1 & -1 & -1 & 0 & 0 & 1 \\
\eta^{11} & 0 & 0 & 1 & -2 & 1 & 0 \\
\eta^{12} & 0 & 0 & 0 & 1 & -2 & 1 \\
\eta^{13} & 0 & 0 & 1 & 0 & 1 & -2 
\end{array}
\end{equation}

Note that for $\frak{su}(2)$, we only get the weight $[1]$, which is in the representation $\bf{2}$. On the other hand, for $\frak{su}(4)$, $\eta^{10}$ gives the weight $[-1,0,0]$ and $\eta^{10y}$ gives the weight $[0,0,-1]$. These are in representations $\bf{\bar{4}}$ and $\bf{4}$ respectively. Thus, we get the bifundamentals $(\bf{2},\bf{4})\oplus (\bf{2},\bf{\bar{4}})$ for our product group $\frak{su}(2)\times\frak{su}(4)$.

The representations with respect to $(\frak{su}(2),\frak{su}(4))$ from this I$_2^{\text{ns}}+$I$_4^{\text{s}}$-model with the Mordell-Weil group $\mathbb{Z}_2$ are summarized in Table \ref{Rep.SplitModel} below. Here we denoted weights as $[\psi ;\varphi_1,\varphi_2,\varphi_3]$, where $[\psi]$ is the weight for the $\frak{su}(2)$ and $[\varphi_1,\varphi_2,\varphi_3]$ is the weight for the $\frak{su}(4)$.

\begin{table}[H]
\begin{center}
\begin{tabular}{|c|c|c|c|c|}
\hline
Locus & \multicolumn{2}{c|}{$tw_1w_2=0$} & \multicolumn{2}{c|}{$se_1=tw_1w_2=0$}\\
\hline
Curves & $C_{13}^t$ & $C_{3\pm}^{t'}$ & $\eta^{10}$ & $\eta^{10y}$ \\
\hline
Weights & $[0;2,-1,0]$ & $[0;-1,0,1]$ & $[1;-1,0,0]$ & $[1;0,0,-1]$ \\
\hline
Representations & $(\bf{1},\bf{15})$ & $(\bf{1},\bf{6})$ & $(\bf{2},\bf{\bar{4}})$ & $(\bf{2},\bf{4})$ \\
\hline
\end{tabular}
\end{center}
\caption{Weights and representations for the I$_2^{\text{ns}}+$I$_4^{\text{s}}$-model with the Mordell-Weil group $\mathbb{Z}_2$.}
\label{Rep.SplitModel}
\end{table}

We have identified the charged matters for the product group $\frak{g}=\frak{su}(2)\times\frak{su}(4)$ to be 
\begin{equation}
\bf{R}= (\bf{3},\bf{1})\oplus(\bf{1},\bf{15})\oplus (\bf{2},\bf{4})\oplus (\bf{2},\bf{\bar{4}})\oplus(\bf{1},\bf{6}).
\end{equation} 

{  When $a_1=0$, $\eta^{10}$ and $\eta^{13}$ split and $\eta^{11}$ produces a curve that intersects with three curves. The splittings of the curves when $a_1=0$ is
\begin{align}
\label{I6sSplit}
\begin{split}
\begin{cases}
\eta^{10} \rightarrow & \ \eta^{103}+\eta^{10y} , \\
\eta^{11} \rightarrow & \ \eta^{113} , \\
\eta^{13} \rightarrow & \ \eta^{103}+\eta^{113}+\eta^{13'},
\end{cases}
\quad
\begin{cases}
\eta^{103} &: e_1=t=w_2=0, \quad
\eta^{10y} : e_1=t=y=0, \\
\eta^{113} &: e_1=w_1=w_2=0, \quad 
\eta^{13'}  : e_1=w_2={\widetilde{a}_2}w_3x+st=0.
\end{cases}
\end{split}
\end{align}
This new fiber structure when $a_1=0$ is the diagram in the third row of Figure \ref{I2sI4s}.  The intersection numbers between the new curves of this enhancement and the fibral divisors of I$_2^{\text{ns}}$ and I$_4^{\text{s}}$ are
\begin{equation}
\begin{array}{c|cccccc}
 & D_0^{\text{s}} & D_1^{\text{s}} & D_0^t & D_1^t & D_2^t & D_3^t \\
 \hline
\eta^{00} & -2 & 2 & 0 & 0 & 0 & 0 \\
\eta^{103} & 0 & 0 & 0 & 1 & 0 & -1 \\
\eta^{10y} & 1 & -1 & -1 & 0 & 0 & 1 \\
\eta^{113} & 0 & 0 & 1 & -2 & 1 & 0 \\
\eta^{12} & 0 & 0 & 0 & 1 & -2 & 1 \\
\eta^{13'} & 0 & 0 & 0 & 1 & 0 & -1 
\end{array}
\end{equation}
The weight of the curve $\eta^{103}$ is $[0;-1,0,1]$, which corresponds to the representation $(\bf{1},\bf{6})$. The curve $\eta^{10y}$ is the same as before but with a degeneracy of two, with the weight $[1;0,0,-1]$ that corresponds to the representation $(\bf{2},\bf{4})$. The curve $\eta^{113}$ carries the same weight as the curve $\eta^{11}$ earlier due to the linear relation, yielding the weight $[0;2,-1,0]$, which corresponds to the representation $(\bf{1},\bf{15})$. The weight of the curve $\eta^{13'}$ is then computed as $[0;-1,0,1]$, which corresponds to the representation $(\bf{1},\bf{6})$. Interestingly, all the new curves are in the adjoint representation of $\frak{su}(4)$ and uncharged under $\frak{su}(2)$.
Consider when $a_1={\widetilde{a}_2}=0$, which produces a codimension-four specialization. The only curve that transforms is 
\begin{equation}
\eta^{13'} \rightarrow \eta^{103}: e_1=t=w_2=0,
\end{equation}
which changes  the geometry into the bottom diagram of Figure \ref{I2sI4s}. 

\begin{figure}[H]
\begin{center}
\includegraphics[scale=1]{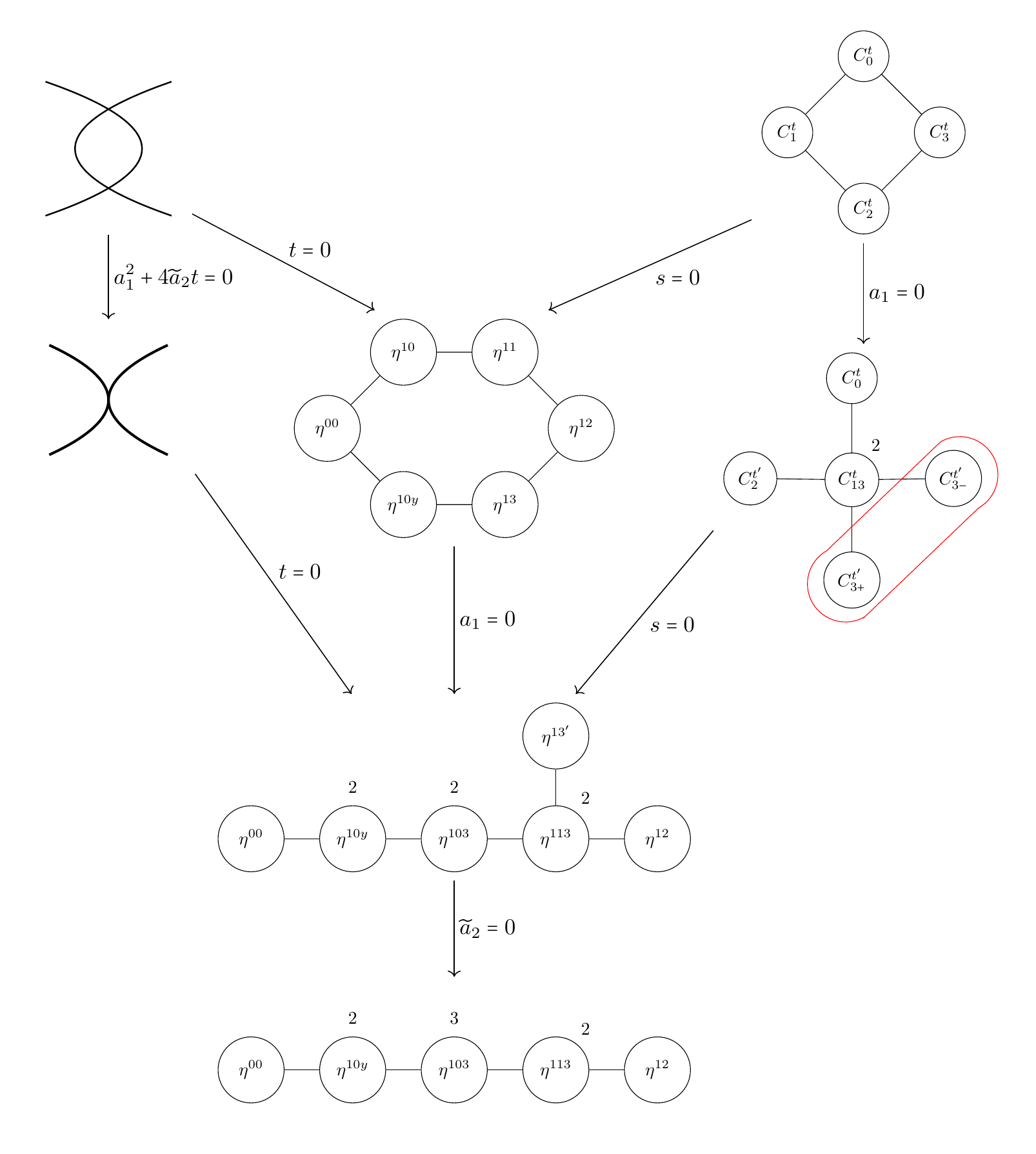}
\caption{Fiber structure of I$_2^{\text{ns}}+$I$_4^{\text{s}}$ with a Mordell-Weil group $\mathbb{Z}_2$.The diagrams on the top are the Kodaira fibers of type I$_2^{\text{ns}}$ and I$_4^{\text{s}}$. When they collide on the locus $se_1=tw_1w_2w_3=0$, we get the fiber structure of the collision of I$_2^{\text{ns}}$ and I$_4^{\text{s}}$ as a hexagon drawn in the middle. This enhancement has two newly split curves $\eta^{10}$ and $\eta^{10y}$ that give the representations $(\bf{2},\bf{4})\oplus (\bf{2},\bf{\bar{4}})$. When $a_1=0$, this hexagon further specializes to the diagram on the third row. When $a_1={\widetilde{a}_2}=0$, this further specializes into the diagram on the bottom.} 
\label{I2sI4s} 
\end{center}
\end{figure}

\subsection{Coulomb phases}
We determine the chamber structures by considering the weights that are sign indefinite. First we find the weights that constrain the chamber structures. Denote $\frak{su}(2)$ by $[\psi]$  and denote $\frak{su}(4)$ by $[\varphi_1,\varphi_2,\varphi_3]$. For $(\bf{1},\bf{6})$, there is nothing charged under $\frak{su}(2)$ and thus we only have one relation coming from the representation $\bf{6}$:
\begin{equation}
-\varphi_1+\varphi_3 .
\end{equation}
From the representation $ (\bf{2},\bf{4})$, we have the following relations:
\begin{align}
\psi-\varphi_1+\varphi_2, \quad \psi-\varphi_2+\varphi_3, \quad \psi-\varphi_3, \quad -\psi+\varphi_1, \quad -\psi-\varphi_1+\varphi_2, \quad -\psi-\varphi_2+\varphi_3.
\end{align}
Lastly, the representation $(\bf{2},\bf{\bar{4}})$ gives the following relations:
\begin{align}
\psi+\varphi_2-\varphi_3, \quad \psi+\varphi_1-\varphi_2, \quad \psi-\varphi_1, \quad -\psi+\varphi_3, \quad -\psi+\varphi_2-\varphi_3, \quad -\psi+\varphi_1-\varphi_2 .
\end{align}
Compositing all these relations, we have a total of seven independent relations that are given by
\begin{equation}
-\varphi_1+\varphi_3, \quad \psi-\varphi_1+\varphi_2, \quad \psi-\varphi_2+\varphi_3, \quad -\psi+\varphi_1, \quad -\psi-\varphi_1+\varphi_2, \quad -\psi-\varphi_2+\varphi_3, \quad -\psi+\varphi_3 .
\label{7rel}
\end{equation}
We can compute the region of the Weyl chamber from the Cartan Matrix:
\begin{center}
\begin{tabular}{c}
$\psi$ \\
\hline
2 \\
-2
\end{tabular} \quad \raisebox{-17pt}{,} \quad
\begin{tabular}{ccc}
$\varphi_1$ & $\varphi_2$ & $\varphi_3$ \\
 \hline
2 & -1 & 0 \\
-1 & 2 & -1 \\
0 & -1 & 2
\end{tabular}
\raisebox{-23pt}{.}
\end{center}
Thus the Weyl chamber is defined by the hyperplanes
\begin{equation}
\psi>0, \quad 2\varphi_1-\varphi_2>0, \quad -\varphi_1+2\varphi_2-\varphi_3>0, \quad -\varphi_2+2\varphi_3>0 .
\end{equation}

The weights corresponding to the seven entries in equation \eqref{7rel} are composited into a vector
\begin{equation}
v=(\varpi_3, \varpi_4, \varpi_5, \varpi_6, \varpi_7, \varpi_8, \varpi_9),
\end{equation}
where each weight is written in the form of a sign vector $[\psi_1;\phi_1 ,\phi_2, \phi_3]$:
\begin{align}
& \varpi_3=[0;-1,0,1], \quad \varpi_4=[1;-1,1,0], \quad \varpi_5=[1;0,-1,1], \quad \varpi_6=[-1;1,0,0], \\
&\varpi_7=[-1;-1,1,0], \quad \varpi_8=[-1;0,-1,1], \quad \varpi_9=[-1;0,0,1] ,
\end{align} 
with $1$ denoting the entry to be positive and $0$ denoting the entry to be negative. Then we get total twelve different chambers, which are denoted by  their weight vector:

\begin{equation}
\begin{tabular}{|c|c|c|c|}
\hline
 & $5ab^-\  (0110000)$ & $5ab^+\  (1110000)$ &   \\
 & $4ab^-\  (0111000)$& $4ab^+\  (1110001)$ &  \\
 & $3ab^-\  (0111001)$ &$3ab^+\  (1111001)$ &  \\
 & $2ab^-\  (0101001)$& $2ab^+\  (1111101)$ &  \\
 $1a^-\  (0001001)$ & \ $1b^-\  (0101101)$& \ $1b^+\  (1101101)$  &$1a^+\  (1111111)$   \\
 \hline
\end{tabular}.
\end{equation}\\

\begin{figure}[H]
\begin{center}
\scalebox{1}{
\begin{tikzpicture}[scale=2.2]
\coordinate (L1) at (0:0);
\coordinate (L2) at (0:5);
\coordinate (L3) at (60:5);
\coordinate (A1)  at  (barycentric cs:L1=0.7,L3=0.3);
\coordinate (A2)  at  (barycentric cs:L1=0.6,L3=0.4);
\coordinate (A3)  at  (barycentric cs:L1=0.4,L3=0.6);
\coordinate (A4)  at  (barycentric cs:L1=0.25,L3=0.75);
\coordinate (B1)  at  (barycentric cs:L2=0.7,L3=0.3);
\coordinate (B3)  at  (barycentric cs:L2=0.4,L3=0.6);
\coordinate (B4)  at  (barycentric cs:L2=0.25,L3=0.75);
\coordinate (C2)  at  (barycentric cs:L2=0.6,L1=0.4);
\coordinate (C3)  at  (barycentric cs:L2=0.4,L1=0.6);

\draw[thick]  (A1)--+(-75:1.45);
\draw[thick]  (B1)--+(255:1.45);

\node  at  (1.8,-.2) {\scalebox{1}{$\mathbf{\varpi_1}$}};
\node  at  (3.1,-.2) {\scalebox{1}{$\mathbf{\varpi_2}$}};

\draw[thick]  (L3)--+(0,-4.5);
\draw[thick]  (A3)--+(-18:2.8);
\draw[thick]  (B3)--+(198:2.8);
\draw[thick]  (C2)--+(145:2.8);
\draw[thick]  (C3)--+(35:2.8);
\draw[thick]  (L1)--(L2)--(L3)--cycle;
\node  at  (.6,.1) {\scalebox{1}{$1a^-$}};

\node  at  (2.35,.1) {\scalebox{1}{$1b^-$}};
\node  at  (2,.4) {\scalebox{1}{$2ab^-$}};
\node  at  (2,1.5) {\scalebox{1}{$3ab^-$}};
\node  at  (2,2.25) {\scalebox{1}{$4ab^-$}};
\node  at  (2,2.7) {\scalebox{1}{$5ab^-$}};

\node  at  (2.65,.1) {\scalebox{1}{$1b^+$}};
\node  at  (3.1,.4) {\scalebox{1}{$2ab^+$}};
\node  at  (3.1,1.5) {\scalebox{1}{$3ab^+$}};
\node  at  (3.1,2.25) {\scalebox{1}{$4ab^+$}};
\node  at  (3.1,2.7) {\scalebox{1}{$5ab^+$}};

\node  at  (4.4,.1) {\scalebox{1}{$1a^+$}};

\node  at  (1.2,-.2) {\scalebox{1}{$\mathbf{\varpi_4}$}};
\node  at  (2.5,-.2) {\scalebox{1}{$\mathbf{\varpi_3}$}};
\node  at  (3.8,-.2) {\scalebox{1}{$\mathbf{\varpi_8}$}};

\node  at  (.7,1.8) {\scalebox{1}{$\mathbf{\varpi_9}$}};
\node  at  (4.2,1.8) {\scalebox{1}{$\mathbf{\varpi_6}$}};
\node  at  (4.4,1.5) {\scalebox{1}{$\mathbf{\varpi_7}$}};
\node  at  (.6,1.5) {\scalebox{1}{$\mathbf{\varpi_5}$}};

\end{tikzpicture}
}
\end{center}
\label{ChamberPlaneReduced}
\caption{This is the chamber structure of the I$_2^{\text{ns}}+$I$_4^{\text{s}}$-model with a Mordell-Weil  group $\mathbb{Z}_2$ in a planar diagram.}
\end{figure}
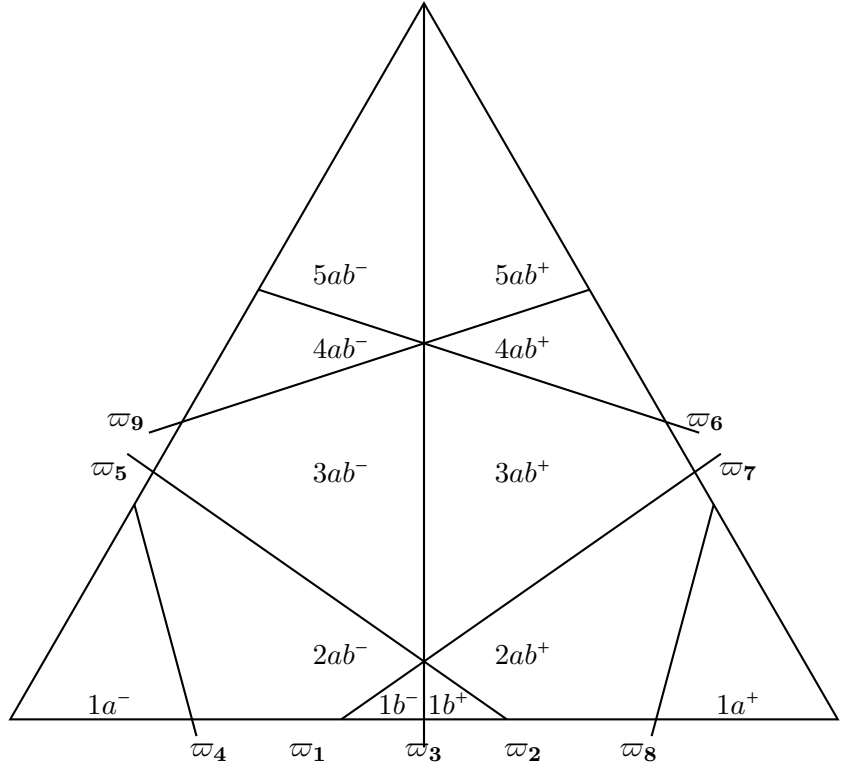

\subsection{$5d$  $\mathcal{N}=1$ prepotentials and the triple intersection polynomials}
The triple intersection polynomial is computed for the I$_2^{\text{ns}}+$I$_4^{\text{s}}$-model with the Mordell-Weil group $\mathbb{Z}_2$ in the crepant resolution that we geometrically obtained earlier. We note that the triple intersection is identical to the one we computed for the I$_2^{\text{ns}}+$I$_4^{\text{s}}$-model with a trivial Mordell-Weil group: 
\begin{align}
\begin{split}
\mathscr{F}_{trip}=&-8 \left(6 L^2-5 L T+T^2\right) \psi _1^3+4 T(L-T)\left(\phi _1^3+\phi _2^3\right) +2 T (L-2 T) \phi _3^3 \\
&+3 T (T-2 L)\phi _2^2 \left(\phi _1+\phi _3\right)+3LT\phi _2 \left(\phi _1^2+\phi _3^2\right) \\
&-6 L T \phi _1^2 \phi _3+6 L T \phi _1 \phi _2 \phi _3+12T(T-2L) \psi _1 \left(\phi _1^2-\phi _2 \phi _1+\phi _2^2+\phi _3^2-\phi _2 \phi _3\right) \\
&-8(2L-T)(3L-2T)\psi _0^3 -24(2L-T)(T-L)\psi _0^2 \psi _1 +24L(2L-T) \psi _0 \psi _1^2 
-4LT \phi _0^3  \\
&+3 T \phi _0 \left(2 \phi _1 \left(L \phi _3+2(2 L-T)\psi _1\right)+(2 T-3 L)\phi _1^2+(2 T-3 L)\phi _3^2 -4(2 L-T)\left(\psi _0-\psi _1\right)^2\right)\\
&+12T(2L-T)\phi _0\psi _1 \phi _3+(2 L-T) \phi _0^2 \left(-4 \psi _1+\phi _1+\phi _3\right).
\end{split}
\end{align}
For this split model with $\mathbb{Z}_2$, the representation is
\begin{equation}
\bf{R}=(\bf{3},\bf{1})\oplus(\bf{2},\bf{4})\oplus(\bf{2},\bf{\bar{4}})\oplus(\bf{1},\bf{6})\oplus(\bf{1},\bf{15}).
\end{equation}
Then the prepotential for this model in the chamber $5ab+$ is given as
\begin{align}
\begin{split}
6\mathscr{F}_{\text{IMS}}=& \ -8 (n_{\bf{1},\bf{15}}-1) \phi _1^3-(8 n_{\bf{1},\bf{15}}-8)\phi _2^3 -2 (4 n_{\bf{1},\bf{15}}+n_{\bf{1},\bf{6}}-4) \phi _3^3 \\
& -4 (n_{\bf{2},\bf{4}}+n_{\bf{2},\bf{\bar{4}}}+2 n_{\bf{3},\bf{1}}-2) \psi _1^3 +\frac{3}{2} (4 n_{\bf{1},\bf{15}}-2 n_{\bf{1},\bf{6}}-4) \phi _2^2 (\phi _1+\phi _3) \\
& -\frac{3}{2} (-2 n_{\bf{1},\bf{6}}) \phi _2 (\phi _1^2 + \phi _3^2) -6 n_{\bf{1},\bf{6}} \phi _1^2\phi _3 +6 n_{\bf{1},\bf{6}} \phi _1\phi _2 \phi _3 \\
&-6 (n_{\bf{2},\bf{4}}+n_{\bf{2},\bf{\bar{4}}}) \psi _1 \left(\phi _1^2 -\phi _1 \phi _2 +\phi _2^2 +\phi _3^2 -\phi _2 \phi _3 \right) .
\end{split}
\end{align}
In order to compute the number of representations $n_R$, we take the triple intersection number to be the same as the prepotential. This fixes every $n_{\bf{R}}$ except $n_{\bf{2},\bf{4}}$ and $n_{\bf{2},\bf{\bar{4}}}$, as only their sum is fixed by 
\begin{equation}
n_{\bf{2},\bf{4}}+n_{\bf{2},\bf{\bar{4}}}=ST .
\end{equation}
Using the fact that $n_{\bf{2},\bf{4}}=n_{\bf{2},\bf{\bar{4}}}$, we conclude the number of representations to be
\begin{align}
\begin{split}
&n_{\bf{3},\bf{1}}=\frac{1}{4} \left(2 L S+S^2-2 S T+4\right)=6 L^2-7 L T+2 T^2+1, \quad n_{\bf{2},\bf{4}}=n_{\bf{2},\bf{\bar{4}}}=\frac{1}{2} S T=T (2 L-T), \\
&n_{\bf{1},\bf{6}}=-3 L T+S T+2 T^2=L T, \quad n_{\bf{1},\bf{15}}=\frac{1}{2} \left(-L T+T^2+2\right) .
\end{split}
\end{align}
When the Calabi-Yau condition, $L=-K$,
\begin{equation}
n_{\bf{3},\bf{1}}=g_S, \quad n_{\bf{1},\bf{15}}=g_T.
\end{equation}

\subsection{$6d$ $\mathcal{N}=(1,0)$ anomaly cancellation}
In this section, we consider an I$_2^{\text{ns}}+$I$_4^{\text{s}}$-model with the Mordell-Weil group $\mathbb{Z}_2$. Then, the gauge algebra is given by
\begin{equation}
\frak{g}=A_1+A_3 ,
\end{equation}
and the representation is geometrically computed in section \ref{sec:splitZ2} to be 
\begin{equation}
\bf{R}= (\bf{3},\bf{1})\oplus(\bf{1},\bf{15}) \oplus (\bf{2},\bf{4})\oplus (\bf{2},\bf{\bar{4}})\oplus(\bf{1},\bf{6}).
\end{equation}
Then, the number of vector multiplets $n_V^{(6)}$, tensor multiplets $n_T$, and hypermultiplets $n_H$ are 
\begin{align}
\begin{split}
n_V^{(6)}&=18, \quad n_T=9-K^2, \\
n_H&=h^{2,1}(Y)+1+n_{\bf{3},\bf{1}}(3-1)+n_{\bf{2},\bf{4}} (8-0)+n_{\bf{2},\bf{\bar{4}}} (8-0) + n_{\bf{1},\bf{6}} (6-0) + n_{\bf{1},\bf{15}} (15-3)\\
&=17 K^2+18 K T+6 T^2+15 .
\end{split}
\end{align}
 We see that
\begin{equation}
n_H-n_V^{(6)}+29n_T-273=0, 
\end{equation}
so we can conclude that the pure gravitational anomalies are canceled.
We recall that the number of representations are
\begin{align}
\begin{split}
&n_{\bf{3},\bf{1}}=(2 K+T) (3 K+2 T)+1, \quad n_{\bf{2},\bf{4}}=n_{\bf{2},\bf{\bar{4}}}=-T (2K+T), \\
&n_{\bf{1},\bf{6}}=-K T, \quad n_{\bf{1},\bf{15}}=\frac{1}{2} (KT+T^2+2) .
\end{split}
\end{align}
We use the trace identities for $\text{SU($2$)}$ and $\text{SU($4$)}$, given by equations \eqref{eq:SU2trace} and \eqref{eq:SU4trace} to compute the remainder terms of the anomaly polynomial. We first compute the $\text{SU($2$)}$ side contribution of the anomaly polynomials. The number of representations $n_{\bf{R}}$ are then identified as
\begin{equation}
n_{\bf{3}}=n_{\bf{3},\bf{1}} , \quad n_{\bf{2}}=4 (n_{\bf{2},\bf{4}}+n_{\bf{2},\bf{\bar{4}}}).
\end{equation}
Hence, $X^{(2)}_1$ and $X^{(4)}_1$ are given by
\begin{align}
X^{(2)}_{1}&=\left(A_{\bf{3}}(1-n_{\bf{3}})-n_{\bf{2}}A_{\bf{2}}\right)\tr_{\bf{2}}F^2_1 =-12K(2K+T)\tr_{\bf{2}}F^2_1\\
\begin{split}
X^{(4)}_{1}&=\left(B_{\bf{3}}(1-n_{\bf{3}})-n_{\bf{2}}B_{\bf{2}}\right)\tr_{\bf{2}}F^4_1+\left(C_{\bf{3}}(1-n_{\bf{3}})-n_{\bf{2}}C_{\bf{2}}\right) (\tr_{\bf{2}}F^2_1)^2 \\
&=-12(2K+T)^2 (\tr_{\bf{2}}F^2_1)^2.
\end{split}
\end{align}

Now consider the contribution from the $\text{SU($4$)}$ side of the anomaly cancellation. We determine the number of representations to be
\begin{equation}
n_{\bf{4}}=2 (n_{\bf{2},\bf{4}}+n_{\bf{2},\bf{\bar{4}}}) , \quad n_{\bf{6}}=n_{\bf{6},\bf{1}}, \quad n_{\bf{15}}=n_{\bf{15},\bf{1}} .
\end{equation}
Hence, $X^{(2)}_2$ and $X^{(4)}_2$ are given by
\begin{align}
X^{(2)}_{2}&=\left(A_{\bf{15}}(1-n_{\bf{15}})-n_{\bf{6}}A_{\bf{6}}-n_{\bf{4}}A_{\bf{4}}\right)\tr_{\bf{4}}F^2_2 =6K T \tr_{\bf{4}}F^2_2\\
\begin{split}
X^{(4)}_{2}&=\left(B_{\bf{15}}(1-n_{\bf{15}})-n_{\bf{6}}B_{\bf{6}}-n_{\bf{4}}B_{\bf{4}}\right)\tr_{\bf{4}}F^4_2+\left(C_{\bf{15}}(1-n_{\bf{15}})-n_{\bf{6}}C_{\bf{6}}-n_{\bf{4}}C_{\bf{4}}\right) (\tr_{\bf{4}}F^2_2)^2 \\ &=-3T^2 (\tr_{\bf{4}}F^2_2)^2
\end{split}
\end{align}
Since we have a semisimple group with two simple components, we must include  the additional mixed term
\begin{equation}
Y_{12}=(n_{\bf{2},\bf{4}}+n_{\bf{2},\bf{\bar{4}}}) \tr_{\bf{2}}F^2_1 \tr_{\bf{4}}F^2_2
\end{equation}
to fully consider the bifundamental representation $(\bf{2},\bf{4})$. As a result, the full anomaly polynomial is given by
\begin{align}
\begin{split}
I_8&=\frac{9-n_T}{8} (\tr R^2)^2 +\frac{1}{6} (X^{(2)}_{1} +X^{(2)}_{2}) \tr R^2-\frac{2}{3} (X^{(4)}_{1}+X^{(4)}_{2})+4Y_{12} \\
&=\frac{1}{8} \left(K R-16 K \tr_{\bf{4}}F^2_2-8 T\tr_{\bf{4}}F^2_2+4 T \tr_{\bf{2}}F^2_1\right)^2,
\end{split}
\end{align}
which is a perfect square. Hence, we can conclude that the anomalies are all canceled when uplifted to a six-dimensional $\mathcal{N}=(1,0)$ theory.

\section{$\text{SU($2$)}\times \text{SU($4$)}$-model} \label{sec:splitNoZ2}

We consider an $\text{SU($2$)}\times \text{SU($4$)}$-model with a trivial Mordell-Weil group.  The Weierstrass equation for I$_2^{\text{ns}}+$I$_4^{\text{s}}$ is
\begin{equation}
y^2z+a_1xyz=x^3+{\widetilde{a}_2}tx^2z+{\widetilde{a}_4}st^2xz^2+{\widetilde{a}_6}s^2t^4 z^3 .
\end{equation}
The discriminant for this model is
\begin{align}
\begin{split}
\Delta=-s^2 t^4 &\left[
(a_1^2+4 {\widetilde{a}_2} t)^2 (a_1^2 {\widetilde{a}_6}+4 {\widetilde{a}_2} {\widetilde{a}_6} t-{\widetilde{a}_4}^2)
-8 s t^2 (9 a_1^2 {\widetilde{a}_4} {\widetilde{a}_6}+36 {\widetilde{a}_2} {\widetilde{a}_4} {\widetilde{a}_6} t-8 {\widetilde{a}_4}^3-54 {\widetilde{a}_6}^2 s t^2)\right].
\end{split}
\end{align}
The corresponding gauge group $G$ and the representation $\mathbf{R}$ are respectively 
\begin{align}
&G=\text{SU($2$)}\times \text{SU($4$)} , \quad 
\bf{R}=
(\bf{3},\bf{1})\oplus(\bf{1},\bf{15})\oplus(\bf{2},\bf{4})\oplus(\bf{2},\bf{\bar{4}})\oplus(\bf{1},\bf{6})\oplus 
(\bf{2},\bf{1})\oplus(\bf{1},\bf{4})\oplus(\bf{1},\bf{\bar{4}}).
\end{align}
The representation $\mathbf{R}$ is derived geometrically in the next subsection. 
The following sequence of blowups gives a crepant resolution:
\begin{equation}
  \begin{tikzcd}[column sep=huge] 
  X_0  \arrow[leftarrow]{r} {(x,y,s|e_1)} & \arrow[leftarrow]{r} {(x,y,t|w_1)}  X_1 &X_2  \arrow[leftarrow]{r}{(y,w_1|w_2)} &X_3 \arrow[leftarrow]{r}{(x,w_2|w_3)} & X_4.
  \end{tikzcd}
\end{equation}
Note that unlike other models, this required one more blowup to fully resolve the singularities. The proper transform is
\begin{equation}
w_2y^2+a_1xy=w_1 (e_1w_3^2x^3+{\widetilde{a}_2}w_3tx^2+{\widetilde{a}_4}st^2x+{\widetilde{a}_6}w_1w_2s^2t^4),
\end{equation}
where the relative projective coordinates are given by
\begin{equation}
[e_1w_1w_2w_3^2x : e_1w_1w_2^2w_3^2y : z=1][w_1w_2w_3^2x : w_1w_2^2w_3^2y : s][w_3x : w_2w_3y : t][y : w_1][x : w_2].
\end{equation}

\subsection{Fiber structure}
This model has the following fibral divisors that correspond to their curves: 
\begin{align}
\begin{cases}
 D_0^{\text{s}} &: s=w_2y^2+a_1xy-w_1w_3x^2(e_1w_3x+{\widetilde{a}_2}t)=0 , \\
 D_1^{\text{s}} &: e_1=w_2y^2+a_1xy-w_1t({\widetilde{a}_2}w_3x^2+{\widetilde{a}_4}stx+{\widetilde{a}_6}w_1w_2s^2t^3)=0 , \\
 D_0^t &: t=w_2y^2+a_1xy-e_1w_1w_3^2x^3=0 , \\
 D_1^t &: w_1=w_2y+a_1x=0, \\
 D_2^t &: w_3=w_2y^2+a_1xy-w_1st^2({\widetilde{a}_4}x+{\widetilde{a}_6}w_1w_2st^2)=0, \\
 D_3^t &: w_2=a_1y-w_1(e_1w_3^2x^2+{\widetilde{a}_2}w_3tx+{\widetilde{a}_4}st^2)=0 .
\end{cases}
\end{align}
 The generic fiber of $D_1^2$ is a conic that degenerates into two lines over its discriminant locus $V(a_1^2{\widetilde{a}_6}+4{\widetilde{a}_2}{\widetilde{a}_6} t-{\widetilde{a}_4}^2)$. The the resulting fiber is of type I$_3^{\text{ns}}$ and the curves 
 $C_1^{\text{s}}\to C_{1+}^{\text{s}}+C_{1-}^{\text{s}}$ yield weights of the fundamental representation of A$_1$. 
 The intersection numbers between the curves and the fibral divisors of I$_2^{\text{ns}}$ are 
\begin{equation}
C_1^{\text{s}} \rightarrow C_{1+}^{\text{s}}+C_{1-}^{\text{s}}, \hspace{2cm}
\begin{tabular}{c|cccccc}
 & $D_0^{\text{s}}$ & $D_1^{\text{s}}$ & $D_0^t$ & $D_1^t$ & $D_2^t$ & $D_3^t$ \\
 \hline
$C_0^{\text{s}}$ & -2 & 2 & 0 & 0 & 0 & 0 \\
$C_{1\pm}^{\text{s}}$ & 1 & -1 & 0 & 0 & 0 & 0 \\
$C_{1+}^{\text{s}}+C_{1-}^{\text{s}}$ & 2 & -2 & 0 & 0 & 0 & 0
\end{tabular}
\quad \raisebox{-30pt}{.}
\end{equation}
At the intersection of $S$ and $T$, we get a fiber of type I$_6^{\text{s}}$. Its components are 
\begin{align}
\begin{cases}
 C_0^t\cap C_1^t &: t=w_1=w_2y+a_1x=0, \\
 C_1^t\cap C_2^t &: w_1=w_3=w_2y+a_1x=0, \\
 C_2^t\cap C_3^t &: w_2=w_3=a_1y-{\widetilde{a}_4}w_1st^2=0, \\
 C_3^t\cap C_0^t &: t=w_2=a_1y-e_1w_1w_3^2x^2=0 .
\end{cases}
\end{align}
This specializes to an I$_0^{ss}$ when $a_1=0$, just like  the other model (with the Mordell-Weil group $\mathbb{Z}_2$) in the previous section.
The curve $C_1^t$ specializes into the central node $C_{13}^t$ where
\begin{equation}
C_{13}^t : w_1=w_2=0 .
\end{equation}
The curve $C_3^t$ splits into the three curves $C_{13}^t$, which is the central node, and $C_{3 \pm}^{t'}$, which is given by the two roots of the curve
\begin{equation}
C_{3 \pm}^{t'} : w_2=e_1w_3^2x^2+{\widetilde{a}_2}w_3tx+{\widetilde{a}_4}st^2=0 .
\end{equation}
Thus we establish a Kodaira fiber of type I$_0^{*ss}$ when $a_1=0$. This is expected as this is exactly the specialization under the same condition from the other model.

Since the locus of this specialization happens when $tw_1w_2w_3=0$ but away from the locus of $se_1=0$, there is no charged matter under $\frak{su}(2)$. Then we can compute the intersection numbers between the curves and the fibral divisors of I$_4^{\text{s}}$, from the splitting of the curves 
\begin{align}
\begin{cases}
C_1^t &\rightarrow C_{13}^t, \\
C_3^t &\rightarrow C_{13}^t+C_{3+}^{t'}+C_{3-}^{t'},
\end{cases}
\end{align}
where $C_{3\pm}^{t'}$ are of a non-split type:

\begin{center}
\begin{tabular}{c|cccccc}
 & $D_0^{\text{s}}$ & $D_1^{\text{s}}$  & $D_0^t$ & $D_1^t$ & $D_2^t$ & $D_3^t$ \\
 \hline
$C_0^t$ & 0 & 0 & -2 & 1 & 0 & 1 \\
$C_{13}^t$ & 0 & 0 & 1 & -2 & 1 & 0 \\
$C_2^t$ & 0 & 0 & 0 & 1 & -2 & 1 \\
$C_{3+}^{t'}+C_{3-}^{t'}$ & 0 & 0 & 0 & 2 & 0 & -2
\end{tabular}
\quad \raisebox{-30pt}{.}
\end{center}

For the curve $C_{13}^t$, the weight is computed to be $[2,-1,0]$, which corresponds to the representation $\bf{15}$. The curves $C_{3\pm}^{t'}$ each carries the weight $[-1,0,1]$ that yields the representation $\bf{6}$, which is the fundamental representation of $\frak{so}(6)$. Since these two curves are nonsplit, when we compute the weight of them together as $C_{3}^{t'}=C_{3+}^{t'}+C_{3-}^{t'}$, the weight is given by $[-2,0,2]$; the representation is then computed as $\bf{20'}$. Since the locus is away from $se_1=0$, it is uncharged on the side for $\frak{su}(2)$ and hence the representation for the whole product group $\frak{su}(2)\times\frak{su}(4)$ is $\bf{(1,6)}$. It is important to note that the representation we get from the specialization to an $I_0^{*ss}$ is the same for all $\bf{6}$, $\bf{15}$, and $\bf{20'}$ with the previous model (with the Mordell-Weil group $\mathbb{Z}_2$).
This has a codimension-three specialization when $a_1={\widetilde{a}_4}=0$. The curves split as
\begin{align}
\begin{cases}
C_{2}^{t} &\rightarrow C_{23}^t+C_{2\pm}^{t'}, \\
C_{3\pm}^{t'} &\rightarrow C_{23}^t+C_3^{t'},
\end{cases}
\end{align}
where the new curves are given by
\begin{align}
\begin{cases}
C_{23}^t &: w_2=w_3=0, \\
C_{2\pm}^{t'} &: w_3=y^2-{\widetilde{a}_6}w_1^2s^2t^4=0, \\
C_3^{t'} &: w_2=e_1w_3x+{\widetilde{a}_2}t=0.
\end{cases}
\end{align}
This gives the codimension three enhancement, which has a different fiber structure as presented in Figure \ref{I2sI4sNoZ2}. Using these splittings of the curve, we compute the intersection numbers between the curves and the fibral divisors of I$_4^{\text{s}}$ to be
\begin{equation}
\begin{tabular}{c|cccccc}
 & $D_0^{\text{s}}$ & $D_1^{\text{s}}$ & $D_0^t$ & $D_1^t$ & $D_2^t$ & $D_3^t$ \\
 \hline
$C_0^{t'}$ & 0 & 0 & -2 & 2 & 0 & 0 \\
$C_3^{t'}$ & 0 & 0 & 0 & 2 & 0 & -2 \\
$2 \, C_{13}^t$ & 0 & 0 & 2 & -4 & 0 & 2 \\
$2 \, C_{23}^t$ & 0 & 0 & 0 & 0 & 2 & -2 \\
$C_{2+}^{t'}+C_{2-}^{t'}$ & 0 & 0 & 0 & 0 & -2 & 2 \\
\end{tabular}
\quad \raisebox{-40pt}{.}
\end{equation}
The new curves are $C_3^{t'}$, which gives the weight $[-2,0,2]$, $C_{23}^{t}$, which gives the weight $[0,-1,1]$, and $C_{2\pm}^{t'}$, which individually gives the weight $[0,1,-1]$. They respectively correspond to the representations $\bf{20'}$, $\bf{\bar{4}}$, and $\bf{4}$. When we consider the two non-split type curves together as $C_2^{t'}=C_{2+}^{t'}+C_{2-}^{t'}$, the weight is then $[0,2,-2]$ which yields the representation $\bf{\bar{10}}$.

This model with a trivial Mordell-Weil group has an additional specialization from I$_4^{\text{s}}$ to an I$_5^{\text{s}}$ when $a_1^2 {\widetilde{a}_6}-{\widetilde{a}_4}^2=0$. Under this condition, the curve $C_2^t$ splits into two curves $C_2^{t'}$ and $C_2^{t''}$ as
\begin{align}
\begin{cases}
 C_2^t : w_3 &=w_2y^2+a_1xy-w_1st^2({\widetilde{a}_4}x+{\widetilde{a}_6}w_1w_2st^2) \\
 &=\frac{1}{a_1^2}(a_1y-{\widetilde{a}_4}w_1st^2)(a_1w_2y+a_1^2x-w_1w_2st^2)=0; \\
 C_2^{t'} : w_3 &=a_1y-{\widetilde{a}_4}w_1st^2=0, \\
 C_2^{t''} :w_3 &=a_1w_2y+a_1^2x-w_1w_2st^2=0.
\end{cases}
\end{align}
Thus, the fiber structure becomes an I$_5^{\text{s}}$, which is represented in Figure \ref{I2sI4sNoZ2}.

Based on the splitting of the curve $C_2^t$ where $C_0^t$, $C_1^t$, and $C_3^t$ remain the same, the intersection numbers between the curves and the fibral divisors of I$_4^{\text{s}}$ are computed to determine the weights of the new curves:
\begin{center}
\begin{tabular}{c|cccccc}
 & $D_0^{\text{s}}$ & $D_1^{\text{s}}$ & $D_0^t$ & $D_1^t$ & $D_2^t$ & $D_3^t$ \\
 \hline
$C_0^t$ & 0 & 0 & -2 & 1 & 0 & 1 \\
$C_1^t$ & 0 & 0 & 1 & -2 & 1 & 0 \\
$C_2^{t''}$ & 0 & 0 & 0 & 1 & -1 & 0 \\
$C_2^{t'}$ & 0 & 0 & 0 & 0 & -1 & 1 \\
$C_3^t$ & 0 & 0 & 1 & 0 & 1 & -2
\end{tabular}
\quad \raisebox{-40pt}{.}
\end{center}
The weights of the curves are  $[-1,1,0]$ for  the curve $C_2^{t'}$ and $[0,1,-1]$ for  the curve $C_2^{t''}$. They give the representations $\bf{4}$ and $\bf{\bar{4}}$ respectively. Since this collision is away from the locus of $se_1=0$, where the divisors of $\frak{su}(2)$ sit, there is no matter is charged under $\frak{su}(2)$. We therefore get $(\bf{1},\bf{4})\oplus(\bf{1},\bf{\bar{4}})$ for the charged matter.

In order to consider the collision of two types of Kodaira fibers, we need to enforce both $se_1=tw_1w_2=0$. Then we can get the following curves:
\begin{align}
\begin{cases}
C_0^{\text{s}}\cap C_0^t : & s=t=w_2y^2+a_1xy-e_1w_1w_3^2x^3=0 \rightarrow \eta^{00} , \\
C_1^{\text{s}}\cap C_0^t : & e_1=t=w_2y+a_1x=0 \rightarrow \eta^{10} , \\
& e_1=t=y=0 \rightarrow \eta^{10y} , \\
C_1^{\text{s}}\cap C_1^t : & e_1=w_1=w_2y+a_1x=0 \rightarrow \eta^{11} , \\
C_1^{\text{s}}\cap C_2^t : & e_1=w_3=w_2y^2+a_1xy-w_1st^2({\widetilde{a}_4}x+{\widetilde{a}_6}w_1w_2st^2)=0 \rightarrow \eta^{12} , \\
C_1^{\text{s}}\cap C_3^t : & e_1=w_2=a_1y-w_1t({\widetilde{a}_2}w_3x+{\widetilde{a}_4}st)=0 \rightarrow \eta^{13} .
\end{cases}
\end{align}

The fiber structure is given by the diagram in the second row of Figure \ref{I2sI4sNoZ2}. This has identical structure to the I$_2^{\text{ns}}\cap$I$_4^{\text{s}}$-model with the $\mathbb{Z}_2$  Mordell-Weil group.

The representations with respect to $(\frak{su}(2),\frak{su}(4))$ from this I$_2^{\text{ns}}+$I$_4^{\text{s}}$-model with a trivial Mordell-Weil group are summarized in Table \ref{Rep.noZ2model} below. Here we denote weights as $[\psi ;\varphi_1,\varphi_2,\varphi_3]$, where $[\psi]$ is the weight for the $\frak{su}(2)$ and $[\varphi_1,\varphi_2,\varphi_3]$ is the weight for the $\frak{su}(4)$.

\begin{table}[H]
\begin{center}
\scalebox{0.9}{\begin{tabular}{|c|c|c|c|c|c|c|c|}
\hline
Locus & \multicolumn{4}{c|}{$tw_1w_2=0$} & \multicolumn{2}{c|}{$se_1=tw_1w_2=0$} & $se_1=0$ \\
\hline
Curves & $C_{13}^t$ & $C_{3\pm}^{t'}$ &$C_2^{t'}$ &$C_2^{t''}$ & $\eta^{10}$ & $\eta^{10y}$ & $C_{1\pm}^{\text{s}}$ \\
\hline
Weights & $[0;2,-1,0]$ & $[0;-1,0,1]$ &$[0;0,1,-1]$ &$[0;-1,1,0]$ & $[1;-1,0,0]$ & $[1;0,0,-1]$ & $[-1;0,0,0]$\\
\hline
Rep & $(\bf{1},\bf{15})$ & $(\bf{1},\bf{6})$ &$(\bf{1},\bf{4})$ & $(\bf{1},\bf{\bar{4}})$ & $(\bf{2},\bf{\bar{4}})$ & $(\bf{2},\bf{4})$ & $(\bf{2},\bf{1})$ \\
\hline
\end{tabular}}
\end{center}
\caption{Weights and representations for the I$_2^{\text{ns}}+$I$_4^{\text{s}}$-model with a trivial Mordell-Weil group.}
\label{Rep.noZ2model}
\end{table}

This model has an additional $(\bf{2},\bf{1})\oplus(\bf{1},\bf{4})\oplus(\bf{1},\bf{\bar{4}})$ compared to the representations of the I$_2^{\text{ns}}+$I$_4^{\text{s}}$-model with the $\mathbb{Z}_2$ Mordell-Weil group:  
\begin{equation}
\bf{R}=
(\bf{3},\bf{1})\oplus(\bf{1},\bf{15})\oplus(\bf{2},\bf{4})\oplus(\bf{2},\bf{\bar{4}})\oplus(\bf{1},\bf{6})\oplus 
(\bf{2},\bf{1})\oplus(\bf{1},\bf{4})\oplus(\bf{1},\bf{\bar{4}}).
\end{equation}

{  When $a_1=0$, $\eta^{10}$ and $\eta^{13}$ split and $\eta^{11}$ produces a curve that intersects with three curves. The splittings of the curves when $a_1=0$ are 
\begin{align}
\label{I6sSplit}
\begin{split}
\begin{cases}
\eta^{10} \rightarrow & \ \eta^{103}+\eta^{10y} , \\
\eta^{11} \rightarrow & \ \eta^{113} , \\
\eta^{13} \rightarrow & \ \eta^{103}+\eta^{113}+\eta^{13'},
\end{cases}
\end{split}
\end{align}
where the new fibers are given by
\begin{align}
\begin{cases}
\eta^{103} &: e_1=t=w_2=0, \\
\eta^{10y} &: e_1=t=y=0, \\
\eta^{113} &: e_1=w_1=w_2=0, \\
\eta^{13'}  &: e_1=w_2={\widetilde{a}_2}w_3x+{\widetilde{a}_4}st=0.
\end{cases}
\end{align}
This new fiber structure when $a_1=0$ is the diagram in the third row of Figure \ref{I2sI4sNoZ2} as it is the same with the I$_2^{\text{ns}}+$I$_4^{\text{s}}$-model with the Mordell-Weil group $\mathbb{Z}_2$. 

Consider when $a_1={\widetilde{a}_2}=0$, which produces a codimension-four specialization. The only curve that transforms is 
\begin{equation}
\eta^{13'} \rightarrow \eta^{103}: e_1=t=w_2=0,
\end{equation}
thus changing the geometry into the bottom left diagram of Figure \ref{I2sI4sNoZ2}. 

Consider when $a_1={\widetilde{a}_4}=0$, which produces a codimension-four specialization. The only curve that transforms is 
\begin{align}
&\eta^{13'} \rightarrow \eta^{123}, \ \text{and} \ \eta^{13'} \rightarrow \eta^{123}+\eta^{12'}+\eta^{12''},
\end{align}
where the three new curves are given by
\begin{align}
\eta^{123} : e_1=w_2=w_3=0, \quad
\eta^{12'} : e_1=w_3=y+w_1st^2=0, \quad
\eta^{12''} : e_1=w_3=y-w_1st^2=0 .
\end{align}
thus changing the geometry into the bottom right diagram of Figure \ref{I2sI4sNoZ2}. }

\begin{figure}[H]
\begin{center}
\vspace{-1cm}
\lapbox{-1.2cm}
{\includegraphics[scale=1]{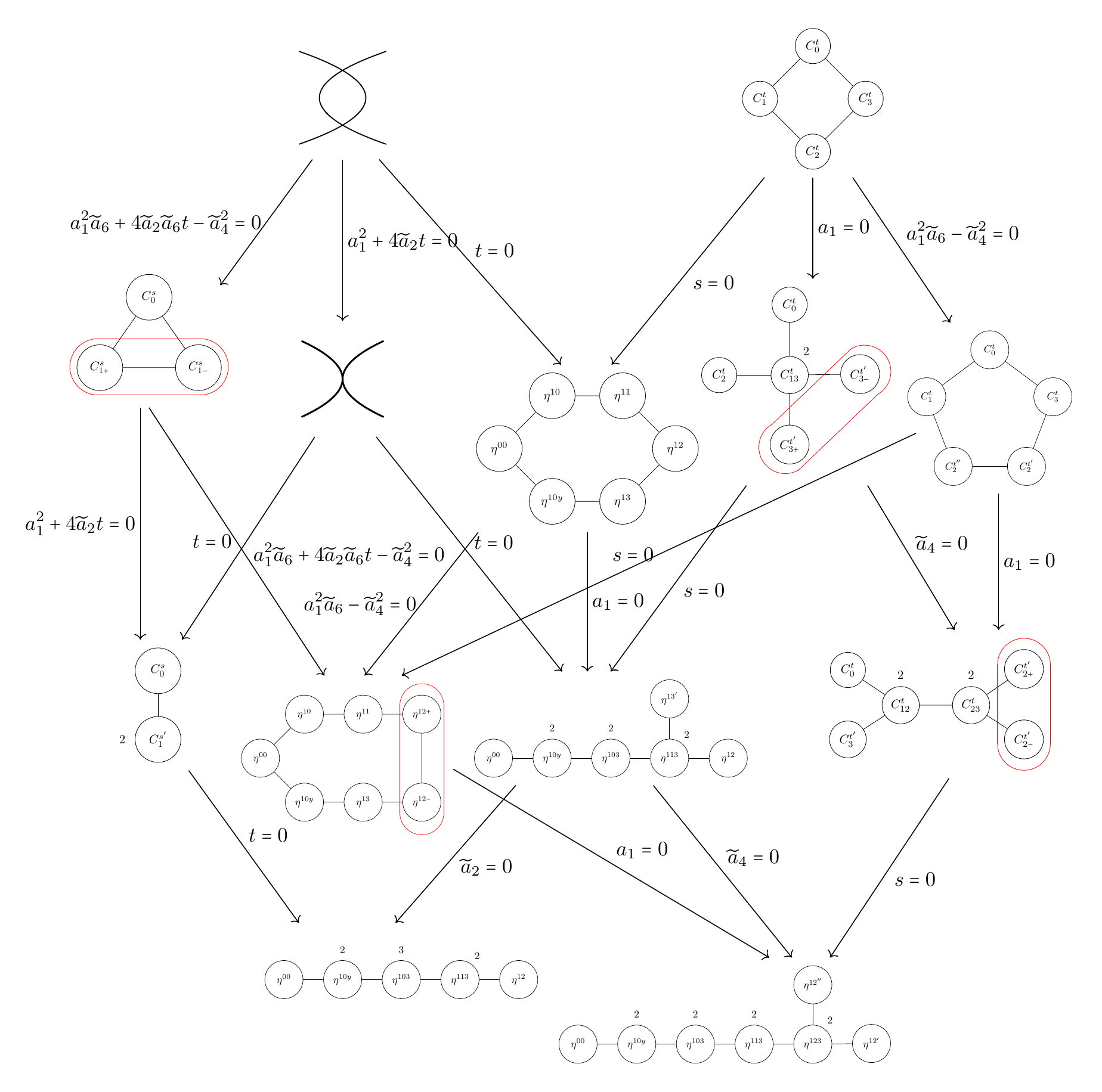}}
\caption{  Fiber structure of I$_2^{\text{ns}}+$I$_4^{\text{ns}}$ with a trivial Mordell-Weil group. This is the fiber structure of the collision of  I$_2^{\text{ns}}$ and I$_4^{\text{s}}$, which are drawn on the top, with a trivial Mordell-Weil group. The hexagon on the locus $se_1=tw_1w_2w_3=0$ from the I$_2^{\text{ns}}+$I$_4^{\text{s}}$-model with the Mordell-Weil group $\mathbb{Z}_2$ is identical to the hexagon for this model. This enhancement has two newly split curves $\eta^{10}$ and $\eta^{10y}$, giving the representations $(\bf{2},\bf{4})\oplus (\bf{2},\bf{\bar{4}})$. When $a_1=0$, this hexagon specializes to the diagram on the third row. When $a_1={\widetilde{a}_2}=0$, this further specializes into the diagram on the bottom left. When $a_1={\widetilde{a}_4}=0$, we get the diagram on the bottom right.} 
\label{I2sI4sNoZ2}
\end{center}
\end{figure}

\subsection{Coulomb phases}
Using the notation of $[\psi; \varphi_1,\varphi_2,\varphi_3]$, the nine independent weights for these representations are given by
\begin{align}
\begin{split}
&\varpi_1=[0;-1,1,0], \quad \varpi_2=[0;0,1,-1], \quad \varpi_3=[0;-1,0,1], \\
&\varpi_4=[1;-1,1,0], \quad \varpi_5=[1;0,-1,1], \quad \varpi_6=[-1;1,0,0], \\
&\varpi_7=[-1;-1,1,0], \quad \varpi_8=[-1;0,-1,1], \quad \varpi_9=[-1;0,0,1] . 
\end{split}
\end{align}
The first two weights are from  $(\bf{1},\bf{4})\oplus(\bf{1},\bf{\bar{4}})$, the third weight is from  $(\bf{1},\bf{6})$, and the remaining  six weights are from $(\bf{2},\bf{4})\oplus(\bf{2},\bf{\bar{4}})$. More specifically, 
the first weight is $(\bf{1},\bf{4})$, the second weight is $(\bf{1},\bf{\bar{4}})$,the third weight is $(\bf{1},\bf{6})$, the fourth to eighth weights are $(\bf{2},\bf{4})$, and the ninth weight is $(\bf{2},\bf{\bar{4}})$. 

For convenience, we denote these relations as a vector of weights $v$,
\begin{align}
\begin{split}
v
=&(-\varphi_1+\varphi_2, \ \varphi_2-\varphi_3, \ -\varphi_1+\varphi_3, \ \psi_1 -\varphi_1+\varphi_2, \ \psi_1 -\varphi_2+\varphi_3,  \\
& -\psi_1 +\varphi_1, \ -\psi_1 -\varphi_1+\varphi_2, \ -\psi_1 -\varphi_2+\varphi_3, \ -\psi_1+\varphi_3) .
\end{split}
\end{align}

The chambers structures of these are computed, which is drawn in Figure \ref{ChambersNoZ2}. There are in total $20$ chambers. The chambers are denoted as a set of signs of the nine relations that determine the chamber, where $1$ denotes the relation to be positive and $-1$ denotes the relation to be negative.

\begin{figure}[hbt]
\begin{center}
\scalebox{.95}{
\begin{tikzpicture}[scale=2.2]
\coordinate (L1) at (0:0);
\coordinate (L2) at (0:5);
\coordinate (L3) at (60:5);
\coordinate (A1)  at  (barycentric cs:L1=0.7,L3=0.3);
\coordinate (A2)  at  (barycentric cs:L1=0.6,L3=0.4);
\coordinate (A3)  at  (barycentric cs:L1=0.4,L3=0.6);
\coordinate (A4)  at  (barycentric cs:L1=0.25,L3=0.75);
\coordinate (B1)  at  (barycentric cs:L2=0.7,L3=0.3);
\coordinate (B3)  at  (barycentric cs:L2=0.4,L3=0.6);
\coordinate (B4)  at  (barycentric cs:L2=0.25,L3=0.75);
\coordinate (C2)  at  (barycentric cs:L2=0.6,L1=0.4);
\coordinate (C3)  at  (barycentric cs:L2=0.4,L1=0.6);

\draw[thick]  (A1)--+(-75:1.45);
\draw[thick]  (B1)--+(255:1.45);

\draw[thick, color=blue]  (A4)--+(0,-3.4);
\draw[thick, color=blue]  (B4)--+(0,-3.4);
\node  at  (1.8,-.2) {\scalebox{1}{$\mathbf{\varpi_1}$}};
\node  at  (3.1,-.2) {\scalebox{1}{$\mathbf{\varpi_2}$}};

\draw[thick]  (L3)--+(0,-4.5);
\draw[thick]  (A3)--+(-18:2.8);
\draw[thick]  (B3)--+(198:2.8);
\draw[thick]  (C2)--+(145:2.8);
\draw[thick]  (C3)--+(35:2.8);
\draw[thick]  (L1)--(L2)--(L3)--cycle;
\node  at  (.6,.1) {\scalebox{1}{$1a^-$}};
\node  at  (1.7,.4) {\scalebox{1}{$2a^-$}};
\node  at  (1.7,1.5) {\scalebox{1}{$3a^-$}};
\node  at  (1.7,2.25) {\scalebox{1}{$4a^-$}};
\node  at  (1.75,2.65) {\scalebox{1}{$5a^-$}};

\node  at  (2.35,.1) {\scalebox{1}{$1b^-$}};
\node  at  (2.2,.4) {\scalebox{1}{$2b^-$}};
\node  at  (2.2,1.5) {\scalebox{1}{$3b^-$}};
\node  at  (2.1,2.25) {\scalebox{1}{$4b^-$}};
\node  at  (2.2,2.7) {\scalebox{1}{$5b^-$}};

\node  at  (2.65,.1) {\scalebox{1}{$1b^+$}};
\node  at  (2.9,.4) {\scalebox{1}{$2b^+$}};
\node  at  (2.9,1.5) {\scalebox{1}{$3b^+$}};
\node  at  (2.9,2.25) {\scalebox{1}{$4b^+$}};
\node  at  (2.9,2.7) {\scalebox{1}{$5b^+$}};

\node  at  (4.4,.1) {\scalebox{1}{$1a^+$}};
\node  at  (3.4,.4) {\scalebox{1}{$2a^+$}};
\node  at  (3.4,1.5) {\scalebox{1}{$3a^+$}};
\node  at  (3.4,2.25) {\scalebox{1}{$4a^+$}};
\node  at  (3.3,2.65) {\scalebox{1}{$5a^+$}};

\node  at  (1.2,-.2) {\scalebox{1}{$\mathbf{\varpi_4}$}};
\node  at  (2.5,-.2) {\scalebox{1}{$\mathbf{\varpi_3}$}};
\node  at  (3.8,-.2) {\scalebox{1}{$\mathbf{\varpi_8}$}};

\node  at  (.7,1.8) {\scalebox{1}{$\mathbf{\varpi_9}$}};
\node  at  (4.2,1.8) {\scalebox{1}{$\mathbf{\varpi_6}$}};
\node  at  (4.4,1.5) {\scalebox{1}{$\mathbf{\varpi_7}$}};
\node  at  (.6,1.5) {\scalebox{1}{$\mathbf{\varpi_5}$}};

\end{tikzpicture}
}
\end{center}
\label{ChamberPlane}
\caption{{ This is the chamber structure of the I$_2^{\text{ns}}+$I$_4^{\text{s}}$-model with a trivial Mordell-Weil group in a planar diagram.}}
\end{figure}
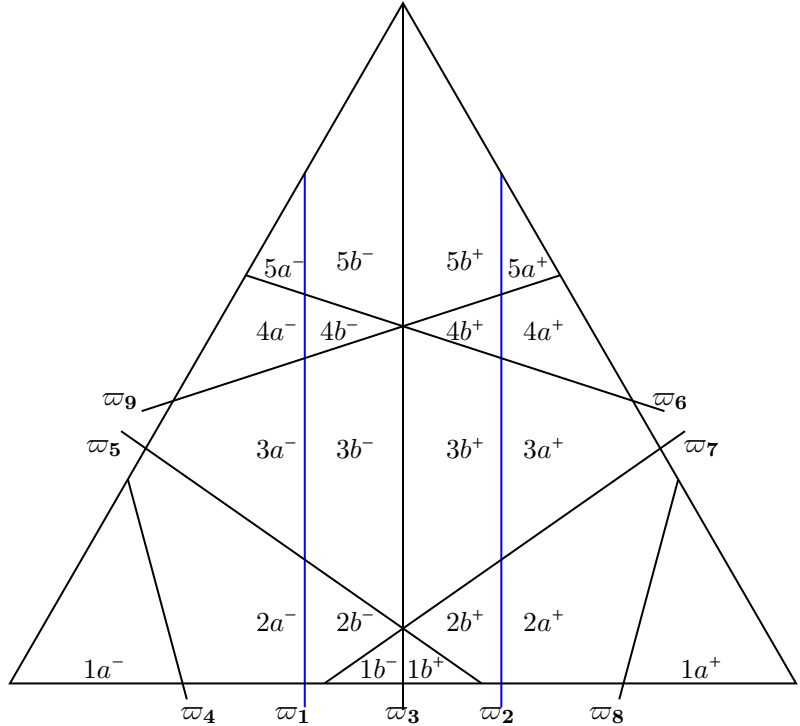

\clearpage
\newpage

\subsection{$5d$  $\mathcal{N}=1$ prepotentials and the triple intersection polynomials}
The triple intersection polynomial is computed for the I$_2^{\text{ns}}+$I$_4^{\text{s}}$-model with a trivial Mordell-Weil group defined by the crepant resolution we determined  earlier  [REF]:
\begin{align}
\begin{split}
\mathscr{F}_{trip}=&-2 S (2 L+S)\psi _1^3 +4 T (L-T)\phi _1^3 +2 T (S-2 L)\phi _2^3 +2 T (L-2 T)\phi _3^3  \\
&+3 T (2 L-S-T) \phi _2^2 \left(\phi _1+\phi _3\right)+3 T (-3 L+S+2 T) \phi _2 \left(\phi _1^2+\phi _3^2\right) \\
&-6 L T \phi _1^2 \phi _3+6 L T \phi _1 \phi _2 \phi _3-6 S T \psi _1 \left(\phi _1^2-\phi _2 \phi _1+\phi _2^2+\phi _3^2-\phi _2 \phi _3\right) \\
&+4 S(L-S) \psi _0^3 +6 S(S-2 L) \psi _0^2 \psi _1 +12 L S \psi _0 \psi _1^2 -2 T(-2 L+S+2 T) \phi _0^3  \\
&+3 T \phi _0 \left(2 \phi _1 \left(L \phi _3+S \psi _1\right)+\phi _1^2 (L-S)+\phi _3^2 (L-S)-2 S \left(\psi _0-\psi _1\right)^2+2 S \psi _1 \phi _3\right) \\
&+\phi _0^2 \left(3 T \left(\phi _1+\phi _3\right) (-2 L+S+T)-6 S T \psi _1\right) .
\end{split}
\end{align}
The bottom three lines have terms depending on $\psi_0$ and $\phi_0$, which will not be contributing to the prepotential term that is computed below, and the prepotential is compared with the triple intersection polynomial explicitly. The triple intersection number is different for different crepant resolutions, so the correct chamber has to be identified.

The Intrilligator-Morrison-Seiberg prepotential is computed in the same chamber that corresponds to $5b+$ in Figure \ref{ChambersNoZ2}. This is determined from the weights computed from the curves, which is summarized in Table \ref{Rep.noZ2model}. The chamber that we geometrically computed has positive relations for the first three entries of the sign vector $v$ in Figure \ref{ChambersNoZ2}. Also, the sixth entry is negative and the ninth entry is negative in the sign vector $v$ as well. The only chamber that satisfies $v=(111xx0xx0)$, where $x$ denotes an unknown sign, is the chamber $5b+$, which has $v=(111110000)$. Using the representation for this split model with $\mathbb{Z}_2$, 
\begin{equation}
\bf{R}=(\bf{3},\bf{1}\oplus(\bf{1},\bf{15}))\oplus(\bf{2},\bf{4})\oplus(\bf{2},\bf{\bar{4}})\oplus(\bf{1},\bf{6})\oplus(\bf{2},\bf{1})\oplus(\bf{1},\bf{4})\oplus(\bf{1},\bf{\bar{4}}) ,
\end{equation}
the prepotential in the chamber $5b+$ is 
\begin{align}
\begin{split}
6 \mathscr{F}_{\text{IMS}}=& \ -8 (n_{\bf{1},\bf{15}}-1) \phi _1^3-(8 n_{\bf{1},\bf{15}}+n_{\bf{1},\bf{4}}+n_{\bf{1},\bf{\bar{4}}}-8)\phi _2^3 -2 (4 n_{\bf{1},\bf{15}}+n_{\bf{1},\bf{6}}-4) \phi _3^3 \\
& -(n_{\bf{2},\bf{1}}+4 (n_{\bf{2},\bf{4}}+n_{\bf{2},\bf{\bar{4}}}+2 n_{\bf{3},\bf{1}}-2)) \psi _1^3 +\frac{3}{2} (4 n_{\bf{1},\bf{15}}+n_{\bf{1},\bf{4}}+n_{\bf{1},\bf{\bar{4}}}-2 n_{\bf{1},\bf{6}}-4) \phi _2^2 (\phi _1+\phi _3) \\
& -\frac{3}{2} (n_{\bf{1},\bf{4}}+n_{\bf{1},\bf{\bar{4}}}-2 n_{\bf{1},\bf{6}}) \phi _2 (\phi _1^2 + \phi _3^2) -6 n_{\bf{1},\bf{6}} \phi _1^2\phi _3 +6 n_{\bf{1},\bf{6}} \phi _1\phi _2 \phi _3 \\
&-6 (n_{\bf{2},\bf{4}}+n_{\bf{2},\bf{\bar{4}}}) \psi _1 \left(\phi _1^2 -\phi _1 \phi _2 +\phi _2^2 +\phi _3^2 -\phi _2 \phi _3 \right) .
\end{split}
\end{align}
Using the dictionary that the triple intersection polynomial is identical to the prepotential, we can fix $n_{\bf{1},\bf{6}}$ and $n_{\bf{1},\bf{15}}$ completely, and determine the following linear relations for  the remainder $n_{\bf{R}}$:
\begin{align}
\begin{split}
&n_{\bf{1},\bf{4}}+n_{\bf{1},\bf{\bar{4}}}=2T(4 L-S-2 T), \  n_{\bf{2},\bf{4}}+n_{\bf{2},\bf{\bar{4}}}=ST, \  n_{\bf{2},\bf{1}}+8 n_{\bf{3},\bf{1}}=2 S (2 L + S - 2 T)+8 .
\end{split}
\end{align}
In order to compute individual $n_{\bf{R}}$, we recall that the representation $(\bf{2},\bf{1})$ was computed from the splittings of the curve $C_1^{\text{s}} \rightarrow C_{1+}^{\text{s}}+C_{1-}^{\text{s}}$ in section \ref{sec:splitNoZ2}. These  splittings are from the condition of the coefficients such that $a_1^2 {\widetilde{a}_6}+4{\widetilde{a}_2}{\widetilde{a}_6}t-{\widetilde{a}_4}^2=0$. The class of this condition is then twice that of the class of ${\widetilde{a}_4}$, which is $2 (4L-S-2T)$. Then these  splittings happen when $s=a_1^2 {\widetilde{a}_6}+4{\widetilde{a}_2}{\widetilde{a}_6}t-{\widetilde{a}_4}^2=0$, which yields the class $2S (4L-S-2T)$. Hence, $n_{\bf{2},\bf{1}}=2S (4L-S-2T)$.  Moreover, we also use $n_{\bf{1},\bf{4}}=n_{\bf{1},\bf{\bar{4}}}$ and $n_{\bf{2},\bf{4}}=n_{\bf{2},\bf{\bar{4}}}$. Thus, we can now fix all the $n_{\bf{R}}$ to be
\begin{align}
\begin{split}
&n_{\bf{3},\bf{1}}=\frac{1}{2} (-L S + S^2 + 2), \quad n_{\bf{2},\bf{1}}=2 S (4 L - S - 2 T), \quad n_{\bf{2},\bf{4}}=n_{\bf{2},\bf{\bar{4}}}=\frac{1}{2} S T, \\
&n_{\bf{1},\bf{6}}=L T, \quad n_{\bf{1},\bf{15}}=\frac{1}{2} \left(-L T+T^2+2\right), \quad n_{\bf{1},\bf{4}}=n_{\bf{1},\bf{\bar{4}}}=T (4 L-S-2 T) .
\end{split}
\end{align}
Using the genus of the curve, let $g_S$ and $g_T$ be the curves such that $2-2g_S=S(L-S)$ and $2-2g_T=T(L-T)$. Also we use the Calabi-Yau condition $L=-K$. Then we can interpret the prepotential in terms of these genus as
\begin{align}
\begin{split}
\mathscr{F}_{trip}=& -(8-8 g_S+6 S^2)\psi _1^3 +(8-8 g_T)\phi _1^3 +(-8+8 g_T+2 T (S-2 T))\phi _2^3 -2(2 g_T+T^2-2)\phi _3^3  \\
&+3 \left(-4 g_T-S T+T^2+4\right) \phi _2^2 \left(\phi _1+\phi _3\right)+3 \left(6 g_T+T (S-T)-6\right)\phi _2 \left(\phi _1^2+\phi _3^2\right) \\
&-6 \left(-2 g_T+T^2+2\right) \phi _1^2 \phi _3+6 \left(-2 g_T+T^2+2\right)\phi _1 \phi _2 \phi _3\\
&-6 S T \psi _1 \left(\phi _1^2-\phi _2 \phi _1+\phi _2^2+\phi _3^2-\phi _2 \phi _3\right)  .
\end{split}
\end{align}

\subsection{$6d$ $\mathcal{N}=(1,0)$ anomaly cancellation}
{ 
In this section, we consider an I$_2^{\text{ns}}+$I$_4^{\text{s}}$-model with the trivial Mordell-Weil group. Then, the gauge algebra and the representations, which are  computed geometrically in section \ref{sec:splitNoZ2}, are given by}
\begin{align}
\frak{g}=&A_1+A_3 , \quad \bf{R}=& (\bf{3},\bf{1})\oplus (\bf{2},\bf{4})\oplus (\bf{2},\bf{\bar{4}})\oplus (\bf{1},\bf{4})\oplus (\bf{1},\bf{\bar{4}})\oplus(\bf{1},\bf{6})\oplus(\bf{1},\bf{15}) .
\end{align}
The number of representations are computed above:
\begin{align}
\begin{split}
&n_{\bf{3},\bf{1}}=\frac{1}{2} (K S + S^2 + 2)=g_S, \quad n_{\bf{2},\bf{1}}=-2 S (4K+S+2 T), \quad n_{\bf{2},\bf{4}}=n_{\bf{2},\bf{\bar{4}}}=\frac{1}{2} S T\\
&n_{\bf{1},\bf{6}}=-KT, \quad n_{\bf{1},\bf{15}}=\frac{1}{2} \left(KT+T^2+2\right)=g_T, \quad n_{\bf{1},\bf{4}}=n_{\bf{1},\bf{\bar{4}}}=-T(4K+S+2T) .
\end{split}
\label{nRwithK}
\end{align}
Then, the number of vector multiplets $n_V^{(6)}$, tensor multiplets $n_T$, and hypermultiplets $n_H$ are 
\begin{align}
\begin{split}
n_V^{(6)}&=18, \quad n_T=9-K^2, \\
n_H&=h^{2,1}(Y)+1+n_{\bf{3},\bf{1}}(3-1)+n_{\bf{2},\bf{1}} (2-0)+n_{\bf{2},\bf{4}} (8-0)+n_{\bf{2},\bf{\bar{4}}} (8-0) + n_{\bf{1},\bf{6}} (6-0) \\
&\quad + n_{\bf{1},\bf{15}} (15-3)+n_{\bf{1},\bf{4}} (4-0)+n_{\bf{1},\bf{\bar{4}}} (4-0)  =30 + 29 K^2 .
\end{split}
\end{align}
Then we see that
\begin{equation}
n_H-n_V^{(6)}+29n_T-273=0,
\end{equation}
which vanishes. The pure gravitational anomalies are thus canceled. 

In order to check the remainder terms of the anomaly polynomial, the number of representations are then identified as
\begin{align}
\begin{split}
n_{\bf{3}}=n_{\bf{3},\bf{1}} , \quad n_{\bf{2}}=n_{\bf{2},\bf{1}}+4 & (n_{\bf{2},\bf{4}}+n_{\bf{2},\bf{\bar{4}}}), \\
n_{\bf{4}}=n_{\bf{1},\bf{4}}+n_{\bf{1},\bf{\bar{4}}}+2(n_{\bf{2},\bf{4}}+n_{\bf{2},\bf{\bar{4}}}) , &\quad n_{\bf{6}}=n_{\bf{6},\bf{1}}, \quad n_{\bf{15}}=n_{\bf{15},\bf{1}} .
\end{split}
\end{align}
Hence, using the trace identities os $\text{SU($2$)}$ and $\text{SU($4$)}$ given by the equations \eqref{eq:SU2trace} and \eqref{eq:SU4trace}, $X^{(2)}_1$, $X^{(4)}_1$, $X^{(2)}_2$, $X^{(4)}_2$, and $Y_{ab}$ are given by
\begin{align}
X^{(2)}_{1}&=\left(A_{\bf{3}}(1-n_{\bf{3}})-n_{\bf{2}}A_{\bf{2}}\right)\tr_{\bf{2}}F^2_1 =6KT\tr_{\bf{2}}F^2_1\\
\begin{split}
X^{(4)}_{1}&=\left(B_{\bf{3}}(1-n_{\bf{3}})-n_{\bf{2}}B_{\bf{2}}\right)\tr_{\bf{2}}F^4_1+\left(C_{\bf{3}}(1-n_{\bf{3}})-n_{\bf{2}}C_{\bf{2}}\right) (\tr_{\bf{2}}F^2_1)^2 \\
&=-3T^2 (\tr_{\bf{2}}F^2_1)^2
\end{split} \\
X^{(2)}_{2}&=\left(A_{\bf{15}}(1-n_{\bf{15}})-n_{\bf{6}}A_{\bf{6}}-n_{\bf{4}}A_{\bf{4}}\right)\tr_{\bf{4}}F^2_2 =6KS\tr_{\bf{4}}F^2_2 \\
\begin{split}
X^{(4)}_{2}&=\left(B_{\bf{15}}(1-n_{\bf{15}})-n_{\bf{6}}B_{\bf{6}}-n_{\bf{4}}B_{\bf{4}}\right)\tr_{\bf{4}}F^4_2+\left(C_{\bf{15}}(1-n_{\bf{15}})-n_{\bf{6}}C_{\bf{6}}-n_{\bf{4}}C_{\bf{4}}\right) (\tr_{\bf{4}}F^2_2)^2 \\ &=-3S^2 (\tr_{\bf{4}}F^2_2)^2
\end{split} \\
Y_{12}&=(n_{\bf{2},\bf{4}}+n_{\bf{2},\bf{\bar{4}}}) \tr_{\bf{2}}F^2_1 \tr_{\bf{4}}F^2_2 =ST\tr_{\bf{2}}F^2_1 \tr_{\bf{4}}F^2_2 .
\end{align}
The remainder terms of anomaly polynomial are  then given by
\begin{align}
\begin{split}
I_8&=\frac{9-n_T}{8} (\tr R^2)^2 +\frac{1}{6} (X^{(2)}_{1} +X^{(2)}_{2}) \tr R^2-\frac{2}{3} (X^{(4)}_{1}+X^{(4)}_{2})+4Y_{12} \\
&= \frac{1}{8} \left(K\tr R^2+4 S\tr_{\bf{2}}F^2_1+4 T\tr_{\bf{4}}F^2_2\right)^2 ,
\end{split}
\end{align}
which is a perfect square. Hence, all the six-dimensional anomalies are canceled.

\section*{Acknowledgements}
The authors are grateful to  Lara Anderson,  Richard Derryberry, Sergei Gukov, Ravi Jagadeesan,  Patrick Jefferson, Julian Salazar, Washington Taylor, and Shu-Heng Shao  for helpful discussions. 
M.E. and M.J.K.  would like to thank the American Institute of Mathematics  (AIM) for the support and hospitality during the  2017 workshop on {\em Singular Geometry and Higgs Bundles in String Theory}.
M.E. would like to thank Imane Esole for her understanding and cooperation during the final stage of this work. 
M.E. is supported in part by the National Science Foundation (NSF) grant DMS-1701635  ``Elliptic Fibrations and String Theory''.
M.J.K. would like to acknowledge a partial support from NSF grant PHY-1352084. 
M.J.K. is thankful to Daniel Jafferis for his support and guidance.



\begin{thebibliography}{10}

\bibitem{Aluffi_CBU}
P.~Aluffi.
\newblock Chern classes of blow-ups.
\newblock {\em Math. Proc. Cambridge Philos. Soc.}, 148(2):227--242, 2010.

\bibitem{AE1}
P.~Aluffi and M.~Esole.
\newblock {Chern class identities from tadpole matching in type IIB and
  F-theory}.
\newblock {\em JHEP}, 03:032, 2009.

\bibitem{AE2}
P.~Aluffi and M.~Esole.
\newblock {New Orientifold Weak Coupling Limits in F-theory}.
\newblock {\em JHEP}, 02:020, 2010.

\bibitem{Anderson:2017zfm} 
  L.~B.~Anderson,  M.~Esole, L.~Fredrickson, and L.~P.~Schaposnik,
  Singular geometry and Higgs bundles in string theory,
  arXiv:1710.08453 [math.DG].


\bibitem{Anderson:2015cqy} 
  L.~B.~Anderson, J.~Gray, N.~Raghuram and W.~Taylor,
  Matter in transition,
  JHEP {\bf 1604}, 080 (2016)
  doi:10.1007/JHEP04(2016)080
  [arXiv:1512.05791 [hep-th]].

\bibitem{Anderson:2017rpr} 
  L.~B.~Anderson, J.~J.~Heckman, S.~Katz and L.~Schaposnik,
  T-Branes at the Limits of Geometry,
  JHEP {\bf 1710}, 058 (2017)


\bibitem{Aspinwall:1996nk} 
  P.~S.~Aspinwall and M.~Gross, 
  The SO(32) heterotic string on a K3 surface,
  Phys.\ Lett.\ B {\bf 387}, 735 (1996)
  doi:10.1016/0370-2693(96)01095-7
  [hep-th/9605131].
  
  \bibitem{Aspinwall:1998xj} 
  P.~S.~Aspinwall and D.~R.~Morrison,
   Nonsimply connected gauge groups and rational points on elliptic curves, 
  JHEP {\bf 9807}, 012 (1998)
  [hep-th/9805206].
  
\bibitem{Aspinwall:2000kf} 
  P.~S.~Aspinwall, S.~H.~Katz and D.~R.~Morrison,
   Lie groups, Calabi-Yau threefolds, and F theory,
  Adv.\ Theor.\ Math.\ Phys.\  {\bf 4}, 95 (2000)
  [hep-th/0002012].



  
  \bibitem{Avramis:2005hc} 
  S.~D.~Avramis and A.~Kehagias,
  A Systematic search for anomaly-free supergravities in six dimensions, 
  JHEP {\bf 0510}, 052 (2005)
  doi:10.1088/1126-6708/2005/10/052
  [hep-th/0508172].



\bibitem{Batyrev.Betti}
V.~V. Batyrev.
\newblock Birational {C}alabi-{Y}au {$n$}-folds have equal {B}etti numbers.
\newblock In {\em New trends in algebraic geometry ({W}arwick, 1996)}, volume
  264 of {\em London Math. Soc. Lecture Note Ser.}, pages 1--11. Cambridge
  Univ. Press, Cambridge, 1999.
%
\bibitem{Baume:2017hxm} 
  F.~Baume, M.~Cvetic, C.~Lawrie and L.~Lin,
  When Rational Sections Become Cyclic: Gauge Enhancement in F-theory via Mordell--Weil Torsion,
  arXiv:1709.07453 [hep-th].


\bibitem{Bershadsky:1996nu} 
  M.~Bershadsky and A.~Johansen,
  Colliding singularities in F theory and phase transitions,
  Nucl.\ Phys.\ B {\bf 489}, 122 (1997)
  doi:10.1016/S0550-3213(97)00027-8
  [hep-th/9610111].
  
  \bibitem{Bonetti:2011mw} 
  F.~Bonetti and T.~W.~Grimm,
  ``Six-dimensional (1,0) effective action of F-theory via M-theory on Calabi-Yau threefolds,''
  JHEP {\bf 1205}, 019 (2012)

\bibitem{Bourbaki.GLA79} N.~Bourbaki, {\it Groups and Lie Algebras. Chap. 7-9.}, 
Translated from the 1975 and 1982 French originals by A.~Pressley. Elements of Mathematics (Berlin).
Springer-Verlag, Berlin, 2005.   


\bibitem{Cadavid:1995bk} 
  A.~C.~Cadavid, A.~Ceresole, R.~D'Auria and S.~Ferrara,
  Eleven-dimensional supergravity compactified on Calabi-Yau threefolds,
  Phys.\ Lett.\ B {\bf 357}, 76 (1995)
  

%
\bibitem{Cremmer:1978km} 
  E.~Cremmer, B.~Julia and J.~Scherk,
   Supergravity Theory in Eleven-Dimensions,
  Phys.\ Lett.\  {\bf 76B}, 409 (1978).

	
\bibitem{Formulaire}
P.~Deligne.
\newblock Courbes elliptiques: formulaire d'apr{\`e}s {J}. {T}ate.
\newblock In {\em Modular functions of one variable, {IV} ({P}roc. {I}nternat.
  {\text{s}}ummer {\text{s}}chool, {U}niv. {A}ntwerp, {A}ntwerp, 1972)}, pages 53--73.
  Lecture Notes in Math., Vol. 476. Springer, Berlin, 1975.
  
  \bibitem{Dixon:1986jc}
L.~J. Dixon, J.~A. Harvey, C.~Vafa, and E.~Witten.
\newblock {Strings on Orbifolds. 2.}
\newblock {\em Nucl. Phys.}, B274:285--314, 1986.

  \bibitem{Erler:1993zy} 
  J.~Erler,
   Anomaly cancellation in six-dimensions, 
  J.\ Math.\ Phys.\  {\bf 35}, 1819 (1994)
  doi:10.1063/1.530885
  [hep-th/9304104].

  \bibitem{Esole.Elliptic} 
  M.~Esole,
   Introduction to Elliptic Fibrations, 
  Math.\ Phys.\ Stud.\  {\bf 9783319654270}, 247 (2017).

  \bibitem{EJJN1}
M.~Esole, S.~G. Jackson, R.~Jagadeesan, and A.~G. No{\"e}l.
\newblock {Incidence Geometry in a Weyl Chamber I: GL$_n$}, 
\newblock arXiv:1508.03038 [math.RT].



\bibitem{EJJN2} 
  M.~Esole, S.~G.~Jackson, R.~Jagadeesan and A.~G.~No{\"e}l,
  Incidence Geometry in a Weyl Chamber II: $SL_n$,
  arXiv:1601.05070 [math.RT].
  
  
  
\bibitem{G2} 
  M.~Esole, R.~Jagadeesan and M.~J.~Kang,
  The Geometry of G$_2$, Spin(7), and Spin(8)-models,
  arXiv:1709.04913 [hep-th].


\bibitem{Euler} 
  M.~Esole, P.~Jefferson and M.~J.~Kang,
  Euler Characteristics of Crepant Resolutions of Weierstrass Models,
  arXiv:1703.00905 [math.AG].

\bibitem{Esole:2014dea} 
  M.~Esole, M.~J.~Kang and S.~T.~Yau,
  A New Model for Elliptic Fibrations with a Rank One Mordell-Weil Group: I. Singular Fibers and Semi-Stable Degenerations,
  arXiv:1410.0003 [hep-th].

\bibitem{F4} 
  M.~Esole, P.~Jefferson and M.~J.~Kang,
  The Geometry of F$_4$-Models,
  arXiv:1704.08251 [hep-th].
  
  \bibitem{Esole:2012tf} 
  M.~Esole and R.~Savelli,
  Tate Form and Weak Coupling Limits in F-theory,
  JHEP {\bf 1306}, 027 (2013)

\bibitem{ES} 
  M.~Esole and S.~H.~Shao,
   M-theory on Elliptic Calabi-Yau Threefolds and $6d$ Anomalies, 
  arXiv:1504.01387 [hep-th].


\bibitem{ESY1}
M.~Esole, S.-H. Shao, and S.-T. Yau.
\newblock {Singularities and Gauge Theory Phases}.
\newblock {\em Adv. Theor. Math. Phys.}, 19:1183--1247, 2015.

\bibitem{ESY2} 
  M.~Esole, S.~H.~Shao and S.~T.~Yau,
  Singularities and Gauge Theory Phases II,
  Adv.\ Theor.\ Math.\ Phys.\  {20}, 683 (2016)



\bibitem{EY} 
  M.~Esole and S.~T.~Yau,
  Small resolutions of SU(5)-models in F-theory,
  Adv.\ Theor.\ Math.\ Phys.\   {17}, no. 6, 1195 (2013)
  
  \bibitem{SO4}
  M.~Esole and M.~J.~Kang, SO(4)-models, To appear. 
  
  \bibitem{MP}
M.~Esole and P.~Jefferson.
\newblock To appear.
  
\bibitem{Ferrara:1996wv} 
  S.~Ferrara, R.~Minasian and A.~Sagnotti,
   Low-energy analysis of M and F theories on Calabi-Yau threefolds, 
  Nucl.\ Phys.\ B {\bf 474}, 323 (1996)
  doi:10.1016/0550-3213(96)00268-4
  [hep-th/9604097].

\bibitem{Fullwood:SVW}
J.~Fullwood.
\newblock {On generalized Sethi-Vafa-Witten formulas}.
\newblock {\em J. Math. Phys.}, 52:082304, 2011.



\bibitem{Green:1984bx} 
  M.~B.~Green, J.~H.~Schwarz and P.~C.~West,
   Anomaly Free Chiral Theories in Six-Dimensions, 
  Nucl.\ Phys.\ B {\bf 254}, 327 (1985).
  
  
  
\bibitem{Grimm:2011fx} 
  T.~W.~Grimm and H.~Hayashi,
  F-theory fluxes, Chirality and Chern-Simons theories,
  JHEP {\bf 1203}, 027 (2012)

  
  \bibitem{Grimm:2015zea} 
  T.~W.~Grimm and A.~Kapfer,
  Anomaly Cancelation in Field Theory and F-theory on a Circle,
  JHEP {\bf 1605}, 102 (2016)
  doi:10.1007/JHEP05(2016)102
  [arXiv:1502.05398 [hep-th]].
  

\bibitem{Hayashi:2014kca}
H.~Hayashi, C.~Lawrie, D.~R. Morrison, and S.~Sch\"afer-Nameki.
\newblock {Box Graphs and Singular Fibers}.
\newblock {\em JHEP}, 1405:048, 2014.


\bibitem{Humphreys}
J.~Humphreys, {\it Introduction to Lie Algebras and Representation Theory}, Graduate Texts in Mathematics 9, Springer-Verlag, New York, 1972. 

\bibitem{Husemoller} D.~Husem\"oller, {\it Elliptic curves}. Second edition. With appendices by Otto Forster, Ruth Lawrence and Stefan Theisen. Graduate Texts in Mathematics, 111. Springer-Verlag, New York, 2004. 


\bibitem{IMS}
K.~A. Intriligator, D.~R. Morrison, and N.~Seiberg.
\newblock {Five-dimensional supersymmetric gauge theories and degenerations of
  Calabi-Yau spaces}.
\newblock {\em Nucl.Phys.}, B497:56--100, 1997.



\bibitem{Kodaira}
K.~Kodaira.
\newblock On compact analytic surfaces. {II}, {III}.
\newblock {\em Ann. of Math. (2) 77 (1963), 563--626; ibid.}, 78:1--40, 1963.

\bibitem{Kontsevich.Orsay}
M.~Kontsevich.
\newblock String cohomology, December 1995.
\newblock Lecture at Orsay.



  


\bibitem{Marsano}
J.~Marsano and S.~Sch\"afer-Nameki.
\newblock {Yukawas, G-flux, and Spectral Covers from Resolved Calabi-Yau's}.
\newblock {\em JHEP}, 11:098, 2011.


\bibitem{Matsuki.Weyl}
K.~Matsuki,  Weyl groups and birational transformations among minimal
  models, \href{http://dx.doi.org/10.1090/memo/0557}{{\em Mem. Amer. Math.
  Soc.} {\bfseries 116} no.~557, (1995) vi+133}.
  \bibitem{Mayrhofer:2014opa} 
  C.~Mayrhofer, D.~R.~Morrison, O.~Till and T.~Weigand,
   Mordell-Weil Torsion and the Global Structure of Gauge Groups in F-theory, 
  JHEP {\bf 1410}, 16 (2014)
  [arXiv:1405.3656 [hep-th]].


  
\bibitem{Miranda.smooth}
R.~Miranda.
\newblock Smooth models for elliptic threefolds.
\newblock In {\em The birational geometry of degenerations ({C}ambridge,
  {M}ass., 1981)}, volume~29 of {\em Progr. Math.}, pages 85--133.
  Birkh{\"a}user Boston, Mass., 1983.



\bibitem{Morrison:2012ei} 
  D.~R.~Morrison and D.~S.~Park,
  F-Theory and the Mordell-Weil Group of Elliptically-Fibered Calabi-Yau Threefolds, 
  JHEP {\bf 1210}, 128 (2012)
  doi:10.1007/JHEP10(2012)128
  
  \bibitem{Morrison:2011mb} 
  D.~R.~Morrison and W.~Taylor,
   Matter and singularities, 
  JHEP {\bf 1201}, 022 (2012)
  doi:10.1007/JHEP01(2012)022

\bibitem{Neron}A. N\'eron, Mod\`eles Minimaux des Vari\'et\'es Abeliennes sur les Corps Locaux et
Globaux, Publ. Math. I.H.E.S. 21, 1964, pp. 361--482.




\bibitem{Romans:1986er} 
  L.~J.~Romans,
  Selfduality for Interacting Fields: Covariant Field Equations for Six-dimensional Chiral Supergravities,
  Nucl.\ Phys.\ B {\bf 276}, 71 (1986).

  \bibitem{Sadov:1996zm} 
  V.~Sadov,
  Generalized Green-Schwarz mechanism in F theory,
  Phys.\ Lett.\ B {\bf 388}, 45 (1996)
  doi:10.1016/0370-2693(96)01134-3
  [hep-th/9606008].
  
  \bibitem{Sagnotti:1992qw} 
  A.~Sagnotti,
  A Note on the Green-Schwarz mechanism in open string theories,
  Phys.\ Lett.\ B {\bf 294}, 196 (1992)
  doi:10.1016/0370-2693(92)90682-T
  [hep-th/9210127].
  
  \bibitem{Schwarz:1995zw} 
  J.~H.~Schwarz,
  Anomaly - free supersymmetric models in six-dimensions,
  Phys.\ Lett.\ B {\bf 371}, 223 (1996)
  doi:10.1016/0370-2693(95)01610-4
  [hep-th/9512053].
  
  \bibitem{Tate}J.~Tate, 
Algorithm for determining the type of a singular fiber in an elliptic pencil. {\it Modular functions of one variable}, IV ({\it Proc. Internat. Summer School}, Univ. Antwerp, Antwerp, 1972), pp. 33--52. Lecture Notes in Mathematics 476, Springer-Verlag, Berlin, 1975.

\bibitem{Wazir}
R.~Wazir.
\newblock Arithmetic on elliptic threefolds.
\newblock {\em Compositio Mathematica}, 140(03):567--580, 2004.
\bibitem{Witten}
E.~Witten, Phase transitions in M theory and F theory, Nucl. Phys. B {\bf 471}, 195 (1996).

  
\end{thebibliography}
\end{document}